\documentclass{article}
\usepackage[utf8]{inputenc}
\usepackage[left=25mm, right=25mm, top=20mm, bottom=20mm]{geometry}
\usepackage{amsfonts}
\usepackage{amsmath}
\usepackage{amssymb}
\usepackage{amsthm}
\usepackage{authblk}
\usepackage{bm,bbm}
\usepackage{booktabs}
\usepackage{cancel}
\usepackage{caption}
\usepackage{color}
\usepackage[radius=.8mm,edge= 5mm,mark=o]{dynkin-diagrams}
\usepackage[mathscr]{euscript}
\usepackage{hyperref}
\usepackage{physics}
\usepackage{stmaryrd}
\hypersetup{colorlinks,breaklinks,
            linkcolor = purple,
            citecolor = teal,
            urlcolor = purple,
            }
\definecolor{darkblue}{rgb}{0.1,0.1,0.7}
\definecolor{darkred}{rgb}{0.5,0.1,0.1}
\definecolor{darkgreen}{rgb}{0.0,0.42,0.06}

\newcommand{\ee}{\mathrm{e}}
\newcommand{\ii}{\mathrm{i}}

\def\t{\mathbf{t}}
\def\M{\overline{\mathcal{M}}}
\def\C{\mathbb{C}}

\theoremstyle{definition}
\newtheorem{theorem}{Theorem}[section]

\newtheorem{lemma}[theorem]{Lemma}

\theoremstyle{remark}

\numberwithin{equation}{section}

\title{Revisiting three-dimensional GLSMs and K-theoretic I-functions}
\author[1,2]{Taro Kimura}
\author[1]{Yinda Li}
\author[3]{Xiaohan Yan}
\affil[1]{Université Bourgogne Europe, CNRS, IMB UMR 5584, 21000 Dijon, France}
\affil[2]{Institut Universitaire de France (IUF), France}
\affil[3]{Institut de Mathématiques de Toulouse, UMR 5219, Université de Toulouse, CNRS, 31400 Toulouse, France}
\date{}

\begin{document}

\maketitle

\begin{abstract}
We revisit the correspondence between three-dimensional gauged linear sigma models and K-theoretic Gromov--Witten theory. 
We show that the K-theoretic Hori--Vafa formula for the grassmannian arises from the three-dimensional gauge theory upon including an appropriate anomaly-cancellation factor.   
Furthermore, we derive K-theoretic I-functions for isotropic grassmannians by incorporating symmetric and anti-symmetric tensor representation fields together with suitable boundary conditions.
\end{abstract}

\setcounter{tocdepth}{2}
\tableofcontents

\section{Introduction and Summary}

Exact results obtained by supersymmetric localization have revealed profound connections between supersymmetric gauge theories and enumerative geometry. A particularly striking example is provided by two-dimensional $\mathcal{N}=(2,2)$ gauged linear sigma models (GLSMs)~\cite{Witten:1993yc}, whose partition functions on the two-sphere~\cite{Benini:2012ui,Doroud:2012xw} encode nontrivial geometric information about the target Kähler manifold~\cite{Jockers:2012dk}. The vortex partition functions obtained from the factorization of the two-sphere partition function, which are also derived directly from the hemisphere partition functions~\cite{Honda:2013uca,Hori:2013ika} and the two-dimensional $\Omega$-background~\cite{Fujimori:2015zaa}, have been identified with Givental's I-functions~\cite{Bonelli:2013mma}, while the corresponding J-functions encode genus-zero Gromov--Witten (GW) theory~\cite{Givental:1996heu} and are related to the I-functions through the mirror map~\cite{Givental:1998}. These developments have established a direct bridge between supersymmetric gauge theories, mirror symmetry, and quantum cohomology. This correspondence suggests that supersymmetric partition functions provide a natural physical realization of Givental's formalism.

A natural extension of this picture is obtained by lifting the two-dimensional theory to the three-dimensional theory compactified on a circle. 
The hemisphere partition function admits a three-dimensional counterpart given by the partition function on the solid torus $\mathbb{D}^2 \times \mathbb{S}^1$~\cite{Yoshida:2014ssa}. 
Supersymmetric localization expresses this partition function in terms of K-theoretic vortex partition functions, providing a physical realization of quantum K-theory~\cite{Givental:2001clq}. 
Correspondingly, the mathematical construction admits K-theoretic analogues, leading to K-theoretic I-functions as $q$-deformations of their cohomological counterparts. 
This correspondence establishes a direct connection between three-dimensional gauge theories and quantum K-theory~\cite{Yoshida:2014ssa,Jockers:2018sfl}.
Existing constructions are available for grassmannians~\cite{Ueda:2019qhg,Jockers:2019lwe,Givental:2021dbg} and flag varieties~\cite{Yan2024} through non-abelian localization.
See also \cite{Pushkar:2016qvw,Koroteev:2017nab} for related developments on the so-called vertex functions of grassmannians and flag varieties.

\subsubsection*{Anomaly, factorization, and Hori--Vafa formula}

A fundamental result of Hori and Vafa~\cite{Hori:2000kt} predicts that the genus-zero GW theory of the grassmannian $\operatorname{Gr}(k,n)$ can be reconstructed from that of $(\mathbb{P}^{n-1})^k$.
Concretely, the Hori--Vafa (HV) formula expresses the grassmannian I-function as an anti-symmetrization of the I-functions of projective spaces.
Via the mirror theorem, this yields a corresponding description of the grassmannian J-function, and provides an explicit computational framework for its quantum cohomology.

In the K-theoretic setting, on the other hand, a naive generalization of the HV formula is not available.
From a gauge theory perspective, the obstruction is understood as a consequence of anomaly induced by the Chern--Simons (CS) term, which is specific to three-dimensional gauge theories.
Even in the absence of a bare CS term, quantum effects generate an effective CS term, and the resulting GLSM partition function does not factorize into those of projective spaces.

In this paper, we incorporate an additional boundary term to compensate for the $\mathrm{U}(1)$ anomaly. 
This leads to a factorization of the grassmannian partition function, which provides a proof of the HV formula in the K-theoretic setting. 
As an immediate consequence, we obtain a determinantal formula for the grassmannian partition function $\operatorname{Gr}(k,n)$, which is a rank-$k$ $q$-Casoratian minor of solutions to the rank-$n$ $q$-difference equation, i.e., the $q$-hypergeometric functions.
Moreover, such a factorization property is compatible with the Wilson loop insertion along $\mathbb{S}^1$ direction. 
In the Coulomb branch localization formula, the Wilson loop contribution is given by the character polynomial, i.e., Schur polynomial.
We show that the Schur polynomial of the $q$-difference operators generates the Wilson loop, and the resulting integral is again given by a determinant of the $q$-hypergeometric functions. 
We compute the two-point function of the Wilson loops and address the underlying algebraic structure in relation to the quantum K-theory.

\subsubsection*{Isotropic grassmannians}

Whereas the I/J-functions for the grassmannians are well established, analogous K-theoretic formulas for isotropic grassmannians have not yet appeared in the literature. 
Another purpose of this paper is to fill this gap by deriving the K-theoretic I-functions for symplectic and orthogonal grassmannians directly from the localization of the corresponding three-dimensional gauge theories.

The ordinary grassmannian $\operatorname{Gr}(k,n)$ admits a well-known GLSM realization as a $\mathrm{U}(k)$ gauge theory with $n$ fundamental chiral multiplets.
The symplectic and orthogonal grassmannians, denoted by $\operatorname{SG}(k,2n)$ and $\operatorname{OG}(k,2n+\chi)$ with $\chi \in \{0,1\}$, respectively, are isotropic subvarieties of the ordinary grassmannian, where their GLSM realizations were explored by Gu, Sharpe, and Zou~\cite{Gu:2020oeb}, following a proposal by Okonek and Teleman~\cite{Okonek:2019lwg}. 
In particular, the symplectic grassmannian $\operatorname{SG}(k,2n)$ is realized as a $\mathrm{U}(k)$ gauge theory with $2n$ fundamental chiral multiplets and one chiral multiplet in the anti-symmetric representation, while the orthogonal grassmannian $\operatorname{OG}(k,2n+\chi)$ is realized as $\mathrm{U}(k)$ gauge theories with $2n+\chi$ fundamental chiral multiplets and one chiral multiplet in the symmetric representation. 
There is a corresponding superpotential that enforces the isotropic condition, and the resulting Higgs branch of the GLSM reproduces the symplectic and orthogonal grassmannians. 
In this paper, we compute the partition functions of these GLSMs on $\mathbb{D}^2\times \mathbb{S}^1$ using supersymmetric localization, and show that they reproduce the K-theoretic I-functions of $\operatorname{SG}(k,2n)$ and $\operatorname{OG}(k,2n+\chi)$ obtained from the quantum Lefschetz theorem, which generalizes the quantum cohomology computation by Bertram, Ciocan-Fontanine, and Kim~\cite{Bertram:2004}.

One more ingredient of the localization computation on $\mathbb{D}^2 \times \mathbb{S}^1$ is the choice of boundary conditions for the chiral multiplets. 
For the ordinary grassmannian, we assign the Neumann boundary condition to all fundamental chiral multiplets. 
Since the isotropic grassmannians are realized as subvarieties of the ordinary grassmannian, we still assign the Neumann boundary condition to the fundamental chiral multiplets. 
On the other hand, we propose to apply the Dirichlet boundary condition to the anti-symmetric and symmetric chiral multiplets. 
These tensor chiral multiplet contributions under the Dirichlet boundary condition indeed reproduce an additional (K-theoretic) Euler class appearing in the quantum Lefschetz theorem. 

\paragraph{Organization of the paper}

This paper is organized as follows.
In \autoref{sec:Gr} we explore the three-dimensional GLSM describing the grassmannians.
Firstly we focus on the projective space GLSM, which is a building block of the higher-rank theory in \autoref{sec:P}.
We study the hemisphere partition function based on the Coulomb branch localization formula and discuss the underlying quantum $\mathscr{D}_q$-module structure.
We then consider the GLSM for $X = \operatorname{Gr}(k,n)$ in \autoref{sec:Gr_sub}.
We introduce an additional boundary term that compensates the U(1) anomaly of the bulk 3d theory.
The resulting partition function admits a determinantal formula, which proves the HV formula in the three-dimensional/K-theoretic setup. 
We see this factorization property at the level of the integral kernel, and also study the Wilson loop insertion.
In \autoref{sec:TGr}, we explore the cotangent bundle theory, which is the $\mathcal{N}=4$ uplift of the grassmannian GLSM.
We show that the $\mathcal{N}=4$ theory realizes the I-functions of the odd tangent bundle and the cotangent bundle depending on the boundary condition.
We then apply the semiclassical analysis in \autoref{sec:Gr_semiclassical}.
We discuss the effective twisted superpotential from the Coulomb branch formula and the associated twisted F-term equations in relation to the quantum K-ring relations.

\autoref{sec:SG} and \autoref{sec:OG} are devoted to the study of the isotropic grassmannians, $\operatorname{SG}(k,2n)$ and $\operatorname{OG}(k,2n+\chi)$.
We consider the Coulomb branch formulas for the three-dimensional GLSM involving the chiral multiplet in the (anti-)symmetric representation, and show that they reproduce the I-function for the symplectic and orthogonal grassmannians. 
We in particular study $\operatorname{OG}(1,2n+\chi) = Q^{2n+\chi-2}$ in \autoref{sec:Q} and discuss the isomorphism $Q^2 \cong \mathbb{P}^1 \times \mathbb{P}^1$ at the level of the I-functions.
We then study in \autoref{sec:TSG} and \autoref{sec:TOG} the cotangent bundle theory.
We show that the I-functions for the odd tangent bundle and the cotangent bundle are obtained from the $\mathcal{N}=4$ theory, depending on the boundary condition as in the previous case.
We apply the semiclassical analysis in \autoref{sec:SG_semiclassical} and \autoref{sec:OG_semiclassical} for SG and OG.
We obtain the twisted F-term equations from the effective twisted superpotential derived from the Coulomb branch formula, which are identified with the equivariant K-ring relations for SG and OG.

\autoref{sec:I-func} is devoted to the summary of mathematical background of the K-theoretic I-functions and \autoref{sec:formulas} is for the summary of the formulas.

\paragraph{Acknowledgment}

The work of TK was supported by EIPHI Graduate School (No.~ANR-17-EURE-0002) and the Bourgogne-Franche-Comté region. The work of XY was supported by the ERC starting grant 101164820 {\em Spin Curves enumeration}.

\section{Grassmannian GLSM}\label{sec:Gr}

In this section, we study the GLSM realization of the grassmannian $\operatorname{Gr}(k,n)$ and its relation to the I-function~\cite{Jockers:2018sfl,Ueda:2019qhg,Jockers:2019lwe,Jockers:2021omw}. The GLSM for $X = \operatorname{Gr}(k,n)$ is given by the $\mathrm{U}(k)$ gauge theory with $n$ fundamental chiral multiplets. 
The equivariant I-function agrees with the Higgs branch formula (i.e., hypergeometric formula) of the partition function of 3d $\mathcal{N} = 2$ gauge theory on $\mathbb{D}^2 \times \mathbb{S}^1$ under the Neumann boundary condition for the fundamental chiral multiplets.
The supersymmetric localization of the partition function on $\mathbb{D}^2 \times \mathbb{S}^1$ yields the contour integral formula, which is known as the Coulomb branch localization formula~\cite{Yoshida:2014ssa},
\begin{align}
    Z_X = \frac{1}{k!} \int \omega_X(x;z_1,\ldots,z_k) \, , \label{eq:Z_Gr}
\end{align}
where the integral kernel is given by
\begin{align}
    \omega_X(x;z_1,\ldots,z_k) = \prod_{1 \le i \neq j \le k} \left(\frac{z_i}{z_j};q\right)_\infty \prod_{i=1}^k \frac{x^{-\log_q z_i} \theta(z_i)^\kappa}{\prod_{\beta=1}^n (z_i/a_\beta;q)_\infty} \frac{\dd{z_i}}{2\pi\ii z_i} \, ,
\end{align}
with\\
\begin{minipage}[t]{.45\textwidth}
\begin{itemize}
    \item $z = (z_i)_{i=1,\ldots,k} \in (\mathbb{C}^\times)^k$: gauge fugacities
    \item $a = (a_\alpha)_{\alpha=1,\ldots,n} \in (\mathbb{C}^\times)^n$: flavor fugacities
    \item $q \in \mathbb{C}^\times$, $|q| < 1$: rotation fugacity\\
\end{itemize}    
\end{minipage}
\begin{minipage}[t]{.6\textwidth}
\begin{itemize}
    \item $x \in \mathbb{C}^\times$: (exponential) Fayet--Iliopoulos (FI) parameter
    \item $\kappa \in \mathbb{Z}$: CS level
\end{itemize}    
\end{minipage}
The factor $k!$ is the order of the symmetric group $\mathfrak{S}_k$, which is the Weyl group of the unitary group $\mathrm{U}(k)$.
We write $(z;q)_\infty = \prod_{m=0}^\infty (1-z q^m)$ for $|q| < 1$, $q = \ee^{-\epsilon}$ with $\operatorname{Re} \epsilon > 0$ and $\theta(z) = \theta(z;q)$. 
See \autoref{sec:formulas} for the summary of the formulas.
Hence, the FI term is given by
\begin{align}
    x^{-\log_q z} = \exp \left( - \frac{\log x \log z}{\log q} \right) = \exp \left( \frac{1}{\epsilon} \log x \log z \right) \, .
\end{align}
We set $|x| < 1$, i.e., $\log |x| < 0$ in the following discussion.

The product of theta functions is the Chern--Simons term, which is a quantum completion of the Gaussian term $q^{-\frac{\kappa}{2}\log_q (-z_i)(\log_q (-z_i) - 1)}$ yielding the same shift relation~\cite{Beem:2012mb}.
It is also interpreted as the elliptic genus of the Fermi (chiral) multiplet for $\kappa > 0$ ($\kappa<0$, resp.) in the fundamental representation of the boundary $\mathcal{N}=(0,2)$ theory. 
We will discuss the determinant representation later in \autoref{sec:anomaly}.

From the localization computation, gauge fugacities take a value in the Cartan torus of the (complexified) gauge group.
In the following, we interpret the Coulomb branch formula as a Mellin--Barnes-type integral associated with the following contours.
Let $\mathcal{C}_\alpha$ be the contour encircling the poles at $a_\alpha q^{-d}$ for $d \in \mathbb{Z}_{\ge 0}$, and let $\mathsf{I} : (1,\ldots,k) \hookrightarrow (1,\ldots,n)$ be an injective map such that $\mathsf{I}(i) \neq \mathsf{I}(j)$ for $i \neq j$.
We write $\mathcal{C}_\mathsf{I} = \prod_{i=1}^k \mathcal{C}_{\mathsf{I}(i)}$.
Each choice of $\mathsf{I}$ corresponds to a Higgs vacuum, and the contour integral associated with $\mathcal{C}_\mathsf{I}$ gives the contribution of the corresponding Higgs vacuum, which is identified with the localized equivariant I-function of $\operatorname{Gr}(k,n)$.
Since the integrand is permutation invariant, for $\sigma \in \mathfrak{S}_k$, the contours $\mathcal{C}_\mathsf{I}$ and $\mathcal{C}_{\mathsf{I} \circ \sigma}$ give the same result.
Hence, in the following, we write
\begin{align}
    Z_{X,\mathsf{I}} = \sum_{\sigma \in \mathfrak{S}_k} \frac{1}{k!} \int_{\mathcal{C}_{\mathsf{I}\circ\sigma}} \omega_X = \int_{\mathcal{C}_\mathsf{I}} \omega_X \, .
\end{align}
The number of distinct Higgs vacua is $\binom{n}{k}$, which agrees with the Euler characteristics of $\operatorname{Gr}(k,n)$.

\subsection{$\operatorname{Gr}(1,n) = \mathbb{P}^{n-1}$}\label{sec:P}

We start with the abelian case $X = \operatorname{Gr}(1,n) = \mathbb{P}^{n-1}$, which will be a building block for the general situation $k > 1$.
In this case, the integral formula of the partition function is given by
\begin{align}
    Z_X = \int \omega_X(x;z) \, , \qquad \omega_X(x;z) = \frac{x^{-\log_q z} \theta(z)^\kappa}{\prod_{\beta = 1}^n (z/a_\beta;q)_\infty} \frac{\dd{z}}{2\pi\ii z} \, .\label{eq:Z_Pn}
\end{align}
We have $n$ choices of the integration contours, corresponding to each Higgs vacuum,
\begin{align}
    Z_{X,\alpha} = \int_{\mathcal{C}_\alpha} \omega_X(x;z) = Z^\text{per}_{X,\alpha} Z^\text{vor}_{X,\alpha} \, ,
\end{align}
where $Z^\text{vor}_{X,\alpha}$ is identified with the vortex partition function, while $Z^\text{per}_{X,\alpha}$ is the perturbative factor, 
\begin{subequations}
    \begin{align}
    Z^\text{per}_{X,\alpha} & = \frac{x^{-\log_q a_\alpha} \theta(a_\alpha)^\kappa}{(q;q)_\infty \prod_{\beta (\neq \alpha)}^n (a_\alpha/a_\beta;q)_\infty} \\
    Z^\text{vor}_{X,\alpha} & = \sum_{d=0}^\infty \frac{x^d q^{-\kappa\binom{d+1}{2}} (-a_\alpha)^{\kappa d}}{\prod_{\beta=1}^n(q^{-1}a_\alpha/a_\beta;q^{-1})_d} = \sum_{d=0}^\infty \left( (q/a_\alpha)^{n-\kappa} A x \right)^d \frac{\left((-1)^d q^{\binom{d}{2}}\right)^{n-\kappa}}{\prod_{\beta = 1}^n (qa_\beta/a_\alpha;q)_d} \, , \quad A = \prod_{\beta=1}^n a_\beta \, . \label{eq:q-hypergeometric_Pn}
\end{align}
\end{subequations}
We emphasize that this infinite series formula is available only when $|x| < 1$, and there is no corresponding Higgs vacuum when the FI parameter is positive, $\log |x| > 0$.
We may rewrite the vortex partition function as the $q$-hypergeometric series~\eqref{eq:q-hypergeom} of base $q^{-1}$ and $q$, from which we see the CS level window, $0 \le \kappa \le n$.
See also \autoref{sec:Gr_semiclassical}.

The partition function obeys the $q$-difference equation.
Let $D = x \partial_x$, and we define the $q$-difference operator,
\begin{align}
    \widehat{\mathscr{A}} = q^{\kappa D} \prod_{\alpha = 1}^n \left( 1 - q^{-D}/a_\alpha \right) - (-1)^\kappa x \, .
\end{align}
The kernel $\omega_X$ satisfies an identity of the form, $\widehat{\mathscr{A}} \omega_X(x;z) = \omega_X(x;qz)-\omega_X(x;z)$.
Upon integration over a cycle for which the $q$-boundary term vanishes, we obtain
\begin{align}
    \widehat{\mathscr{A}} Z_{X,\alpha} = 0 \, , \qquad \alpha = 1, \ldots, n \, .
\end{align}

The $q$-difference equation naturally defines a quantum $\mathscr{D}_q$-module. 
Let $\mathscr{D}_q = \mathbb{C}(x) \langle q^{\pm D} \rangle$, $q^D x = q x q^D$, be the algebra of $q$-difference operators acting on the FI parameter $x \in \mathbb{C}^\times$.
Then, the quantum $\mathscr{D}_q$-module associated with $X = \mathbb{P}^{n-1}$ is defined by
\begin{align}
    \mathscr{M}_X = \mathscr{D}_q / \widehat{\mathscr{A}} \mathscr{D}_q \, .
\end{align}
The partition function itself is not an element of the quotient module. 
It is rather a solution of the corresponding difference equation, namely a section of the solution sheaf,
\begin{align}
\operatorname{Sol}(\mathscr{M}_X) = \operatorname{Hom}_{\mathscr{D}_q}\bigl(\mathscr{M}_X, \mathcal{O}_{\mathbb{C}^\times} \bigr),
\end{align}
where $\mathcal{O}_{\mathbb{C}^\times}$ denotes the sheaf of holomorphic functions on the FI parameter. 
Equivalently, $\operatorname{Sol}(\mathscr{M}_X)$ is the local system of holomorphic solutions of the $q$-difference equation. 
A partition function is therefore a local flat section of the quantum $\mathscr{D}_q$-module.

The Coulomb branch integral formula is indeed regarded as a universal integral representation of this solution sheaf. 
As seen above, the quantum difference equation is already encoded in the integration kernel.
The choice of contour becomes relevant only when selecting a particular solution.
For generic flavor parameters $\{a_\alpha\}_{\alpha=1,\ldots,n}$ with $\kappa = 0$, the $n$ functions $\{Z_{X,\alpha}\}_{\alpha=1,\ldots,n}$ form a basis of local flat sections of the rank-$n$ quantum $\mathscr{D}_q$-module. 

This picture is naturally interpreted as a $q$-difference analogue of the Gauss--Manin construction. The Coulomb branch kernel plays the role of a twisting kernel of $q$-de Rham cohomology, while the contour specifies a $q$-analogue of a Betti cycle. 
By twisted $q$-de Rham cohomology, we mean the $q$-difference analogue of twisted de Rham cohomology associated with the Coulomb branch integral kernel: 
Let $\Omega_X(z) = \omega_X(x,q^{-1}z)/\omega_X(x,z)$ be a $q$-connection. 
In this case, we have
\begin{align}
    \Omega_X(z) = \frac{x (-z/q)^{\kappa}}{\prod_{\beta=1}^n(1 - q^{-1}z/a_\beta)} \, .
\end{align}
We then define the twisted $q$-difference operator, $\nabla_X f(z) = \Omega_X(z) f(q^{-1}z) - f(z)$.
By construction, $\omega_X(x,z) \nabla_X f(z) = \omega_X(x,q^{-1}z) f(q^{-1}z) - \omega_X(x,z)f(z)$, so that $\omega_X\nabla_X f$ is an ordinary $q$-difference of the full integrand.
Therefore, the twisted $q$-de Rham cohomology is obtained by quotienting the space of insertions by $\nabla_X$-exact terms.
The partition function is thus the $q$-period without any insertion, $Z_{X,\alpha} = \int_{\mathcal{C}_\alpha} \omega_X(x;z)$. 
We will discuss the Wilson loop insertion in \autoref{sec:Wilson_loop}.

\subsection{$\operatorname{Gr}(k,n)$}\label{sec:Gr_sub}

We compute the contour integral for the grassmannian $X = \operatorname{Gr}(k,n)$.
For an injective map $\mathsf{I}$, we write $a_i = a_{\mathsf{I}(i)}$, and $a_\mathsf{I} = (a_1,\ldots,a_k) = (a_{\mathsf{I}(1)},\ldots,a_{\mathsf{I}(k)})$.
We also write $|d| = \sum_{i=1}^k d_i$.
Then, we have the Higgs branch formula of the grassmannian GLSM partition function,
\begin{align}
    Z_{X,\mathsf{I}} = \int_{\mathcal{C}_\mathsf{I}} \omega_X = Z^\text{per}_{X,\mathsf{I}} Z^\text{vor}_{X,\mathsf{I}} \, ,
\end{align}
where
\begin{subequations}
\begin{align}
    Z^\text{per}_{X,\mathsf{I}} & = \prod_{1 \le i \neq j \le k} (a_i/a_j;q)_\infty \prod_{i=1}^k \frac{x^{-\log_q a_i} \theta(a_i)^\kappa}{(q;q)_\infty \prod_{\beta (\neq \mathsf{I}(i))}^n (a_i/a_\beta;q)_\infty} \\
    Z^\text{vor}_{X,\mathsf{I}} & = \sum_{d_1,\ldots,d_k \ge 0} x^{|d|} \prod_{1 \le i \neq j \le k} \frac{\left( q^{-d_i+d_j} a_i/a_j;q\right)_{\infty}}{\left(a_i/a_j;q\right)_{\infty}} \prod_{i=1}^k \frac{q^{-\kappa\binom{d_i+1}{2}} (-a_i)^{\kappa d_i}}{\prod_{\beta=1}^n (q^{-1} a_i / a_\beta;q^{-1})_{d_i}} \, . \label{eq:Gr_I-func0}
\end{align}
\end{subequations}
By Lemma~\ref{lem:q-shifted_ratio}, we may write
\begin{align}
     \frac{(q^{-d_i+d_j}a_i/a_j;q)_\infty}{(a_i/a_j;q)_\infty} = \frac{\prod_{\ell=-\infty}^{d_i-d_j}(1-q^{-\ell} a_i/a_j)}{\prod_{\ell=-\infty}^{0}(1-q^{-\ell} a_i/a_j)} \, ,
\end{align}
which is a finite product.
Then, we have an alternative form of the vortex partition function
\begin{align}
    Z^\text{vor}_{X,\mathsf{I}}  = \sum_{d_1,\ldots,d_k \ge 0} x^{|d|} \prod_{1 \le i \neq j \le k} \frac{\prod_{\ell = -\infty}^{d_i - d_j} (1 - q^{-\ell} a_i/a_j)}{\prod_{\ell = -\infty}^{0} (1 - q^{-\ell} a_i/a_j)} \prod_{i=1}^k \frac{q^{-\kappa \binom{d_i+1}{2}} (-a_i)^{\kappa d_i}}{\prod_{\beta = 1}^n \prod_{\ell=1}^{d_i} (1 - q^{-\ell} a_i/a_\beta)} \, . \label{eq:Gr_vortex-func} 
\end{align}
We see that the vortex partition function $(1-q)Z^\text{vor}_{X,\mathsf{I}}$ agrees with the equivariant I-function of $\operatorname{Gr}(k,n)$ with the level structure~\eqref{eq:I-func_Gr_V*}.
More precisely, the Chern roots of the tautological bundle $V$ denoted by $(P_1,\ldots,P_k)$ are replaced by the equivariant parameters $(a_{\mathsf{I}(1)},\ldots,a_{\mathsf{I}(k)})$ since we applied the localization. 
Moreover, we need the replacement $q \mapsto q^{-1}$ (and also $x \mapsto (-1)^\kappa x$).
Such a replacement is observed in the context of the quasimap count in the relation to Pandharipande--Thomas invariants~\cite{Pandharipande:2007sq,Nekrasov:2014nea} and also the mirror map~\cite{Ruan2022}.
The current convention of the vortex partition function reproduces the I-function with the level structure with respect to the dual tautological bundle $V^\vee$.
By replacing $\theta(z)$ by $\theta(z^{-1})$ in the integrand, we instead obtain the level structure with respect to $V$.

\subsubsection{Anomaly cancellation and factorization}\label{sec:anomaly}

We write $A_\mathsf{I} = \prod_{i=1}^k a_i$.
By Lemma~\ref{lem:q-shifted_ratio2}, we have
\begin{align}
    \prod_{1 \le i \neq j \le k} \frac{\left( q^{-d_i+d_j} a_i/a_j;q\right)_{\infty}}{\left(a_i/a_j;q\right)_{\infty}} 
    & = \prod_{1 \le i < j \le k} \left( - \frac{a_i}{a_j} \right)^{d_i - d_j} q^{-\binom{d_i-d_j}{2}} \frac{1 - (a_i/a_j) q^{-d_i+d_j}}{1 - a_i/a_j} \nonumber \\
    & = A_\mathsf{I}^{-|d|} \prod_{i=1}^k a_i^{kd_i} q^{-(k-1)\binom{d_i}{2}} \prod_{1 \le i < j \le k} q^{d_i d_j} \frac{a_i q^{-d_i} - a_j q^{-d_j}}{a_i - a_j} \nonumber \\
    & = A_\mathsf{I}^{-|d|} \prod_{i=1}^k a_i^{kd_i} q^{-(k-1)\binom{d_i}{2}} \left( \prod_{1 \le i < j \le k} q^{d_i d_j} \right) \frac{\det_{1 \le i, j \le k} (a_i q^{-d_i})^{k-j}}{\det_{1 \le i, j \le k} a_i^{k-j}} \, .
\end{align}
Here, the cross term $\prod_{1 \le i < j \le k} q^{d_i d_j}$ is an obstruction for the factorization of the partition function as pointed out in \cite{Jockers:2019lwe}.

In order to see a complete factorization of the partition function, we insert an additional theta term, interpreted as the CS term associated with the determinant representation coupled to the U(1) part of the gauge group,
\begin{align}
    Z_{X}^\theta = \frac{1}{k!} \int \omega_X^\theta(x;z_1,\ldots,z_k) \, , \label{eq:Z_Gr_theta}
\end{align}
where the integral kernel is given by
\begin{align}
    \omega_X^\theta(x;z_1,\ldots,z_k) & = \theta(\det z) \omega_X(x;z_1,\ldots,z_k) \nonumber \\
    & = \theta(\det z) \prod_{1 \le i \neq j \le k} \left(\frac{z_i}{z_j};q\right)_\infty \prod_{i=1}^k \frac{x^{-\log_q z_i} \theta(z_i)^\kappa}{\prod_{\beta=1}^n (z_i/a_\beta;q)_\infty} \frac{\dd{z_i}}{2\pi\ii z_i} \, , \label{eq:omega_Gr_theta}
\end{align}
where we write $\det z = z_1 \cdots z_k$.
Then, evaluating the residues, we obtain
\begin{align}
    Z_{X,\mathsf{I}}^\theta & = \prod_{i=1}^k \frac{x^{-\log_q a_i} \theta(a_i)^\kappa}{(q;q)_\infty \prod_{\beta (\neq \mathsf{I}(i))}^n (a_i/a_\beta;q)_\infty} \nonumber \\
    & \qquad \times \sum_{d_1,\ldots,d_k \ge 0} x^{|d|} \theta(q^{-|d|}A_\mathsf{I}) \prod_{1 \le i \neq j \le k} \left( q^{-d_i+d_j} a_i/a_j;q\right)_{\infty} \prod_{i=1}^k \frac{q^{-\kappa\binom{d_i+1}{2}} (-a_i)^{\kappa d_i}}{\prod_{\beta=1}^n (q^{-1} a_i / a_\beta;q^{-1})_{d_i}} \, .
\end{align}
In order to rewrite the partition function in a concise form, we define the elliptic Vandermonde factor of $k$ variables (type $\widehat{A}_{k-1}$),
\begin{align}
    \mathsf{W}_k(z_1,\ldots,z_k) := \prod_{1 \le i < j \le k} z_j \theta(z_i/z_j) \, ,
\end{align}
and 
\begin{align}
    \Theta(z) = \Theta(z_1,\ldots,z_k) := \theta(\det z) \mathsf{W}_k(z_1,\ldots,z_k) = \frac{(q^k;q^k)_\infty^k}{(q;q)_\infty^k} \det_{1 \le i, j \le k} \left( z_i^{j-1} \theta((-1)^{k-1}q^{j-1}z_i^k;q^k) \right) \, .
\end{align}
This determinantal formula was shown by Rosengren and Schlosser~\cite[Proposition 6.1]{Rosengren2006}.
By the shift relation~\eqref{eq:theta_shift}, we compute
\begin{align}
    \Theta(\{q^{-d_i} z_i\}_{i=1}^k) 
    = \Theta(z) \prod_{i=1}^k (-z_i)^{k d_i} q^{-k \binom{d_i+1}{2}} \, . \label{eq:q-const}
\end{align}
Moreover, by definition of the theta function~\eqref{eq:theta_def}, we have
\begin{align}
    \prod_{1 \le i \neq j \le k} (z_i/z_j;q)_\infty 
    = \mathsf{W}_k(z_1,\ldots,z_k) \prod_{1 \le i < j \le k} (z_j^{-1} - z_i^{-1}) \, . \label{eq:root_prod}
\end{align}
Then, we obtain a determinantal form (i.e., a factorized form) of the partition function,
\begin{align}
    Z_{X,\mathsf{I}}^\theta & = \Theta(a_\mathsf{I}) \prod_{i=1}^k \frac{x^{-\log_q a_i} \theta(a_i)^\kappa }{(q;q)_\infty \prod_{\beta (\neq I(i))}^n (a_i/a_\beta;q)_\infty} \nonumber \\
    & \qquad \times \sum_{d_1,\ldots,d_k \ge 0} x^{|d|} \prod_{i=1}^k \frac{q^{-\kappa_\text{eff}\binom{d_i+1}{2}} (-a_i)^{\kappa_\text{eff}d_i}}{\prod_{\beta=1}^n (q^{-1} a_i/a_\beta;q^{-1})_{d_i}} \prod_{1 \le i < j \le k} (a_j^{-1} q^{d_j} - a_i^{-1} q^{d_i}) \nonumber \\
    & = \Theta(a_\mathsf{I}) \det_{1 \le i, j \le k} \phi_{\mathsf{I}(i)}(xq^{j-1}) \, , \label{eq:Gr_det}
\end{align}
where we define the effective CS level $\kappa_\text{eff} = \kappa + k$ and
\begin{align}
    \phi_\alpha(x) = \frac{x^{-\log_q a_\alpha} \theta(a_\alpha)^\kappa}{(q;q)_\infty \prod_{\beta (\neq \alpha)}^n (a_\alpha/a_\beta;q)_\infty} \sum_{d \ge 0} x^d \frac{q^{-\kappa_\text{eff}\binom{d+1}{2}} (-a_\alpha)^{\kappa_\text{eff}d}}{\prod_{\beta=1}^n(q^{-1}a_\alpha/a_\beta;q^{-1})_d} \, , \qquad \alpha = 1, \ldots, n \, .
\end{align}
See \cite[Section 4]{Kimura:2026} for a related discussion.
This $q$-Casoratian minor formula indeed proves the K-theoretic version of the HV formula~\cite{Hori:2000kt},
\begin{align}
    \frac{Z_{X,\mathsf{I}}^\theta}{\Theta(a_\mathsf{I})} = \left. \Delta_q \prod_{i=1}^k \phi_{\mathsf{I}(i)} (x_i) \right|_{x_i = x} \, , \qquad \Delta_q = \prod_{1 \le i < j \le k} (q^{D_j} - q^{D_i}) \, , \qquad D_i = x_i \frac{\partial}{\partial x_i} \, . \label{eq:HV_formula}
\end{align}
Compared with the previous case \eqref{eq:q-hypergeometric_Pn}, we have the CS level shift $\kappa \mapsto \kappa_\text{eff}$ for the vortex partition function. 
Therefore, $\{\phi_\alpha\}_{\alpha=1,\ldots,n}$ are independent solutions to the following $q$-difference equation,
\begin{align}
    \left[ q^{\kappa_\text{eff} D} \prod_{\beta=1}^n (1 - q^{-D}/a_\beta) - (-1)^{\kappa_\text{eff}} x \right] y(x) = 0 \, . \label{eq:q-diff_Gr}
\end{align}
More concretely, we have 
\begin{align}
    \sum_{r = 0}^n (-1)^r e_r(a^{-1}) y(x q^{-r+\kappa_\text{eff}}) - (-1)^{\kappa_\text{eff}} x y(x) = 0 \, , \label{eq:q-diff_Gr_exp}
\end{align}
where $e_r(a^{-1}) = e_r(a_1^{-1},\ldots,a_n^{-1})$ is the $r$-th elementary symmetric polynomial.
In particular, for $\kappa_\text{eff} = 0$, we may rewrite 
\begin{align}
    y(x) = \sum_{m \ge 0} x^m \left[\sum_{r=0}^{n-1} (-1)^{n-r-1} e_{n-r}(a^{-1}) y(xq^{r-n}) \right] \, . \label{eq:q-diff_Gr_exp2}
\end{align}

We consider a $q$-difference equation for the $q$-Casoratian minor.
Set $\kappa_\text{eff} = 0$, and define a $q$-difference operator of order $n$,
\begin{align}
    \widehat{\mathscr{A}}_i = \prod_{\beta=1}^n (1 - q^{-D_i}/a_\beta) - x_i \, ,
\end{align}
such that $\widehat{\mathscr{A}}_i \phi_\alpha(x_i) = 0$ for any $\alpha = 1, \ldots,n$.
A $q$-difference operator $\widehat{\mathscr{A}}_\mathsf{I}$ annihilating $\widetilde{Z}_{\theta,\mathsf{I}} = \Theta(a_\mathsf{I}) \Delta_q \prod_{i=1}^{k} \phi_{\mathsf{I}(i)} (x_i)$, is given by a similarity transform,
\begin{align}
    \widehat{\mathscr{A}}_\mathsf{I} = \Delta_q \left( \prod_{i=1}^k \widehat{\mathscr{A}}_{\mathsf{I}(i)} \right) \Delta_q^{-1}
\end{align}
since
\begin{align}
    \frac{\widehat{\mathscr{A}}_\mathsf{I} \widetilde{Z}_{\theta,\mathsf{I}}}{\Theta(a_\mathsf{I})} = \widehat{\mathscr{A}}_\mathsf{I} \Delta_q \prod_{i=1}^k \phi_{\mathsf{I}(i)}(x_i) = \Delta_q \prod_{i=1}^k \widehat{\mathscr{A}}_i \phi_{\mathsf{I}(i)}(x_i) = 0 \, .
\end{align}
Then, we take the limit, $\lim_{x_i \to x} \widehat{\mathscr{A}}_\mathsf{I}$, to obtain a $q$-difference operator, which annihilates the $q$-Casoratian minor.
As $\widehat{\mathscr{A}}_i$ is of order $n$, the $q$-Casoratian minor of size $k$ obeys an order $\binom{n}{k}$ $q$-difference equation.

This leads to the quantum $\mathscr{D}_q$-module structure of the grassmannian.
The HV formula implies that the grassmannian $q$-difference module is obtained from the $k$-fold exterior power of the projective space $q$-difference module, $\mathscr{M}_{\operatorname{Gr}(k,n)} = \Lambda^k \mathscr{M}_{\mathbb{P}^{n-1}}$, and at the level of solution sheaves, this gives $\operatorname{Sol}(\mathscr{M}_{\operatorname{Gr}(k,n)}) = \Lambda^k \operatorname{Sol}(\mathscr{M}_{\mathbb{P}^{n-1}})$, which is an analog of the cohomology relation, $H^\bullet(\operatorname{Gr}(k,n)) \cong \Lambda^k H^\bullet(\mathbb{P}^{n-1})$.
More precisely, we first consider the external tensor product of $k$ copies of the projective space $q$-difference module, $\mathscr{M}_{\mathbb{P}^{n-1}}^{\boxtimes k} = \mathscr{M}_{\mathbb{P}^{n-1}} \boxtimes \cdots \boxtimes \mathscr{M}_{\mathbb{P}^{n-1}}$, which is a $q$-difference module over the $k$-dimensional torus $(x_1,\ldots,x_k) \in (\mathbb{C}^\times)^k$.
Then, we apply the anti-symmetrization operator $\Delta_q$, and restrict it to the diagonal subtorus, $x_1 = \cdots = x_k$.
The resulting $q$-difference module has rank $\binom{n}{k}$ for $\kappa_\text{eff} = 0$, which agrees with the number of Higgs vacua and with the rank of $K(\operatorname{Gr}(k,n))$.

\subsubsection{Factorization of the kernel}\label{sec:factorization_kernel}

The factorization property can be seen at the level of the contour integral. 
We consider the integral kernel for the abelian quotient with $k$ distinct FI parameters,
\begin{align}
    \widetilde{\omega}_X^\theta(x_1,\ldots,x_k;z_1,\ldots,z_k) = \theta(\det z) \prod_{1 \le i \neq j \le k} \left(\frac{z_i}{z_j};q\right)_\infty \prod_{i=1}^k \frac{x_i^{-\log_q z_i} \theta(z_i)^\kappa}{\prod_{\beta=1}^n (z_i/a_\beta;q)_\infty} \frac{\dd{z_i}}{2\pi\ii z_i} \, . \label{eq:kernel_abelian}
\end{align}
It clearly reproduces the original integral kernel~\eqref{eq:omega_Gr_theta} in the limit $x_i \to x$ for all $i = 1,\ldots,k$.
By the identity~\eqref{eq:root_prod}, we have
\begin{align}
    \widetilde{\omega}_X^\theta(x_1,\ldots,x_k;z_1,\ldots,z_k) & = \Theta(z) \prod_{1 \le i < j \le k} (z_j^{-1} - z_i^{-1}) \prod_{i=1}^k \frac{x_i^{-\log_q z_i} \theta(z_i)^\kappa}{\prod_{\beta=1}^n (z_i/a_\beta;q)_\infty} \frac{\dd{z_i}}{2\pi\ii z_i} \nonumber \\
    & = \Theta(z) \Delta_q \prod_{i=1}^k \omega_{\mathbb{P}^{n-1}}(x_i,z_i) \, ,
\end{align}
which is the HV formula at the level of the integral kernel.
By the relation~\eqref{eq:q-const}, the contribution of the elliptic functions turns out to be quasi periodic under $q$-shift.
Evaluating the integral associated with the contour $\mathcal{C}_I$, we obtain
\begin{align}
    \widetilde{Z}_{X,\mathsf{I}}^\theta = \int_{\mathcal{C}_\mathsf{I}} \widetilde{\omega}_X^\theta & = \Theta(a_\mathsf{I}) \Delta_q \prod_{i=1}^{k} \phi_{\mathsf{I}(i)} (x_i) \xrightarrow{x_i \to x} Z_{X,\mathsf{I}}^\theta \, .
\end{align}
Here we compute
\begin{align}
    \frac{{\omega}_X^\theta(x;\{z_i q^{-d_i}\}_{i=1,\ldots,k})}{{\omega}_X^\theta(x;z_1,\ldots,z_k)} = x^{|d|} \prod_{1 \le i < j \le k} \frac{z_j^{-1} q^{d_j} - z_i^{-1} q^{d_i}}{z_j^{-1} - z_i^{-1}} \prod_{i=1}^k \frac{q^{-\kappa_\text{eff}\binom{d_i+1}{2}} (-z_i)^{\kappa_\text{eff}d_i}}{\prod_{\beta=1}^n (q^{-1}z_i/a_\beta;q^{-1})_{d_i}} \, .
\end{align}
This expression agrees with the degree $d = (d_1,\ldots,d_k)$ contribution of the I-function under the map, $z_i \mapsto P_i$, which is interpreted as the K-theoretic Kirwan map together with an appropriate completion.

From the simplest case $|d| = 1$, we define the $q$-connection,
\begin{align}
    \Omega_{X,i}(z_1,\ldots,z_k) = x (-z_i)^{\kappa_\text{eff}} \prod_{j (\neq i)}^k \frac{z_i - qz_j}{z_i - z_j}  \prod_{\beta=1}^n \left( 1 - \frac{z_i}{a_\beta} \right)^{-1} \, , \qquad i = 1,\ldots,k \, ,
\end{align}
and, for a $k$-variable function $f$, we define a twisted $q$-difference operator,
\begin{align}
    \nabla_{X,i} f(z_1,\ldots,z_k) = \Omega_{X,i}(z_1,\ldots,z_k)  f(z_1,\ldots,z_i q^{-1},\ldots,z_k) - f(z_1,\ldots,z_k) \, .
\end{align}
This $q$-connection is indeed flat,
\begin{align}
     \Omega_{X,i}(z_1,\ldots,z_k) \Omega_{X,j}(z_1,\ldots,z_i q^{-1},\ldots,z_k) = \Omega_{X,j}(z_1,\ldots,z_k) \Omega_{X,i}(z_1,\ldots,z_j q^{-1},\ldots,z_k) \, .
\end{align}
Then, in the associated twisted $q$-de Rham cohomology, two insertions are identified whenever they differ by a $\nabla_X$-exact term:
\begin{align}
    g \sim g + \nabla_{X,i} f \, , \label{eq:double_quantum_K-ring}
\end{align}
This relation reduces to the quantum equivariant K-ring relations in the limit $q \to 1$, $\left. \Omega_{X,i}\right|_{q\to1} = 1$, which is equivalent to
\begin{align}
    x (-z_i)^{\kappa_\text{eff}}  = \prod_{\beta=1}^n \left( 1 - \frac{z_i}{a_\beta} \right) \, , \qquad i = 1,\ldots,k \, . \label{eq:quantum_K-ring}
\end{align}
We will revisit this relation in \autoref{sec:Gr_semiclassical}.

\subsubsection{Wilson loop insertion}\label{sec:Wilson_loop}

The Coulomb branch localization formula is also available in the presence of the Wilson loop operators along $\mathbb{S}^1$.
Let $s_\lambda$ be the Schur polynomial~\eqref{eq:Schur_def}, which is a character polynomial of $\mathrm{U}(k)$; $\lambda$ parametrizes the irreducible representation of $\mathrm{U}(k)$.
We write $\chi_\lambda = s_\lambda(z_1^{-1},\ldots,z_k^{-1})$.
Then, the Coulomb branch formula in the presence of the Wilson loops is given by
\begin{align}
    Z_{X}^\theta(\lambda_1 ,\ldots, \lambda_L) = \frac{1}{k!} \int \chi_{\lambda_1} \cdots \chi_{\lambda_L} \omega_X^\theta \, .
\end{align}
The integral can be efficiently evaluated via the abelian quotient.
We write $\widetilde{Z}_{\theta}(\varnothing) = \widetilde{Z}_{\theta}$.
Then, we have
\begin{align}
    \widetilde{Z}_X^{\theta} (\lambda_1 , \ldots, \lambda_L)
    = \int \chi_{\lambda_1} \cdots \chi_{\lambda_L} \widetilde{\omega}_X^\theta 
    = \left( \prod_{l=1}^L s_{\lambda_l}(q^{D_1},\ldots,q^{D_k}) \right) \widetilde{Z}_X^{\theta}(\varnothing) \, ,
\end{align}
hence the Wilson loops are generated by Schur polynomials of $q$-difference operators, which reproduces the discussion in~\cite[\S3]{Jockers:2019lwe}.

\paragraph{One-point function}
Let us focus on the simplest case $L=1$.
For a partition $\lambda$ of length $k$, we denote the inverse one by $\lambda^\vee = (\lambda^\vee_j)_{j=1,\ldots,k}$ with $\lambda^\vee_j = \lambda_{k+1-j}$.
We compute the one-point function of the Wilson loop operator as follows,
\begin{align}
    \widetilde{Z}_X^{\theta}(\lambda) 
    = \int \chi_{\lambda} \widetilde{\omega}_X^\theta 
    = \det_{1 \le i, j \le k} q^{(\lambda_j^\vee + j - 1)D_i} \int \Theta(z) \prod_{i=1}^k \omega_{\mathbb{P}^{n-1}}(x_i;z_i) \frac{\dd{z_i}}{2\pi\ii z_i} \, . 
\end{align}
The integral associated with the contour $\mathcal{C}_\mathsf{I}$ is thus given by
\begin{align}
    \widetilde{Z}_{X,\mathsf{I}}^\theta(\lambda) = \Theta(a_\mathsf{I}) \det_{1 \le i, j \le k} q^{(\lambda_j^\vee + j - 1)D_i} \prod_{i=1}^{k} \phi_{\mathsf{I}(i)} (x_i) \xrightarrow{x_i = x} Z_{X,\mathsf{I}}^\theta(\lambda) = \Theta(a_\mathsf{I}) \det_{1 \le i, j \le k} \phi_{\mathsf{I}(i)}(x q^{\lambda_j^\vee + j - 1}) \, . \label{eq:WL_det}
\end{align}
From this expression, we see the action of $q^D$ on the Wilson loop partition function as follows,
\begin{align}
    (q^D)^l Z_{X,\mathsf{I}}^\theta(\lambda) = Z_{X,\mathsf{I}}^\theta(\lambda+l) \, , \qquad 
    \lambda+l = (\lambda_1+l,\ldots,\lambda_k+l) \, .
\end{align}

The determinantal formula motivates us to define a quantum version of the Schur function, 
\begin{align}
    \mathscr{S}_\lambda(a_\mathsf{I}^{-1}) = \frac{Z_{X,\mathsf{I}}^\theta(\lambda)}{Z_{X,\mathsf{I}}^\theta(\varnothing)} = \frac{\det_{1 \le i, j \le k} \phi_{\mathsf{I}(i)}(x q^{\lambda_j^\vee + j - 1})}{\det_{1 \le i, j \le k} \phi_{\mathsf{I}(i)}(x q^{j - 1})} \, ,     
\end{align}
which reproduces the classical Schur polynomial in the limit $x \to 0$,
\begin{align}
    \lim_{x \to 0} \mathscr{S}_\lambda(a_\mathsf{I}^{-1}) = s_\lambda(a_\mathsf{I}^{-1}) \, .
\end{align}
Set $\kappa_\text{eff} = 0$ for simplicity. 
As seen from the $q$-difference equation~\eqref{eq:q-diff_Gr_exp2}, $\phi_\alpha(xq^{n})$ can be expressed as a linear combination of $\{\phi_\alpha(xq^{r})\}_{r=0,\ldots,n-1}$.
Hence, the one-point function for $\lambda \not\in (n-k)^k$ with $\ell(\lambda) \le k$ can be expressed as a linear combination of those for $\lambda \in (n-k)^k$ with $\# \{ \lambda \in (n-k)^k \} = \binom{n}{k}$, which agrees with the number of the Higgs vacua.
At the level of the integral kernel, we interpret $\chi_\lambda$ as a twisted $q$-de Rham cohomology element, for which we impose the $q$-deformed quantum equivariant K-ring relations~\eqref{eq:double_quantum_K-ring}, and the integral $Z_{X,\mathsf{I}}^\theta(\lambda)$ is the corresponding $q$-period evaluated with the contour $\mathcal{C}_I$.

\paragraph{Two-point function}

Let us then consider the case $L = 2$, which plays a fundamental role in the algebraic structure of the Wilson loop operators.
First of all, the character polynomials (i.e., Schur polynomials) obey the Littlewood--Richardson (LR) fusion rule,
\begin{align}
    \chi_\lambda \chi_\mu = \sum_\nu N_{\lambda\mu}^\nu \chi_\nu \, ,
\end{align}
where $N_{\lambda\mu}^\nu \in \mathbb{Z}_{\ge 0}$ is called the LR coefficient.
The character polynomial is identically zero when the length of partition is strictly larger than the number of variables, $\chi_\lambda = 0$ for $\ell(\lambda) > k$, while the LR coefficient does not depend on $k$.
We insert the product of two character polynomials to compute the two-point function of the Wilson loops,
\begin{align}
    Z_{X,\mathsf{I}}^\theta(\lambda,\mu) = \sum_{\nu} N_{\lambda\mu}^\nu Z_{X,\mathsf{I}}^\theta(\nu) \, .
\end{align}
The multiple insertion is rather a convolution, so that $Z_{X,\mathsf{I}}^\theta(\lambda, \mu) \neq Z_{X,\mathsf{I}}^\theta(\lambda) Z_{X,\mathsf{I}}^\theta(\mu)$.

As discussed above, the one-point function of the Wilson loop for $\lambda \not \in (n-k)^k$ is given by a linear combination of those for $\nu \in (n-k)^k$.
Hence, we obtain the modified fusion rule for the Wilson loop operators~\cite{Jockers:2019lwe},
\begin{align}
    Z_{X,\mathsf{I}}^\theta(\lambda,\mu) = \sum_{\nu \in (n-k)^k} N_{\lambda\mu}^\nu(x,a) Z_{X,\mathsf{I}}^\theta(\nu) \, , \qquad N_{\lambda\mu}^\nu(x,a) \in R(T) \llbracket x \rrbracket \, ,
\end{align}
where we denote by $R(T)$ the representation ring of the torus $T = (\mathbb{C}^\times)^n \ni (a_1,\ldots,a_n)$.
The LR coefficients depend on $(x,a)$, but not on $q$, since the coefficients of the $q$-difference equation \eqref{eq:q-diff_Gr_exp2} depend only on $(x,a)$.

\subsection{$T^*\!\operatorname{Gr}(k,n)$}\label{sec:TGr}

Let us explore an extension of the grassmannian GLSM, $\mathrm{U}(k)$ gauge theory with $n$ fundamental, $n$ anti-fundamental, and one adjoint chiral multiplets, which is a GLSM describing the cotangent bundle of the grassmannian $T^*\!\operatorname{Gr}(k,n)$.
We assign the Neumann boundary condition for the fundamental and adjoint chiral multiplets, while the Dirichlet boundary condition is assigned for the anti-fundamental chiral multiplets.
Then, the fundamental and the anti-fundamental chiral multiplets form the $\mathcal{N} = 4$ hypermultiplet, and the $\mathcal{N}=2$ vector multiplet and the adjoint chiral multiplet form the $\mathcal{N} = 4$ vector multiplet.
We turn off the CS term $\kappa = 0$ to obtain a mass-deformed $\mathcal{N} = 4$ gauge theory in three dimensions.
The Coulomb branch localization formula is given by
\begin{align}
    Z_X = \frac{1}{k!} \int \prod_{1 \le i \neq j \le k} \frac{(z_i/z_j;q)_\infty}{(tz_i/z_j;q)_\infty} \prod_{\substack{i=1,\ldots,k \\ \alpha = 1, \ldots, n}} \frac{(z_i/b_\alpha;q)_\infty}{(z_i/a_\alpha;q)_\infty} \prod_{i=1}^k x^{-\log_q z_i} \frac{\dd{z_i}}{2\pi\ii z_i} \, . \label{eq:N=4_contour}
\end{align}
Compared to the previous case~\eqref{eq:Z_Gr}, we have additional flavor fugacities for anti-fundamental and adjoint matters, $b = (b_\alpha)_{\alpha = 1, \ldots,n}$ and $t$, respectively, where we need to set $b_\alpha = a_\alpha / t$ to be consistent with $\mathcal{N} = 4$ R-symmetry.
Nevertheless, we keep using generic $b = (b_\alpha)_{\alpha = 1, \ldots,n}$ in the following analysis.

We may apply the same residue calculus as before,\footnote{%
In general, there are additional poles from $\prod_{i \neq j}^k (t z_i/z_j;q)_\infty$ in the denominator, which are located at $z_i = t^{-1} q^{-d} z_j$ for $d \in \mathbb{Z}_{\ge 0}$. The corresponding residues will be zero when $b_\alpha = a_\alpha / t$.}
\begin{align}
    Z_{X,\mathsf{I}} & = \prod_{1 \le i \neq j \le k} \frac{(a_i/a_j;q)_\infty}{(ta_i/a_j;q)_\infty} \prod_{i=1}^k \frac{x^{-\log_q a_i} \prod_{\beta=1}^n (a_i/b_\beta;q)_\infty}{(q;q)_\infty \prod_{\beta (\neq \mathsf{I}(i))}^n (a_i/a_\beta;q)_\infty} \nonumber \\    
    & \quad \times \sum_{d_1,\ldots,d_k \ge 0} x^{|d|} \prod_{1 \le i \neq j \le k} \frac{\left( q^{-d_i+d_j} a_i/a_j;q\right)_{\infty}}{\left(t q^{-d_i+d_j} a_i/a_j;q\right)_{\infty}} \frac{\left( t a_i/a_j;q\right)_{\infty}}{\left(a_i/a_j;q\right)_{\infty}} \prod_{i=1}^k \prod_{\beta=1}^n \frac{(q^{-1} a_i/b_\beta;q^{-1})_{d_i}}{(q^{-1} a_i / a_\beta;q^{-1})_{d_i}} \, ,
\end{align}
from which we identify the vortex partition function,
\begin{align}
    Z^\text{vor}_{X,\mathsf{I}} & =  \sum_{d_1,\ldots,d_k \ge 0} x^{|d|} \prod_{1 \le i \neq j \le k} \frac{\left( q^{-d_i+d_j} a_i/a_j;q\right)_{\infty}}{\left(t q^{-d_i+d_j} a_i/a_j;q\right)_{\infty}} \frac{\left( t a_i/a_j;q\right)_{\infty}}{\left(a_i/a_j;q\right)_{\infty}} \prod_{i=1}^k \prod_{\beta=1}^n \frac{(q^{-1} a_i/b_\beta;q^{-1})_{d_i}}{(q^{-1} a_i / a_\beta;q^{-1})_{d_i}} \nonumber \\
    & = \sum_{d_1,\ldots,d_k \ge 0} x^{|d|} \prod_{1 \le i \neq j \le k} \prod_{\ell = - \infty}^{d_i - d_j} \frac{1 - q^{-\ell} a_i/a_j}{1 - t q^{-\ell} a_i/a_j} \prod_{\ell = - \infty}^{0} \frac{1 - t q^{-\ell} a_i/a_j}{1 - q^{-\ell} a_i/a_j} \prod_{i=1}^k \prod_{\beta=1}^n \prod_{\ell = 1}^{d_i} \frac{1-q^{-\ell} a_i/b_\beta}{1-q^{-\ell} a_i/a_\beta} \, .
\end{align}
Then, $(1-q)Z^\text{vor}_{X,\mathsf{I}}$ agrees with the K-theoretic I-function for the odd tangent bundle $X = \Pi T \! \operatorname{Gr}(k,n)$%
\footnote{%
The odd bundle is also denoted by $T \! \operatorname{Gr}(k,n)[1]$.
See \autoref{sec:I-func}.
}
(balanced I-function) after the localization and the replacement $q \mapsto q^{-1}$ together with the specialization $b_\alpha = a_\alpha / t$~\eqref{eq:I-func_TGr}, which differs from that for the cotangent bundle $T^*\!\operatorname{Gr}(k,n)$ as emphasized in~\cite[\S5]{Givental:2021dbg}.

In order to obtain the I-function for $X = T^*\!\operatorname{Gr}(k,n)$, we still consider the same bulk theory, but assign different boundary conditions, the Neumann boundary condition for the fundamental and the anti-fundamental chiral multiplets, and the Dirichlet boundary condition for the adjoint chiral multiplet.
Then, we instead obtain the following integral formula,  
\begin{align}
    Z_{X} = \frac{1}{k!} \int \prod_{1 \le i \neq j \le k} \left(\frac{z_i}{z_j},\frac{q}{t}\frac{z_i}{z_j};q\right)_\infty \prod_{i=1}^k \frac{x^{-\log_q z_i}}{\prod_{\alpha = 1}^n (z_i/a_\alpha,q b_\alpha/z_i;q)_\infty} \frac{\dd{z_i}}{2\pi\ii z_i} \, . 
\end{align}
Comparing the integral kernel \eqref{eq:N=4_contour}, they differ by the theta functions, which are interpreted as the elliptic genus of the boundary $\mathcal{N}=(0,2)$ theory,
\begin{align}
    \omega_{T^*\!\operatorname{Gr}}(x;z_1,\ldots,z_k) = \omega_{\Pi T\!\operatorname{Gr}}(x;z_1,\ldots,z_k)  \prod_{1 \le i \neq j \le k} \theta\left(\frac{q}{t}\frac{z_i}{z_j}\right) \prod_{i=1}^k \prod_{\alpha = 1}^n \theta(qb_\alpha/z_i)^{-1} \, .
\end{align}
Applying the same residue analysis, we compute the partition function, 
\begin{align}
    Z_{X,\mathsf{I}} & = \prod_{1 \le i \neq j \le k} \left(\frac{a_i}{a_j},\frac{q}{t}\frac{a_i}{a_j};q\right)_\infty \prod_{i=1}^k \frac{x^{-\log_q a_i}}{(q;q)_\infty \prod_{\beta (\neq \mathsf{I}(i))}^n (a_i/a_\beta;q)_\infty \prod_{\beta=1}^n (qb_\beta/a_i;q)_\infty} \nonumber \\    
    & \quad \times \sum_{d_1,\ldots,d_k \ge 0} x^{|d|} \prod_{1 \le i \neq j \le k} \frac{\left(q^{-d_i+d_j} a_i/a_j;q\right)_{\infty}}{\left(a_i/a_j;q\right)_{\infty}} \frac{\left(t^{-1} q^{-d_i+d_j+1} a_i/a_j;q\right)_{\infty}}{\left( t^{-1} q a_i/a_j;q\right)_{\infty}} \prod_{i=1}^k \prod_{\beta=1}^n \frac{(q b_\beta / a_i;q)_{d_i}}{(q^{-1} a_i / a_\beta;q^{-1})_{d_i}} \, ,
\end{align}
from which we identify the vortex partition function, 
\begin{align}
    Z^\text{vor}_{X,\mathsf{I}} & =  \sum_{d_1,\ldots,d_k \ge 0} x^{|d|} \prod_{1 \le i \neq j \le k} \frac{\left(q^{-d_i+d_j} a_i/a_j;q\right)_{\infty}}{\left(a_i/a_j;q\right)_{\infty}} \frac{\left(t^{-1} q^{-d_i+d_j+1} a_i/a_j;q\right)_{\infty}}{\left(t^{-1} q a_i/a_j;q\right)_{\infty}} \prod_{i=1}^k \prod_{\beta=1}^n \frac{(q b_\beta / a_i;q)_{d_i}}{(q^{-1} a_i / a_\beta;q^{-1})_{d_i}} \nonumber \\
    & = \sum_{d_1,\ldots,d_k \ge 0} x^{|d|} \prod_{1 \le i \neq j \le k} \frac{\prod_{\ell = -\infty}^{d_i-d_j} (1 - q^{-\ell} a_i/a_j) (1 - t^{-1} q^{-\ell+1} a_i/a_j)}{\prod_{\ell = -\infty}^0 (1 - q^{-\ell} a_i/a_j) (1 - t^{-1} q^{-\ell+1} a_i/a_j)} \prod_{i=1}^k \prod_{\beta=1}^n \prod_{\ell = 1}^{d_i} \frac{1-q^\ell b_\beta/a_i}{1-q^{-\ell} a_i/a_\beta} \, ,
\end{align}
where $(1-q)Z^\text{vor}_{X,\mathsf{I}}$ reproduces the K-theoretic I-function for the cotangent bundle $X = T^*\!\operatorname{Gr}(k,n)$ under the identification of the cotangent weight $v = q/t$~\eqref{eq:I-func_T*Gr}.
 
\paragraph{Integral kernel}

Let us discuss the twisted $q$-de Rham cohomology for $X = \Pi T\!\operatorname{Gr}(k,n)$.
The integral kernel $\omega_X(x;z_1,\ldots,z_k)=\omega_{\Pi T\!\operatorname{Gr}}(x;z_1,\ldots,z_k)$ is not simply factorized as in the case of $\operatorname{Gr}(k,n)$.
Nevertheless, we have
\begin{align}
    \frac{\omega_X(x;\{z_i q^{-d_i}\}_{i=1,\ldots,k})}{\omega_X(x;\{z_i \}_{i=1,\ldots,k}\})} = x^{|d|} \prod_{1 \le i \neq j \le k} \frac{\left( q^{-d_i+d_j} z_i/z_j;q\right)_{\infty}}{\left(t q^{-d_i+d_j} z_i/z_j;q\right)_{\infty}} \frac{\left( t z_i/z_j;q\right)_{\infty}}{\left(z_i/z_j;q\right)_{\infty}} \prod_{i=1}^k \prod_{\beta=1}^n \frac{(q^{-1} z_i/b_\beta;q^{-1})_{d_i}}{(q^{-1} z_i / a_\beta;q^{-1})_{d_i}} \, ,
\end{align}
and the simplest case $|d| = 1$ gives
\begin{align}
    \Omega_{X,i}(z_1,\ldots,z_k) = x \prod_{j (\neq i)}^k \frac{(z_i - q z_j)(z_i - t z_j)}{(z_i - z_j)(t z_i - q z_j)} \prod_{\beta=1}^n \frac{1 - z_i/b_\beta}{1 - z_i/a_\beta} \, , \qquad i = 1,\ldots,k \, .
\end{align}
Then, we define a twisted $q$-difference operator,
\begin{align}
    \nabla_{X,i} f(z_1,\ldots,z_k) = \Omega_{X,i}(z_1,\ldots,z_k)  f(z_1,\ldots,z_i q^{-1},\ldots,z_k) - f(z_1,\ldots,z_k) \, ,
\end{align}
and we have an equivalence relation for two insertions, $g \sim g + \nabla_{X,i} f$.
This relation is indeed related to the vortex analog of the $qq$-character~\cite{Nekrasov:2015wsu,Kimura:2015rgi} \cite{Haouzi:2020bso}.
In the limit $q \to 1$, it reduces to 
\begin{align}
    x \prod_{j (\neq i)}^k \frac{z_i - t z_j}{t z_i - z_j} =  \prod_{\beta=1}^n \frac{1 - z_i/a_\beta}{1 - z_i/b_\beta} \, , \qquad i = 1, \ldots, k \, , \label{eq:quantum_K-ring_cotangent}
\end{align}
which is identified with the Bethe Ansatz Equation as seen in \autoref{sec:Gr_semiclassical}.

\subsection{Semiclassical analysis}\label{sec:Gr_semiclassical}

We study the semiclassical behavior of the partition function in the limit $q \to 1$. 
We put $q = \ee^{-\epsilon}$ with $\operatorname{Re} \epsilon > 0$.
The $q$-shifted factorial behaves as follows in the regime $\epsilon \ll 1$,
\begin{align}
    (z;q)_\infty = \exp \left(\frac{1}{\epsilon} \left( - \operatorname{Li}_2(z) + O(\epsilon) \right) \right) \, , \qquad \operatorname{Li}_2(z) = \sum_{n = 1}^\infty \frac{z^n}{n^2} \, .
\end{align}
The dilogarithm obeys the reflection relation,
\begin{align}
    \operatorname{Li}_2(z) + \operatorname{Li}_2(z^{-1}) = - \frac{1}{2} \log^2 (-z) - \frac{\pi^2}{6} \, ,
\end{align}
and thus the asymptotic behavior of the theta function is given as follows, 
\begin{align}
    \theta(z;q) & = \exp \left( - \frac{1}{\epsilon} \left( \operatorname{Li}_2(z) + \operatorname{Li}_2(z^{-1}) + O(\epsilon) \right) \right) %\nonumber \\ &
    = \exp \left( \frac{1}{\epsilon} \left(  \frac{1}{2} \log^2 (-z) + \frac{\pi^2}{6} + O(\epsilon) \right) \right) \, .
\end{align}

\subsubsection{$\operatorname{Gr}(k,n)$}

In the limit $q \to 1$ ($\epsilon \to 0$), the partition function for $X = \operatorname{Gr}(k,n)$ shown in \eqref{eq:Z_Gr} behaves as follows,
\begin{align}
    Z_X = \frac{1}{k!} \int \exp \left( \frac{1}{\epsilon} \left(\widetilde{W}_X(z) + O(\epsilon) \right) \right) \prod_{i=1}^k \frac{\dd{z_i}}{2\pi\ii z_i} \, ,
\end{align}
where $\widetilde{W}_X$ is the effective twisted superpotential given by
\begin{align}
    \widetilde{W}_X(z) = \log x \sum_{i=1}^k \log z_i + \frac{\kappa}{2} \sum_{i=1}^k \log^2 (-z_i) - \sum_{1 \le i \neq j \le k}  \operatorname{Li}_2(z_i/z_j) + \sum_{\alpha=1}^n \sum_{i=1}^k \operatorname{Li}_2(z_i/a_\alpha) + \text{const.} \, .
\end{align}
The twisted F-term equations are the critical point equation with respect to the effective twisted superpotential,
\begin{align}
    \exp \left( z_i \frac{\partial}{\partial z_i} \widetilde{W}_X \right) = 1 \, , \qquad i = 1, \ldots, k \, ,
\end{align}
which are equivalent to
\begin{align}
    - x \frac{(-z_i)^{\kappa_\text{eff}}}{z_1 \cdots z_k} = \prod_{\alpha = 1}^n \left( 1 - \frac{z_i}{a_\alpha} \right) \, , \qquad i = 1, \ldots, k \, .
\end{align}

We then include the CS term of the determinant representation~\eqref{eq:Z_Gr_theta}.
The twisted superpotential is modified as follows,
\begin{align}
    \widetilde{W}_{X}^\theta(z) = \widetilde{W}_{X}(z) + \frac{1}{2} \log^2 (-z_1 \cdots z_k) + \text{const.} \, ,
\end{align}
and the corresponding twisted F-term equations are given by,
\begin{align}
    x (-z_i)^{\kappa_\text{eff}} = \prod_{\alpha = 1}^n \left( 1 - \frac{z_i}{a_\alpha} \right) \, , \qquad i = 1, \ldots, k \, , \label{eq:quantum_K-ring2}
\end{align}
which reproduce the previous result~\eqref{eq:quantum_K-ring}.
Expanding the RHS of these equations, we have
\begin{align}
    x (-z_i)^{\kappa_\text{eff}} = 1 + \cdots + \frac{z_i^{n}}{a_1 \cdots a_n} \, .
\end{align}
Hence, in order to obtain a degree $n$ polynomial relation, we have the CS level window, $0 \le \kappa_\text{eff} \le n \iff -k \le \kappa \le n-k$.
In the absence of the theta term $\theta(\det z)$, this window is shifted by one.
When $\kappa_\text{eff}=0$, the twisted F-term equations \eqref{eq:quantum_K-ring2} agree with the equivariant quantum K-ring relations for $X = \operatorname{Gr}(k,n)$,
\begin{align}
    x = \prod_{\alpha = 1}^n \left( 1 - \frac{z_i}{a_\alpha} \right) \, , \qquad i = 1, \ldots, k \, . \label{eq:quantum_K-ring3}
\end{align}
Hence, in the semiclassical regime, the partition function is evaluated by summing over the vacua characterized by these relations as in the case of the topologically twisted indices~\cite{Benini:2015noa,Closset:2016arn}.

\subsubsection{$\Pi T\!\operatorname{Gr}(k,n)$}

For $X = \Pi T\!\operatorname{Gr}(k,n)$, we obtain the effective twisted superpotential in the limit $q \to 1$ from the partition function~\eqref{eq:N=4_contour} as follows,
\begin{align}
    \widetilde{W}_X(z) = \log x \sum_{i=1}^k \log z_i - \sum_{1 \le i \neq j \le k} \left( \operatorname{Li}_2(z_i/z_j) - \operatorname{Li}_2(t z_i/z_j) \right) + \sum_{\alpha=1}^n \sum_{i=1}^k \left( \operatorname{Li}_2(z_i/a_\alpha) - \operatorname{Li}_2(z_i/b_\alpha) \right) \, ,
\end{align}
and the corresponding twisted F-term equations are given by
\begin{align}
    x \prod_{j (\neq i)}^k \frac{z_i - t z_j}{t z_i - z_j} = \prod_{\alpha=1}^n \frac{1 - z_i/a_\alpha}{1 - z_i/b_\alpha} \, , \qquad i = 1,\ldots, k \, ,
\end{align}
which reproduce the previous result~\eqref{eq:quantum_K-ring_cotangent}.
We parametrize $z_i = \exp(\beta \xi_i)$, $a_\alpha = \exp(\beta (\lambda_\alpha+s_\alpha))$, $b_\alpha = \exp(\beta (\lambda_\alpha-s_\alpha))$, $t = \exp(\beta \hbar)$, and we define $[z] = 2 \sinh (\beta z /2)$, where $\beta$ is the circumference of the compactification circle.
Setting $\tilde{x} = x \prod_{\alpha = 1}^n \left( a_\alpha /b_\alpha \right)^{\frac{1}{2}}$, we may rewrite
\begin{align}
    \tilde{x} \prod_{j (\neq i)}^k \frac{[\xi_i - \xi_j - \hbar]}{[\xi_i - \xi_j + \hbar]} = \prod_{\alpha = 1}^n \frac{[\xi_i - \lambda_\alpha - s_\alpha]}{[\xi_i - \lambda_\alpha + s_\alpha]} \, ,
\end{align}
which agrees with the Bethe Ansatz Equation (BAE) for the $\mathfrak{sl}_2$-XXZ spin chain of $n$ sites in the $k$ magnon sector.
The cotangent bundle condition, $b_\alpha = a_\alpha / t$, implies $s_\alpha = \hbar/2$, which corresponds to the spin-$\tfrac{1}{2}$ system.
Such a connection between gauge theories and quantum integrable systems is known as the Bethe/gauge correspondence~\cite{Nekrasov:2009uh}.
Since $\lim_{\beta \to 0} [z]/\beta = z$, it reduces to the BAE of the $\mathfrak{sl}_2$-XXX spin chain in the 2d limit, $\beta \to 0$.

\section{Symplectic Grassmannian GLSM}\label{sec:SG}

Let $X = \operatorname{SG}(k,2n)$ be the symplectic grassmannian of $k$-dimensional subspaces of $\mathbb{C}^{2n}$ which are isotropic with respect to the symplectic form on $\mathbb{C}^{2n}$.%
\footnote{$\operatorname{SG}(n,2n)$ is also called the Lagrangian grassmannian.}
The GLSM realization was proposed in \cite{Gu:2020oeb} (see also \cite{Okonek:2019lwg}), i.e., the $\mathrm{U}(k)$ gauge theory with $2n$ fundamental chiral multiplets $\Phi_{\pm \alpha} \in \mathbb{C}^k$ for $\alpha=1,\ldots,n$, and one chiral multiplet in the dual anti-symmetric representation $Q \in \wedge^2 (\mathbb{C}^k)^\vee$, together with the superpotential given by
\begin{align}
    W_X = \sum_{\alpha=1}^n \sum_{1 \le i, j \le k} Q_{ij} \Phi_{+\alpha}^i \Phi^j_{-\alpha} \, . \label{eq:superpotential_SG}
\end{align}
Let $(a_{\pm \alpha})_{\alpha = 1,\ldots,n}$ and $\nu$ be flavor fugacities for $\Phi_{\pm \alpha}$ and $Q$.
In order that the superpotential is invariant under the flavor rotation, we have $\nu a_{+\alpha} a_{-\alpha} = 1$ for each $\alpha = 1, \ldots, n$.
Moreover, from this superpotential, we have the following relation for the R-charges of $\Phi_{\pm \alpha}$ and $Q$, $2 R_\Phi + R_Q = 2$, and rescale the fugacities, $(q^{R_\Phi/2}a_{+\alpha},q^{R_\Phi/2}a_{-\alpha},q^{R_Q/2}\nu) \mapsto (a_{+\alpha},a_{-\alpha},\nu)$, so that $\nu a_{+\alpha} a_{-\alpha} = q$.
We put $\nu = q$ and $a_{+\alpha} = a_{-\alpha}^{-1} = a_\alpha$ without loss of generality. 
Then, we have the following integral formula for the partition function on $\mathbb{D}^2 \times \mathbb{S}^1$,
\begin{align}
    Z_X = \frac{1}{k!} \int \prod_{1 \le i \neq j \le k} \left(\frac{z_i}{z_j};q\right)_\infty \prod_{1 \le i < j \le k} (z_i z_j;q)_\infty \prod_{i=1}^k \frac{x^{-\log_q z_i} \theta(z_i)^\kappa}{\prod_{\beta=1}^n (z_i/a_\beta,z_i a_\beta;q)_\infty} \frac{\dd{z_i}}{2\pi\ii z_i} \, .
\end{align}
We impose the Neumann boundary condition for the fundamental chiral multiplets, while the Dirichlet boundary condition for the dual anti-symmetric chiral multiplet,%
\footnote{The dual anti-symmetric chiral multiplet contribution with the Neumann boundary condition is $\prod_{i < j} (\nu z_i^{-1} z_j^{-1};q)_\infty^{-1}$, while it is given by $\prod_{i < j} (q \nu^{-1} z_i z_j;q)_\infty$ for the Dirichlet boundary condition. Putting $\nu = q$, we obtain the expression above.}
which is chosen to provide an additional contribution required for the K-theoretic I-function of the symplectic grassmannian via the quantum Lefschetz theorem.

We set $|x| < 1$.
In this case, we take the residues of poles located at $a_\alpha q^{-m}$ and $a_\alpha^{-1} q^{-m}$ for $m \in \mathbb{Z}_{\ge 0}$, and we denote the corresponding contour by $\mathcal{C}^+_\alpha$ and $\mathcal{C}^-_\alpha$, respectively.
Hence, we have two choices of the contour for each $\alpha \in \{1,\ldots,n\}$.
Let $\mathsf{I} : (1,\ldots,k) \hookrightarrow (1,\ldots,n)$ be an injective map as before, and we also define $\varepsilon : (1,\ldots,k) \to \{\pm 1\}$.
We write $\mathsf{I}_\varepsilon = (\mathsf{I},\varepsilon)$ and $\mathcal{C}_{\mathsf{I}_\varepsilon} = \prod_{i=1}^k \mathcal{C}_{\mathsf{I}(i)}^{\varepsilon_i}$.
We remark that, for $i \neq j$, the residue of poles at $(z_i,z_j) = (a_\alpha q^{-m_i},a_\alpha^{-1} q^{-m_j})$ is zero due to the anti-symmetric tensor contribution in the numerator.
Therefore, the number of distinct contours is given by $2^k \binom{n}{k}$, which agrees with the Euler characteristics of $\operatorname{SG}(k,2n)$.

Evaluating the integral with the contour $\mathcal{C}_{\mathsf{I}_\varepsilon}$, we obtain the Higgs branch formula of the partition function for $X = \operatorname{SG}(k,2n)$,
\begin{align}
    Z_{X,\mathsf{I}_\varepsilon} & = Z^\text{per}_{X,\mathsf{I}_\varepsilon} Z^\text{vor}_{X,\mathsf{I}_\varepsilon}
\end{align}
where
\begin{subequations}
\begin{align}
    Z^\text{per}_{X,\mathsf{I}_\varepsilon} & = \prod_{1 \le i \neq j \le k} (a_i/a_j;q)_\infty \prod_{1 \le i < j \le k} (a_i a_j;q)_\infty \prod_{i=1}^k \frac{x^{-\log_q a_i} \theta(a_i)^\kappa}{(q;q)_\infty \prod_{\beta (a_\beta \neq a_i)}^n (a_i/a_\beta;q)_\infty \prod_{\beta (a_\beta^{-1} \neq a_i)}^n (a_i a_\beta;q)_\infty} \\
    Z^\text{vor}_{X,\mathsf{I}_\varepsilon} & = \sum_{d_1,\ldots,d_k \ge 0} x^{|d|} \prod_{1 \le i \neq j \le k} \frac{\left( q^{-d_i+d_j} a_i/a_j;q\right)_{\infty}}{\left(a_i/a_j;q\right)_{\infty}} \prod_{1 \le i < j \le k} (q^{-1} a_i a_j;q^{-1})_{d_i+d_j} \prod_{i=1}^k \frac{q^{-\kappa\binom{d_i+1}{2}} (-a_i)^{\kappa d_i}}{\prod_{\beta=1}^n (q^{-1} a_i / a_\beta,q^{-1} a_i a_\beta;q^{-1})_{d_i}} \, .
\end{align}
\end{subequations}
The vortex partition function $(1-q)Z^\text{vor}_{X,\mathsf{I}_\varepsilon}$ is identified with the equivariant I-function of $X = \operatorname{SG}(k,2n)$ with the level structure with respect to $V^\vee$ under the replacement $q \mapsto q^{-1}$ \eqref{eq:I-func_SG_V*}. 
We may rewrite it as follows,
\begin{multline}
    Z^\text{vor}_{X,\mathsf{I}_\varepsilon}  = \sum_{d_1,\ldots,d_k \ge 0} x^{|d|} \prod_{1 \le i \neq j \le k} \frac{\prod_{\ell = -\infty}^{d_i - d_j} (1 - q^{-\ell} a_i/a_j)}{\prod_{\ell = -\infty}^{0} (1 - q^{-\ell} a_i/a_j)} \prod_{1 \le i < j \le k} \prod_{\ell = 1}^{d_i + d_j} (1 - q^{-\ell} a_i a_j) \\ \times \prod_{i=1}^k \frac{q^{-\kappa \binom{d_i+1}{2}} (-a_i)^{\kappa d_i}}{\prod_{\beta = 1}^n \prod_{\ell=1}^{d_i} (1 - q^{-\ell} a_i/a_\beta)(1 - q^{-\ell} a_i a_\beta)} \, .
\end{multline}
Compared with the vortex partition function of $\operatorname{Gr}(k,n)$ shown in \eqref{eq:Gr_vortex-func}, there is an additional contribution of the anti-symmetric tensor field, which prevents the factorization of the partition function even with the theta term compensating the U(1) anomaly. 
In the abelian case $k = 1$, we have
\begin{align}
    Z_{X,\alpha_\pm} & = \int_{\mathcal{C}_\alpha^\pm} \frac{x^{-\log_q z} \theta(z)^\kappa}{\prod_{\beta}^n (z/a_\beta,z a_\beta;q)_\infty} \frac{\dd z}{2 \pi \ii z} \nonumber \\
    & = \frac{x^{-\log_q a_\alpha^{\pm}} \theta(a_\alpha^{\pm})^\kappa}{(q;q)_\infty \prod_{\beta (a_\beta \neq a_\alpha^\pm)}^n (a_\alpha^\pm/a_\beta;q)_\infty \prod_{\beta (a_\beta^{-1} \neq a_\alpha^\pm)}^n (a_\alpha^\pm a_\beta;q)_\infty} \sum_{d \ge 0} \frac{x^d q^{-\kappa \binom{d+1}{2}} (-a_\alpha)^{\pm\kappa d}}{\prod_{\beta = 1}^n \prod_{\ell=1}^{d} (1 - q^{-\ell} a_\alpha^\pm/a_\beta)(1 - q^{-\ell} a_\alpha^\pm a_\beta)} \, ,
\end{align}
which agrees with the partition function of $\mathbb{P}^{2n-1}$ with the flavor fugacities, $(a_1,\ldots,a_n,a_1^{-1},\ldots,a_n^{-1})$.

\subsection{$T^*\!\operatorname{SG}(k,2n)$}\label{sec:TSG}

The GLSM for the cotangent bundle of the symplectic grassmannian $T^*\!\operatorname{SG}(k,2n)$ is given by the $\mathcal{N} = 4$ uplift of the $\operatorname{SG}(k,2n)$ GLSM.
We consider the Coulomb branch localization formula as follows,
\begin{align}
    Z_X = \frac{1}{k!} \int \prod_{1 \le i \neq j \le k} \frac{(z_i/z_j;q)_\infty}{(tz_i/z_j;q)_\infty} \prod_{1 \le i < j \le k} \frac{(z_i z_j;q)_\infty}{(u z_i z_j;q)_\infty} \prod_{\substack{i=1,\ldots,k \\ \alpha = 1, \ldots, n}} \frac{(vz_i/a_\alpha, vz_i a_\alpha;q)_\infty}{(z_i/a_\alpha,z_i a_\alpha;q)_\infty} \prod_{i=1}^k x^{-\log_q z_i} \frac{\dd{z_i}}{2\pi\ii z_i} \, .
\end{align}
In order to form $\mathcal{N} = 4$ hypermultiplets, we impose the conditions, $t = u = v$.
Under this condition, we may apply the same classification of the integration contours as before:
The additional factors in the denominator, $(t z_i/z_j;q)_\infty$ and $(u z_i z_j;q)_\infty$, do not yield any further poles contributing to the partition function when $t = u = v$.
The contour integral associated with the map $\mathsf{I}_\varepsilon = (\mathsf{I},\varepsilon)$ is then given by
\begin{align}
    Z_{X,\mathsf{I}_\varepsilon} = Z^\text{per}_{X,\mathsf{I}_\varepsilon} Z^\text{vor}_{X,\mathsf{I}_\varepsilon} \, ,
\end{align}
where
\begin{subequations}
\begin{align}
    Z^\text{per}_{X,\mathsf{I}_\varepsilon} & = \prod_{1 \le i \neq j \le k} \frac{(a_i/a_j;q)_\infty}{(ta_i/a_j;q)_\infty} \prod_{1 \le i < j \le k} \frac{(a_i a_j;q)_\infty}{(u a_i a_j;q)_\infty} \prod_{i=1}^k \frac{x^{-\log_q a_i} \prod_{\beta=1}^n (v a_i/a_\beta,v a_i a_\beta;q)_\infty}{(q;q)_\infty \prod_{\beta (a_\beta \neq a_i)}^n (a_i/a_\beta;q)_\infty \prod_{\beta (a_\beta^{-1} \neq a_i)}^n (a_i a_\beta;q)_\infty} \\    
    Z^\text{vor}_{X,\mathsf{I}_\varepsilon} & = \sum_{d_1,\ldots,d_k \ge 0} x^{|d|} \prod_{1 \le i \neq j \le k} \frac{\left( q^{-d_i+d_j} a_i/a_j;q\right)_{\infty}}{\left(t q^{-d_i+d_j} a_i/a_j;q\right)_{\infty}} \frac{\left( t a_i/a_j;q\right)_{\infty}}{\left(a_i/a_j;q\right)_{\infty}} \prod_{1 \le i < j \le k} \frac{(q^{-1} a_i a_j;q^{-1})_{d_i+d_j}}{(uq^{-1} a_i a_j;q^{-1})_{d_i+d_j}} \nonumber \\
    & \qquad \times \prod_{i=1}^k \prod_{\beta=1}^n \frac{(v q^{-1} a_i/a_\beta,vq^{-1} a_i a_\beta;q^{-1})_{d_i}}{(q^{-1} a_i / a_\beta,q^{-1} a_i a_\beta;q^{-1})_{d_i}} \, .
\end{align}    
\end{subequations}
Then, we identify $(1-q)Z^\text{vor}_{X,\mathsf{I}_\varepsilon}$ with the I-function of the odd tangent bundle $X = \Pi T\!\operatorname{SG}(k,2n)$~\eqref{eq:I-func_TSG} when $t = u = v$,
\begin{multline}
    Z^\text{vor}_{X,\mathsf{I}_\varepsilon}  = \sum_{d_1,\ldots,d_k \ge 0} x^{|d|} \prod_{1 \le i \neq j \le k} \frac{\prod_{\ell = -\infty}^{d_i - d_j} (1 - q^{-\ell} a_i/a_j)}{\prod_{\ell = -\infty}^{0} (1 - q^{-\ell} a_i/a_j)} \frac{\prod_{\ell = -\infty}^{0} (1 - t q^{-\ell} a_i/a_j)}{\prod_{\ell = -\infty}^{d_i - d_j} (1 - t q^{-\ell} a_i/a_j)} \\ \times \prod_{1 \le i < j \le k} \prod_{\ell = 1}^{d_i + d_j} \frac{1 - q^{-\ell} a_i a_j}{1 - t q^{-\ell} a_i a_j}\prod_{i=1}^k \prod_{\beta = 1}^n \prod_{\ell=1}^{d_i} \frac{(1 - t q^{-\ell} a_i/a_\beta)(1 - t q^{-\ell} a_i a_\beta)}{ (1 - q^{-\ell} a_i/a_\beta)(1 - q^{-\ell} a_i a_\beta)} \, .
\end{multline}
As discussed in \autoref{sec:TGr}, we need to assign different boundary conditions to realize the I-function of the cotangent bundle $X = T^*\!\operatorname{SG}(k,2n)$.
We consider the following integral,
\begin{multline}
    Z_X = \frac{1}{k!} \int \prod_{1 \le i \neq j \le k} \left(\frac{z_i}{z_j},\frac{q}{t}\frac{z_i}{z_j};q\right)_\infty \prod_{1 \le i < j \le k} \left(z_i z_j, \frac{q}{u} z_i^{-1} z_j^{-1};q\right)_\infty \\ \times \prod_{i=1}^k \frac{x^{-\log_q z_i}}{\prod_{\alpha=1}^n \left(z_i/a_\alpha, z_i a_\alpha,\frac{q}{v} a_\alpha/z_i,\frac{q}{v}z_i^{-1}a_\alpha^{-1};q\right)_\infty} \frac{\dd{z_i}}{2\pi\ii z_i} \, .
\end{multline}
For $|x| < 1$, we cannot take poles at $v^{-1} a_\alpha^{\pm} q^{\mathbb{Z}_{>0}}$.
Hence, evaluating the residue as before, we obtain
\begin{subequations}
\begin{align}
    Z^\text{per}_{X,\mathsf{I}_\varepsilon} & = \prod_{1 \le i \neq j \le k} (a_i/a_j,\frac{q}{t}a_i/a_j;q)_\infty \prod_{1 \le i < j \le k} (a_i a_j,\frac{q}{u}a_i^{-1}a_j^{-1};q)_\infty \nonumber \\
    & \qquad \times \prod_{i=1}^k \frac{x^{-\log_q a_i}}{(q;q)_\infty \prod_{\beta (a_\beta \neq a_i)}^n (a_i/a_\beta;q)_\infty \prod_{\beta (a_\beta^{-1} \neq a_i)}^n (a_i a_\beta;q)_\infty \prod_{\beta=1}^n (\frac{q}{v} a_\beta/a_i,\frac{q}{v} a_i^{-1} a_\beta^{-1};q)_\infty} \\    
    Z^\text{vor}_{X,\mathsf{I}_\varepsilon} & = \sum_{d_1,\ldots,d_k \ge 0} x^{|d|} \prod_{1 \le i \neq j \le k} \frac{\left( q^{-d_i+d_j} a_i/a_j;q\right)_{\infty}}{\left(a_i/a_j;q\right)_{\infty}} \frac{\left(t^{-1} q^{-d_i+d_j+1} a_i/a_j;q\right)_{\infty}}{\left( t^{-1} q a_i/a_j;q\right)_{\infty}} \nonumber \\
    & \qquad \times \prod_{1 \le i < j \le k} \frac{(q^{-1} a_i a_j; q^{-1})_{d_i+d_j}}{(u^{-1} q a_i^{-1} a_j^{-1}; q)_{d_i+d_j}} \prod_{i=1}^k \prod_{\beta=1}^n \frac{(v^{-1} q a_\beta/a_i,v^{-1} q a_i^{-1} a_\beta^{-1};q)_{d_i}}{(q^{-1} a_i / a_\beta,q^{-1} a_i a_\beta;q^{-1})_{d_i}} \nonumber \\
    & = \sum_{d_1,\ldots,d_k \ge 0} x^{|d|} \prod_{1 \le i \neq j \le k} \frac{\prod_{\ell = -\infty}^{d_i-d_j} (1 - q^{-\ell} a_i/a_j) (1 - t^{-1} q^{-\ell+1} a_i/a_j)}{\prod_{\ell = -\infty}^0 (1 - q^{-\ell} a_i/a_j) (1 - t^{-1} q^{-\ell+1} a_i/a_j)} \nonumber \\
    & \qquad \times \prod_{1 \le i < j \le k} \prod_{\ell = 1}^{d_i+d_j} \frac{1-q^{-\ell} a_i a_j}{1 - u^{-1} q^{\ell} a_i^{-1} a_j^{-1}} \prod_{i=1}^k \prod_{\beta=1}^n \prod_{\ell = 1}^{d_i} \frac{(1-v^{-1}q^{\ell}a_\beta/a_i)(1-v^{-1} q^{\ell} a_i^{-1} a_\beta^{-1})}{(1-q^{-\ell}a_i/a_\beta)(1-q^{-\ell}a_i a_\beta)} \, .
\end{align}    
\end{subequations}
Then, the vortex partition function $(1-q) Z^\text{vor}_{X,\mathsf{I}_\varepsilon}$ is identified with the localized I-function for $X = T^*\!\operatorname{SG}(k,2n)$~\eqref{eq:I-func_T*SG} under the replacement $q \mapsto q^{-1}$ and the identification of the cotangent weight $q/t = q/u = q/v$.

\subsection{Semiclassical analysis}\label{sec:SG_semiclassical}

\paragraph{$\operatorname{SG}(k,2n)$}
In the limit $q \to 1$, the partition function for $X = \operatorname{SG}(k,2n)$ behaves as follows,
\begin{align}
    Z_X = \frac{1}{k!} \int \exp \left( \frac{1}{\epsilon} \left(\widetilde{W}_X(z) + O(\epsilon) \right) \right) \prod_{i=1}^k \frac{\dd{z_i}}{2\pi\ii z_i} \, ,
\end{align}
The effective twisted superpotential $\widetilde{W}_X$ is given by 
\begin{align}
    \widetilde{W}_X(z) = \log x \sum_{i=1}^k \log z_i + \frac{\kappa}{2} \sum_{i=1}^k \log^2 (-z_i) - \sum_{1 \le i \neq j \le k} \operatorname{Li}_2(z_i/z_j) - \sum_{1 \le i < j \le k} \operatorname{Li}_2(z_i z_j) + \sum_{i=1}^k \sum_{\alpha = 1}^n \left( \operatorname{Li}_2(z_i/a_\alpha) + \operatorname{Li}_2(z_i a_\alpha) \right) \, ,
\end{align}
where we omit the constant term.
The twisted F-term equations are thus given by
\begin{align}
    -x \frac{(-z_i)^{\kappa_\text{eff}}}{z_1 \cdots z_k} \prod_{j (\neq i)}^k (1 - z_i z_j) = \prod_{\alpha = 1}^n (1 - z_i / a_\alpha)(1 - z_i a_\alpha) \, , \qquad i = 1, \ldots, k \, ,
\end{align}
where $\kappa_\text{eff} = \kappa + k$ as before.
Imposing the theta term $\theta(\det z)$ in the integrand, the corresponding twisted superpotential is $\widetilde{W}_{X}^\theta(z) = \widetilde{W}_{X}(z) + \frac{1}{2} \log^2 (-z_1 \cdots z_k) + \frac{\pi^2}{6} $, and hence we have
\begin{align}
    x (-z_i)^{\kappa_\text{eff}} \prod_{j (\neq i)}^k (1 - z_i z_j) = \prod_{\alpha = 1}^n (1 - z_i / a_\alpha)(1 - z_i a_\alpha) \, , \qquad i = 1, \ldots, k \, ,
\end{align}
from which we obtain the CS level window,
\begin{align}
    0 \le \kappa_\text{eff} \le 2n - k + 1 \iff - k \le \kappa \le 2n-2k+1 \, .
\end{align}
In the absence of the theta term $\theta(\det z)$, this window is shifted by one.
These equations generate the ideal defining the equivariant quantum K-ring of $\operatorname{SG}(k,2n)$.
We remark that these relations are slightly different from those obtained by Gu, Mihalcea, Sharpe, and Zou~\cite{Gu:2020zpg}, where they assign the Neumann boundary condition for the anti-symmetric tensor field instead of the Dirichlet boundary condition.

%\paragraph{2d limit}
Writing $z_i = \exp(\beta \xi_i)$, $a_\alpha = \exp (\beta m_\alpha)$, and taking the 2d limit, $\beta \to 0$, keeping $\tilde{x} = - x \beta^{k-2n-1}$ finite, the twisted F-term equation is reduced to
\begin{align}
    \tilde{x} \prod_{j (\neq i)}^k (\xi_i + \xi_j) = \prod_{\alpha = 1}^n (\xi_i^2 - m_\alpha^2) \, ,
\end{align}
which agrees with the equivariant chiral ring relation of $\operatorname{SG}(k,2n)$ given in \cite[eq.~(2.43)]{Gu:2020oeb}.

\paragraph{$\Pi T\!\operatorname{SG}(k,2n)$}

We turn to the semiclassical analysis of the odd tangent bundle, $X = \Pi T\!\operatorname{SG}(k,2n)$.
In this case, the twisted superpotential reads,
\begin{multline}
    \widetilde{W}_X(z) = \log x \sum_{i=1}^k \log z_i - \sum_{1 \le i \neq j \le k} \left( \operatorname{Li}_2(z_i/z_j) - \operatorname{Li}_2(t z_i/z_j) \right) - \sum_{1 \le i < j \le k} \left( \operatorname{Li}_2(z_i z_j) - \operatorname{Li}_2(u  z_i z_j) \right) \\
    + \sum_{\alpha=1}^n \sum_{i=1}^k \left( \operatorname{Li}_2(z_i/a_\alpha) + \operatorname{Li}_2(z_i a_\alpha) - \operatorname{Li}_2(v z_i/a_\alpha) - \operatorname{Li}_2(v z_i a_\alpha) \right) \, ,
\end{multline}
and the corresponding twisted F-term equations are given by
\begin{align}
    x \prod_{j (\neq i)}^k \frac{z_i - t z_j}{t z_i - z_j} \frac{1 - z_i z_j}{1 - u z_i z_j} = \prod_{\alpha = 1}^n \frac{(1 - z_i / a_\alpha) (1 - z_i a_\alpha)}{(1 - v z_i / a_\alpha) (1 - v z_i a_\alpha)} \, , \qquad i = 1, \ldots, k \, .
\end{align}
Set $t = u = v = \exp(\beta \hbar)$, $z_i = \exp(\beta(\xi_i - \hbar/2))$, $a_\alpha = \exp(\beta\lambda_\alpha)$, and $\tilde{x} = x t^{n - \frac{k-1}{2}}$.
Then, obtain
\begin{align}
    \tilde{x} \prod_{j (\neq i)}^k \frac{[\xi_i - \xi_j - \hbar]}{[\xi_i - \xi_j + \hbar]}\frac{[\xi_i + \xi_j - \hbar]}{[\xi_i + \xi_j]} = \prod_{\alpha = 1}^n \frac{[\xi_i - \lambda_\alpha - \hbar/2][\xi_i + \lambda_\alpha - \hbar/2]}{[\xi_i - \lambda_\alpha + \hbar/2][\xi_i + \lambda_\alpha + \hbar/2]} \, .
\end{align}
This equation looks similar, but differs from the twisted F-term equation for 3d SO/Sp gauge theory, which is identified with the BAE for the open spin chain with the specific boundary condition~\cite{Kimura:2020bed}.
The BAE for the spin-$\frac{1}{2}$ open chain with the diagonal boundary condition parametrized by $\{\eta_+,\eta_-\}$ is given by
\begin{align}
    B(\xi_i;\eta_+,\eta_-) \prod_{j (\neq i)}^k \frac{[\xi_i - \xi_j - \hbar]}{[\xi_i - \xi_j + \hbar]}\frac{[\xi_i + \xi_j - \hbar]}{[\xi_i + \xi_j + \hbar]} = \prod_{\alpha = 1}^n \frac{[\xi_i - \lambda_\alpha - \hbar/2][\xi_i + \lambda_\alpha - \hbar/2]}{[\xi_i - \lambda_\alpha + \hbar/2][\xi_i + \lambda_\alpha + \hbar/2]} \, , \label{eq:open_BAE}
\end{align}
where
\begin{align}
    B(\xi;\eta_+,\eta_-) = \frac{[\xi + \eta_+ - \hbar/2]}{[\xi - \eta_+ + \hbar/2]} \frac{[\xi + \eta_- - \hbar/2]}{[\xi - \eta_- + \hbar/2]} \label{eq:open_BAE_B} \, .
\end{align}
Compared to this expression, (i) the contribution of the anti-symmetric tensor field does not reproduce that appearing in the BAE, and (ii) the boundary contribution is simply given by the constant $\tilde{x}$.

For (i), one may add another hypermultiplet giving the following factor in the Coulomb branch formula, 
\begin{align}
    \prod_{1 \le i < j \le k} \frac{(z_i z_j;q)_\infty}{(t z_i z_j;q)_\infty} \frac{(t z_i z_j;q)_\infty}{(t^2 z_i z_j;q)_\infty} = \prod_{1 \le i < j \le k} \frac{(z_i z_j;q)_\infty}{(t^2 z_i z_j;q)_\infty} \, .
\end{align}
Then, the resulting twisted F-term equation reproduces the corresponding part of the BAE, $\prod_{j(\neq i)}^k [\xi_i + \xi_j - \hbar]/[\xi_i + \xi_j + \hbar]$ with $\tilde{x} = x t^{n - k + 1}$.
In this case, we may have additional residue contributions to the contour integral due to this hypermultiplet factor. 

For (ii), since we obtain this BAE from the $\mathrm{U}(k)$ gauge theory, we have the FI parameter, which plays a role of the twist parameter for the periodic spin chain.
For the SO/Sp gauge theory, however, there is no FI term, so that we obtain the open spin chain as discussed in~\cite{Kimura:2020bed}.
We may put $\tilde{x} = 1$, i.e., $x = t^{k-1-n}$.
If $|t^{k-1-n}| < 1$, the Higgs branch formula of the partition function still converges. 

\section{Orthogonal Grassmannian GLSM}\label{sec:OG}

Let $X = \operatorname{OG}(k,2n+\chi)$ be the orthogonal grassmannian with $\chi \in \{0,1\}$.
In this case, the GLSM is given by the $\mathrm{U}(k)$ gauge theory with $2n+\chi$ fundamental chiral multiplets and one chiral multiplet in the dual symmetric representation, $\operatorname{Sym}^2 (\mathbb{C}^k)^\vee$.
The contour integral formula for the partition function on $\mathbb{D}^2 \times \mathbb{S}^1$ is given by
\begin{align}
    Z_X = \frac{1}{k!} \int \prod_{1 \le i \neq j \le k} \left(\frac{z_i}{z_j};q\right)_\infty \prod_{1 \le i \le j \le k} (z_i z_j;q)_\infty \prod_{i=1}^k \frac{x^{-\log_q z_i} \theta(z_i)^\kappa}{(z_i;q)_\infty^\chi \prod_{\beta=1}^n (z_i/a_\beta,z_i a_\beta;q)_\infty} \frac{\dd{z_i}}{2\pi\ii z_i} \, . \label{eq:OG_int}
\end{align}
As in the case of $\operatorname{SG}(k,2n)$, we assign flavor fugacities in pairs, and we impose the Dirichlet boundary condition for the symmetric tensor field, while the Neumann boundary condition is assigned for the fundamental chiral fields.

We set $|x| < 1$.
For $\chi = 1$, additional poles from  $(z_i;q)_\infty$ in the denominator will be canceled by the diagonal contribution of the symmetric tensor, $(z_i^2;q)_\infty$.
We can apply the same residue classification as in the case of $\operatorname{SG}(k,2n)$ parametrized by $\mathsf{I}_\varepsilon = (\mathsf{I},\varepsilon)$ for the both cases $\chi = 0, 1$, hence we have $2^k \binom{n}{k}$ distinct contours, which agrees with the Euler characteristics of $\operatorname{OG}(k,2n+\chi)$.
Evaluating the integral with the contour $\mathcal{C}_{\mathsf{I}_\varepsilon}$, we obtain the Higgs branch formula of the partition function for $X = \operatorname{OG}(k,2n+\chi)$,
\begin{align}
    Z_{X,\mathsf{I}_\varepsilon} & = Z^\text{per}_{X,\mathsf{I}_\varepsilon} Z^\text{vor}_{X,\mathsf{I}_\varepsilon}
\end{align}
where
\begin{subequations}
\begin{align}
    Z^\text{per}_{X,\mathsf{I}_\varepsilon} & = \prod_{1 \le i \neq j \le k} (a_i/a_j;q)_\infty \prod_{1 \le i \le j \le k} (a_i a_j;q)_\infty \prod_{i=1}^k \frac{x^{-\log_q a_i} \theta(a_i)^\kappa}{(q;q)_\infty (a_i;q)_\infty^\chi \prod_{\beta (a_\beta \neq a_i)}^n (a_i/a_\beta;q)_\infty \prod_{\beta (a_\beta^{-1} \neq a_i)}^n (a_i a_\beta;q)_\infty} \\
    Z^\text{vor}_{X,\mathsf{I}_\varepsilon} & = \sum_{d_1,\ldots,d_k \ge 0} x^{|d|} \prod_{1 \le i \neq j \le k} \frac{\left( q^{-d_i+d_j} a_i/a_j;q\right)_{\infty}}{\left(a_i/a_j;q\right)_{\infty}} \prod_{1 \le i \le j \le k} (q^{-1}a_ia_j;q^{-1})_{d_i+d_j} \nonumber \\
    & \qquad \times \prod_{i=1}^k \frac{q^{-\kappa\binom{d_i+1}{2}} (-a_i)^{\kappa d_i}}{(q^{-1}a_i;q^{-1})_{d_i}^\chi \prod_{\beta=1}^n (q^{-1} a_i / a_\beta,q^{-1} a_i a_\beta;q^{-1})_{d_i}} \, .
\end{align}
\end{subequations}
The vortex partition function $(1-q)Z^\text{vor}_{X,\mathsf{I}_\varepsilon}$ is identified with the equivariant I-function of $X = \operatorname{OG}(k,2n+\chi)$ with the level structure with respect to $V^\vee$ \eqref{eq:I-func_OG_V*} under the replacement $q \mapsto q^{-1}$.
We may rewrite it as follows,
\begin{multline}
    Z^\text{vor}_{X,\mathsf{I}_\varepsilon}  = \sum_{d_1,\ldots,d_k \ge 0} x^{|d|} \prod_{1 \le i \neq j \le k} \frac{\prod_{\ell = -\infty}^{d_i - d_j} (1 - q^{-\ell} a_i/a_j)}{\prod_{\ell = -\infty}^{0} (1 - q^{-\ell} a_i/a_j)} \prod_{1 \le i \le j \le k} \prod_{\ell = 1}^{d_i + d_j} (1 - q^{-\ell} a_i a_j) \\ \times \prod_{i=1}^k \frac{q^{-\kappa \binom{d_i+1}{2}} (-a_i)^{\kappa d_i}}{\prod_{\ell=1}^{d_i} \left( (1 - q^{-\ell} a_i)^\chi \prod_{\beta = 1}^n  (1 - q^{-\ell} a_i/a_\beta)(1 - q^{-\ell} a_i a_\beta) \right)} \, .
\end{multline}
As in the case of $\operatorname{SG}(k,2n)$, we do not have the factorization of the partition function due to the additional contribution of the symmetric tensor field. 

\subsection{$\operatorname{OG}(1,2n+\chi) = Q^{2n+\chi - 2}$}\label{sec:Q}

The case $k = 1$ of the orthogonal grassmannian $X=\operatorname{OG}(1,2n+\chi)$ is identified with the quadric of dimension $2n+\chi - 2$ denoted by $Q^{2n+\chi - 2}$.
The integral formula is given by
\begin{align}
    Z_X = \int \frac{(z^2;q)_\infty x^{-\log_q z} \theta(z)^\kappa}{(z;q)_\infty^\chi \prod_{\beta=1}^n (z/a_\beta,z a_\beta;q)_\infty} \frac{\dd{z}}{2 \pi \ii z} \, .
\end{align}
In this case, there are $2n$ distinct contours, $\{\mathcal{C}_{\alpha}^\varepsilon\}_{\alpha = 1, \ldots, n,\varepsilon = \pm}$, providing $2n$ independent solutions to the following $q$-difference equation,
\begin{align}
    \left[ q^{\kappa D} (1-q^{-D})^\chi \prod_{\alpha=1}^n (1 - q^{-D}/a_\alpha)(1 - q^{-D} a_\alpha) - (-1)^\kappa x (1 - q^{-2D-1}) (1 - q^{-2D-2}) \right] y(x) = 0 \, .
\end{align}
Evaluating the integral, we obtain the Higgs branch formula of the partition function,
\begin{align}
    Z_{X,\alpha_{\pm}} & = \int_{\mathcal{C}_\alpha^\pm} \frac{(z^2;q)_\infty x^{-\log_q z} \theta(z)^\kappa}{(z;q)_\infty^\chi \prod_{\beta=1}^n (z/a_\beta,z a_\beta;q)_\infty}  \frac{\dd{z}}{2 \pi \ii z} \nonumber \\
    & = \frac{(a_\alpha^{\pm 2};q)_\infty x^{-\log_q a_\alpha^{\pm}} \theta(a_\alpha^{\pm})^\kappa}{(q;q)_\infty (a_\alpha^\pm;q)_\infty^\chi \prod_{\beta (a_\beta \neq a_\alpha^\pm)}^n (a_\alpha^\pm/a_\beta;q)_\infty \prod_{\beta (a_\beta^{-1} \neq a_\alpha^\pm)}^n (a_\alpha^\pm a_\beta;q)_\infty} \nonumber \\
    & \qquad \times \sum_{d \ge 0} \frac{x^d q^{-\kappa \binom{d+1}{2}} (-a_\alpha)^{\pm \kappa d} (q^{-1} a_\alpha^{\pm 2};q^{-1})_{2d}}{(q^{-1} a_\alpha^\pm;q^{-1})_d^\chi \prod_{\beta = 1}^n (q^{-1} a_\alpha^{\pm}/a_\beta, q^{-1} a_\alpha^\pm a_\beta;q^{-1})_d} \, .
\end{align}
Although there is a $q$-shifted factorial of degree $2d$ in the numerator, it can be written as a $q$-hypergeometric function using the identity, $(c;q)_{2d} = (\sqrt{c},-\sqrt{c},\sqrt{cq},-\sqrt{cq};q)_d$.
The vortex partition function is then given by
\begin{align}
    Z^\text{vor}_{X,\alpha^\pm} = \sum_{d \ge 0} \frac{x^d q^{-\kappa \binom{d+1}{2}} (-a_\alpha)^{\pm \kappa d} \prod_{\ell = 1}^{2d} (1-q^{-\ell} a_\alpha^{\pm 2})}{\prod_{\ell=1}^d (1-q^{-\ell} a_\alpha^\pm)^\chi \prod_{\beta = 1}^n (1-q^{-\ell} a_\alpha^{\pm}/a_\beta) (1 - q^{-\ell} a_\alpha^\pm a_\beta)} \, ,
\end{align}
where $(1-q) Z^\text{vor}_{X,\alpha^\pm}$ is identified with the localized I-function for $\operatorname{OG}(1,2n+\chi)$ with the level structure with respect to $V^\vee$ under the replacement $q \mapsto q^{-1}$.

\subsubsection{$\operatorname{OG}(1,3) \cong \mathbb{P}^1$}\label{sec:Q1}

We consider $\operatorname{OG}(1,3) = Q^1$, which is isomorphic to $\mathbb{P}^1$.
Let
\begin{align}
    Q^1=\left\{[X:Y:Z]\in\mathbb{P}^2 \mid XY-Z^2=0\right\}
\end{align}
be the standard realization of $Q^1$ as a smooth quadric in $\mathbb{P}^2$. The isomorphism $Q^1\cong\mathbb{P}^1$ is given by the degree-two Veronese embedding
\begin{align}
\nu:\mathbb{P}^1\longrightarrow\mathbb{P}^2, \qquad [s:t]\longmapsto [s^2:t^2:st] \, .
\end{align}
Its image is precisely $Q^1$, since $(s^2)(t^2)-(st)^2=0$. 
Therefore, $\nu$ identifies $\mathbb{P}^1$ with $Q^1$ and satisfies $\nu^{*}\mathcal{O}_{Q^1}(1) = \mathcal{O}_{\mathbb{P}^1}(2)$. 

Next, consider the torus action on $\mathbb{P}^1$, $[s:t]\longmapsto[bs:b^{-1}t]$. 
Under the Veronese embedding, this action is mapped to $[s^2:t^2:st] \longmapsto [b^2s^2:b^{-2}t^2:st]$. 
Therefore, the induced torus action on $Q^1$ has weights $(2,-2,0)$, and it is convenient to parametrize the corresponding equivariant weight by $a = b^2$.
Then, we consider the following localized I-functions for $\kappa = 0$,
\begin{subequations}
\begin{align}
    I_{\mathbb{P}^1, \mathsf{I}_\varepsilon} & = (1-q) \sum_{d \ge 0} \frac{x^d}{(q,qb^2;q)_d} = (1-q) \sum_{d \ge 0} \frac{x^d}{(q,qa;q)_d} \, , \\
    I_{Q^1, \mathsf{I}_\varepsilon} & = (1-q) \sum_{d \ge 0} \frac{x^{d/2} (q a^2;q)_{2d}}{(q,qa,qa^2;q)_d} = (1-q) \sum_{d \ge 0} \frac{x^{d/2}(aq^{1/2},-aq^{1/2},aq,-aq;q)_d}{(q,qa,qa^2;q)_d} % = (1-q) \sum_{d \ge 0} \frac{x^{d/2} (q^{d+1} a^2;q)_{d}}{(q,qa;q)_d} \, ,
\end{align}
\end{subequations}
where the $x$-variable for $Q^1$ is set to $x^{1/2}$ based on the curve class correspondence shown in~\eqref{equation: change of Novikov variables}.
Clearly the I-function for $Q^1$ has a problem since the number of the $q$-shifted factorials of the numerator is larger than that of the denominator.
We note that the localized I-function $I_{Q^1, \mathsf{I}_\varepsilon}$ above is never the localization of the small J-function $J_{Q^1, \mathsf{I}_\varepsilon}$. The latter is characterized by satisfying
\begin{itemize}
    \item[(1)] $\displaystyle \lim_{q\rightarrow \infty}\left(J_{Q^1, \mathsf{I}_\varepsilon}(q) - (1-q)\right) = 0$, and
    \item[(2)] $J_{Q^1, \mathsf{I}_\varepsilon}(q)$ is regular at $q = 0$.
\end{itemize}
For large enough $d$, the $x^d$-term in $I_{Q^1, \mathsf{I}_\varepsilon}$ is a rational function in $q$ whose numerator has larger degree than the denominator, and thus $I_{Q^1, \mathsf{I}_\varepsilon}$ does not satisfy (1).

More generally, the localized I-functions with level structures (see \eqref{eq:I-func_OG_level} for details) are given by
\begin{equation}
%\begin{split}
I_{Q^1, \mathsf{I}_\varepsilon}^{(V,\kappa)} = (1-q) \sum_{d \ge 0} \frac{x^{d/2} (q a^2;q)_{2d} \left(a^dq^{\binom{d}{2}}\right)^\kappa}{(q,qa,qa^2;q)_d} \ , \qquad
I_{Q^1, \mathsf{I}_\varepsilon}^{(V^\vee,\kappa)} = (1-q) \sum_{d \ge 0} \frac{x^{d/2} (q a^2;q)_{2d}\left(a^dq^{\binom{d+1}{2}}\right)^\kappa}{(q,qa,qa^2;q)_d} \ .
%\end{split}
\end{equation}
Neither of the two functions can be the corresponding small J-functions for any $\kappa$. Indeed, the CS level window is empty for $Q^1 = \operatorname{OG}(1,3)$ as we will see in \eqref{eq:CS_window_OG}.
For $\kappa \geq 0$, the numerators will have larger degrees than the denominators do so $I_{Q^1, \mathsf{I}_\varepsilon}^{(V,\kappa)}$ and $I_{Q^1, \mathsf{I}_\varepsilon}^{(V^\vee,\kappa)}$ do not satisfy (1); for $\kappa < 0$, there will be poles at $q=0$ so they do not satisfy (2).

\subsubsection{$\operatorname{OG}(1,4) \cong \mathbb{P}^1 \times \mathbb{P}^1$}

We then consider another example, $\operatorname{OG}(1,4) = Q^2 \cong \mathbb{P}^1 \times \mathbb{P}^1$.
Choose homogeneous coordinates of $Q^2$ as follows,
\begin{equation}
Q^2=
\left\{
[X_{00}:X_{01}:X_{10}:X_{11}]
\in\mathbb{P}^3
\;\middle|\;
X_{00}X_{11}-X_{01}X_{10}=0
\right\}.
\end{equation}
Then, the Segre embedding of $\mathbb{P}^1\times\mathbb{P}^1$ into $\mathbb{P}^3$ is given by
\begin{equation}
\begin{aligned}
\sigma:\ \mathbb{P}^1\times\mathbb{P}^1
\longrightarrow \mathbb{P}^3 \, , \quad
([s_0:s_1],[t_0:t_1]) \longmapsto [s_0t_0:s_0t_1:s_1t_0:s_1t_1]
\end{aligned}
\end{equation}
where the image is clearly $Q^2$. 
Now we have the pullback of the ambient hyperplane bundle, $\sigma^*\mathcal{O}_{Q^2}(1) = \mathcal{O}_{\mathbb{P}^1\times\mathbb{P}^1}(1,1)$.

Let the torus act on the first and second projective lines by
\begin{equation}
[s_0:s_1]
\longmapsto
[u\,s_0:u^{-1}s_1] \, , \qquad 
[t_0:t_1]
\longmapsto
[v\,t_0:v^{-1}t_1] \, .
\end{equation}
Therefore, their equivariant weights are obtained simply by multiplying
the weights of the two factors:
\begin{equation}
\begin{array}{ccccc} \toprule
\text{coordinates}&X_{00}&X_{01}&X_{10}&X_{11} \\ \midrule
\text{weights}&uv&u/v&v/u&(uv)^{-1} \\\bottomrule
\end{array}.
\end{equation}
On the orthogonal grassmannian side, we assign the weights, $(a_1,a_2,a_1^{-1},a_2^{-1})$.
After choosing the ordering in which $X_{00}$ ($X_{01}$) carries the weight
$a_1$ ($a_2$, resp.), the Segre map gives the identification,
\begin{equation}
(a_1,a_2,a_1^{-1},a_2^{-1})
=
(uv,u/v,(uv)^{-1},v/u) \, .
\end{equation}

We can check this correspondence at the level of the I-function. 
We consider the localized I-functions with $\kappa = 0$,
\begin{subequations}
\begin{align}
    I_{\mathbb{P}^1 \times \mathbb{P}^1, \mathsf{I}_\varepsilon} & = (1-q)\sum_{d,d' \ge 0} \frac{x^{d + d'}}{(q,q a_1 a_2;q)_d (q,q a_1/a_2;q)_{d'}} \, , \\
    I_{Q^2, \mathsf{I}_\varepsilon} & = (1-q)\sum_{d\ge0} \frac{x^d (q a_1^2;q)_{2d}}{(q,qa_1^2,q a_1a_2,q a_1/a_2;q)_d} = (1-q)\sum_{d\ge0} \frac{x^d (q^{d+1}a_1^2;q)_d}{(q,q a_1a_2,q a_1/a_2;q)_d} \, .
\end{align}
\end{subequations}
The I-function for $\mathbb{P}^1 \times \mathbb{P}^1$ is a normalized product of two I-functions of the equivariant parameters $a_1 a_2$ and $a_1/a_2$ with the same Novikov variable, i.e., the FI parameter $x$.
Then, by Lemma~\ref{lem:product_formula}, we obtain the following relation,
\begin{align}
    I_{\mathbb{P}^1 \times \mathbb{P}^1, \mathsf{I}_\varepsilon} = \frac{1}{(x;q)_\infty} I_{Q^2, \mathsf{I}_\varepsilon} \, .
\end{align}
This discrepancy is interpreted as a consequence of the
the correspondence between the curve classes of $Q^2$ and $\mathbb{P}^3$~\eqref{equation: change of Novikov variables}.
This discrepancy can be rewritten as
\begin{equation}
\frac{1}{(x;q)_\infty} = \exp \left( \sum_{k \ge 1} \frac{x^k}{k(1-q^k)} \right) \, .
\end{equation}
Such factor is expected in the following sense. By \cite{Givental:perm2}, the localized I-functions $I_{\mathbb{P}^1 \times \mathbb{P}^1, \mathsf{I}_\varepsilon}$ and $I_{Q^2, \mathsf{I}_\varepsilon}$ lie both in the range of the big J-function of point target space, which by \cite{Givental:perm8} is preserved by multiplication of the so-called $q$-string flow $\exp \left( \sum_{k \ge 1} \frac{\varepsilon^k}{k(1-q^k)} \right)$ for constant $\varepsilon$. Here we simply take $\varepsilon = x$.

\paragraph{Semiclassical analysis}

We compare the semiclassical behavior of $Q^2$ and $\mathbb{P}^1 \times \mathbb{P}^1$.
For $X=\mathbb{P}^1 \times \mathbb{P}^1$, we consider a sum of two effective twisted superpotentials of $\mathbb{P}^1$,
\begin{align}
    \widetilde{W}_{\mathbb{P}^1 \times \mathbb{P}^1} = \log x \log z_1 + \operatorname{Li}_2(z_1/u) + \operatorname{Li}_2(z_1 u) + \log x \log z_2 + \operatorname{Li}_2(z_2/v) + \operatorname{Li}_2(z_2 v) \, . 
\end{align}
The corresponding twisted F-term equations are given by
\begin{align}
    x = (1 - z_1 / u)(1 - z_1 u) \, , \qquad 
    x = (1 - z_2 / v)(1 - z_2 v) \, ,
\end{align}
from which we obtain a single equation for $z = z_1 z_2$,
\begin{multline}
    z^4 - (u+u^{-1})(v+v^{-1}) z^3 + (1-x) \left[(u+u^{-1})^2+(v+v^{-1})^2-2(1-x)\right] z^2 \\ - (1-x)^2(u+u^{-1})(v+v^{-1})z + (1-x)^4 = 0 \, . \label{eq:F-term_eq_P1P1}
\end{multline}

For $X=Q^2$, the twisted F-term equation is given by (see Section~\ref{sec:OG_semiclassical} for details)
\begin{align}
    x (1 - w^2)^2 = (1-w/a_1)(1-wa_1)(1-w/a_2)(1-wa_2) \, .
\end{align}
Putting $a_1 = uv$, $a_2 = u/v$, we may rewrite it as follows,
\begin{multline}
    (1-x) w^4 - (u+u^{-1})(v+v^{-1}) w^3 + \left[ (u+u^{-1})^2 + (v+v^{-1})^2 - 2(1-x) \right] w^2 - (u+u^{-1}) (v+v^{-1}) w + 1-x = 0 \, ,
\end{multline}
which is equivalent to \eqref{eq:F-term_eq_P1P1} under the identification $z = (1-x)w$.

\subsection{$T^*\!\operatorname{OG}(k,2n+\chi)$}\label{sec:TOG}

We consider the $\mathcal{N} = 4$ uplift of the orthogonal grassmannian GLSM, which describes the cotangent bundle $T^*\!\operatorname{OG}(k,2n+\chi)$.
We consider the following Coulomb branch integral formula,
\begin{multline}
    Z_X = \frac{1}{k!} \int \prod_{1 \le i \neq j \le k} \frac{(z_i/z_j;q)_\infty}{(tz_i/z_j;q)_\infty} \prod_{1 \le i \le j \le k} \frac{(z_i z_j;q)_\infty}{(u z_i z_j;q)_\infty} \\
    \times \prod_{\substack{i=1,\ldots,k \\ \alpha = 1, \ldots, n}} \frac{(v z_i/a_\alpha,v z_i a_\alpha;q)_\infty}{(z_i/a_\alpha,z_i a_\alpha;q)_\infty} \prod_{i=1}^k x^{-\log_q z_i} \frac{(vz_i;q)_\infty^\chi}{(z_i;q)_\infty^\chi} \frac{\dd{z_i}}{2\pi\ii z_i} \, .
\end{multline}
Under the condition $t = u = v$, there is no additional residue contribution compared to the previous case, so that we may apply the same classification of the integration contours as before.
We have
\begin{align}
    Z_{X,\mathsf{I}_\varepsilon} = Z^\text{per}_{X,\mathsf{I}_\varepsilon} Z^\text{vor}_{X,\mathsf{I}_\varepsilon} \, ,
\end{align}
where
\begin{subequations}
\begin{align}
    Z^\text{per}_{X,\mathsf{I}_\varepsilon} & = \prod_{1 \le i \neq j \le k} \frac{(a_i/a_j;q)_\infty}{(ta_i/a_j;q)_\infty} \prod_{1 \le i \le j \le k} \frac{(a_i a_j;q)_\infty}{(u a_i a_j;q)_\infty} \prod_{i=1}^k \frac{x^{-\log_q a_i} (v a_i;q)_\infty^\chi \prod_{\beta=1}^n (v a_i/a_\beta,v a_i a_\beta;q)_\infty}{(q;q)_\infty (a_i;q)_\infty^\chi \prod_{\beta (a_\beta \neq a_i)}^n (a_i/a_\beta;q)_\infty \prod_{\beta (a_\beta^{-1} \neq a_i)}^n (a_i a_\beta;q)_\infty} \\    
    Z^\text{vor}_{X,\mathsf{I}_\varepsilon} & = \sum_{d_1,\ldots,d_k \ge 0} x^{|d|} \prod_{1 \le i \neq j \le k} \frac{\left( q^{-d_i+d_j} a_i/a_j;q\right)_{\infty}}{\left(t q^{-d_i+d_j} a_i/a_j;q\right)_{\infty}} \frac{\left( t a_i/a_j;q\right)_{\infty}}{\left(a_i/a_j;q\right)_{\infty}} \prod_{1 \le i \le j \le k} \frac{(q^{-1}a_ia_j;q^{-1})_{d_i+d_j}}{(uq^{-1}a_ia_j;q^{-1})_{d_i+d_j}} \nonumber \\
    & \qquad \times \prod_{i=1}^k \frac{(vq^{-1}a_i;q^{-1})_{d_i}^\chi}{(q^{-1}a_i;q^{-1})_{d_i}^\chi} \prod_{\beta=1}^n \frac{(v q^{-1} a_i/a_\beta,vq^{-1} a_i a_\beta;q^{-1})_{d_i}}{(q^{-1} a_i / a_\beta,q^{-1} a_i a_\beta;q^{-1})_{d_i}} \, .
\end{align}    
\end{subequations}
We identify $(1-q)Z^\text{vor}_{X,\mathsf{I}_\varepsilon}$ with the localized I-function for $X = \Pi T\!\operatorname{OG}(k,2n+\chi)$~\eqref{eq:I-func_TSG} under the map $q \mapsto q^{-1}$ and setting $t = u = v$, 
\begin{multline}
    Z^\text{vor}_{X,\mathsf{I}_\varepsilon}  = \sum_{d_1,\ldots,d_k \ge 0} x^{|d|} \prod_{1 \le i \neq j \le k} \frac{\prod_{\ell = -\infty}^{d_i - d_j} (1 - q^{-\ell} a_i/a_j)}{\prod_{\ell = -\infty}^{0} (1 - q^{-\ell} a_i/a_j)} \frac{\prod_{\ell = -\infty}^{0} (1 - t q^{-\ell} a_i/a_j)}{\prod_{\ell = -\infty}^{d_i - d_j} (1 - t q^{-\ell} a_i/a_j)} \\ \times \prod_{1 \le i \le j \le k} \prod_{\ell = 1}^{d_i + d_j} \frac{1 - q^{-\ell} a_i a_j}{1 - u q^{-\ell} a_i a_j}\prod_{i=1}^k \prod_{\ell=1}^{d_i} \left( \frac{1-v q^{-\ell} a_i}{1-q^{-\ell} a_i} \right)^\chi \prod_{\beta = 1}^n \frac{(1 - v q^{-\ell} a_i/a_\beta)(1 - v q^{-\ell} a_i a_\beta)}{ (1 - q^{-\ell} a_i/a_\beta)(1 - q^{-\ell} a_i a_\beta)} \, .
\end{multline}
In order to obtain the I-function of the cotangent bundle $X = T^*\!\operatorname{OG}(k,2n+\chi)$, as discussed in \autoref{sec:TGr}, we assign different boundary conditions, which yield the following integral, 
\begin{multline}
    Z_X = \frac{1}{k!} \int \prod_{1 \le i \neq j \le k} \left(\frac{z_i}{z_j},\frac{q}{t} \frac{z_i}{z_j};q\right)_\infty \prod_{1 \le i \le j \le k} \left(z_i z_j,\frac{q}{u}z_i^{-1}z_j^{-1};q\right)_\infty \\
    \times \prod_{i=1}^k \frac{x^{-\log_q z_i}}{(z_i,\frac{q}{v}z_i^{-1};q)_\infty^\chi \prod_{\alpha=1}^n (z_i/a_\alpha,z_i a_\alpha,\frac{q}{v} a_\alpha/z_i,\frac{q}{v} z_i^{-1} a_\alpha^{-1};q)_\infty} \frac{\dd{z_i}}{2\pi\ii z_i} \, .
\end{multline}
Applying the same residue analysis, we obtain
\begin{subequations}
\begin{align}
    Z^\text{per}_{X,\mathsf{I}_\varepsilon} & = \prod_{1 \le i \neq j \le k} \left(a_i/a_j,\frac{q}{t} a_i/a_j;q\right)_\infty \prod_{1 \le i \le j \le k} \left(a_i a_j,\frac{q}{u}a_i^{-1}a_j^{-1};q\right)_\infty \nonumber \\
    & \qquad \times \prod_{i=1}^k \frac{x^{-\log_q a_i}}{(q;q)_\infty (a_i,\frac{q}{v}a_i^{-1};q)_\infty^\chi \prod_{\beta (a_\beta \neq a_i)}^n (a_i/a_\beta;q)_\infty \prod_{\beta (a_\beta^{-1} \neq a_i)}^n (a_i a_\beta;q)_\infty \prod_{\beta=1}^n (\frac{q}{v} a_\beta/a_i,\frac{q}{v} a_i^{-1} a_\beta^{-1};q)_\infty} \\    
    Z^\text{vor}_{X,\mathsf{I}_\varepsilon} & = \sum_{d_1,\ldots,d_k \ge 0} x^{|d|} \prod_{1 \le i \neq j \le k} \frac{\left( q^{-d_i+d_j} a_i/a_j;q\right)_{\infty}}{\left(a_i/a_j;q\right)_{\infty}} \frac{\left(t^{-1} q^{-d_i+d_j+1} a_i/a_j;q\right)_{\infty}}{\left( t^{-1} q a_i/a_j;q\right)_{\infty}} \nonumber \\
    & \qquad \times \prod_{1 \le i \le j \le k} \frac{(q^{-1}a_ia_j;q^{-1})_{d_i+d_j}}{(u^{-1}qa_i^{-1}a_j^{-1};q)_{d_i+d_j}} \prod_{i=1}^k \frac{(v^{-1} q a_i^{-1};q)_{d_i}^\chi}{(q^{-1} a_i;q^{-1})_{d_i}^\chi} \prod_{\beta=1}^n \frac{(v^{-1} q a_\beta/a_i,v^{-1} q a_i^{-1} a_\beta^{-1};q)_{d_i}}{(q^{-1} a_i / a_\beta,q^{-1} a_i a_\beta;q^{-1})_{d_i}} \nonumber \\
    & = \sum_{d_1,\ldots,d_k \ge 0} x^{|d|} \prod_{1 \le i \neq j \le k} \frac{\prod_{\ell = -\infty}^{d_i-d_j} (1 - q^{-\ell} a_i/a_j) (1 - t^{-1} q^{-\ell+1} a_i/a_j)}{\prod_{\ell = -\infty}^0 (1 - q^{-\ell} a_i/a_j) (1 - t^{-1} q^{-\ell+1} a_i/a_j)} \nonumber \\
    & \qquad \times \prod_{1 \le i \le j \le k} \prod_{\ell = 1}^{d_i+d_j} \frac{1-q^{-\ell} a_i a_j}{1 - u^{-1} q^{\ell} a_i^{-1} a_j^{-1}} \prod_{i=1}^k \prod_{\ell = 1}^{d_i} \left( \frac{1-v^{-1} q^{\ell}a_i^{-1}}{1-q^{-\ell}a_i} \right)^\chi \prod_{\beta=1}^n  \frac{(1-v^{-1}q^{\ell}a_\beta/a_i)(1-v^{-1} q^{\ell} a_i^{-1} a_\beta^{-1})}{(1-q^{-\ell}a_i/a_\beta)(1-q^{-\ell}a_i a_\beta)} \, .
\end{align}    
\end{subequations}
We identify $(1-q) Z^\text{vor}_{X,\mathsf{I}_\varepsilon}$ with the localized I-function for $X = T^*\!\operatorname{OG}(k,2n+\chi)$~\eqref{eq:I-func_T*OG} under the replacement $q \mapsto q^{-1}$ and the identification of the cotangent weight $q/t = q/u = q/v$. 

\subsection{Semiclassical analysis}\label{sec:OG_semiclassical}

\paragraph{$\operatorname{OG}(k,2n+\chi)$}

From the integral formula~\eqref{eq:OG_int}, we obtain the effective twisted superpotential for $X = \operatorname{OG}(k,2n+\chi)$ as follows,
\begin{multline}
    \widetilde{W}_X(z) = \log x \sum_{i=1}^k \log z_i + \frac{\kappa}{2} \sum_{i=1}^k \log^2 (-z_i) - \sum_{1 \le i \neq j \le k} \operatorname{Li}_2(z_i/z_j) - \sum_{1 \le i \le j \le k} \operatorname{Li}_2(z_i z_j) \\ + \sum_{i=1}^k \sum_{\alpha = 1}^n \left( \operatorname{Li}_2(z_i/a_\alpha) + \operatorname{Li}_2(z_i a_\alpha) \right) + \chi \sum_{i=1}^n \operatorname{Li}_2(z_i) \, ,
\end{multline}
where we omit the constant term, and the twisted F-term equations are thus given by
\begin{align}
    -x \frac{(-z_i)^{\kappa_\text{eff}}}{z_1 \cdots z_k} (1 - z_i^2)^2 \prod_{j (\neq i)}^k (1 - z_i z_j) = (1 - z_i)^\chi \prod_{\alpha = 1}^n (1 - z_i / a_\alpha)(1 - z_i a_\alpha) \, , \qquad i = 1, \ldots, k \, .
\end{align}
Imposing the theta term $\theta(\det z;q)$ in the integrand, the corresponding twisted superpotential is given by $\widetilde{W}_{X}^\theta(z) = \widetilde{W}_{X}(z) + \frac{1}{2} \log^2 (-z_1 \cdots z_k) + \frac{\pi^2}{6}$, and hence we have
\begin{align}
    x (-z_i)^{\kappa_\text{eff}} (1 - z_i^2)^2 \prod_{j (\neq i)}^k (1 - z_i z_j) = (1 - z_i)^\chi \prod_{\alpha = 1}^n (1 - z_i / a_\alpha)(1 - z_i a_\alpha) \, , \qquad i = 1, \ldots, k \, ,
\end{align}
from which we obtain the CS level window,
\begin{align}
    0 \le \kappa_\text{eff} \le 2n + \chi - k - 3 \iff - k \le \kappa \le 2n - 2k + \chi - 3 \, . \label{eq:CS_window_OG}
\end{align}
In the absence of the theta term $\theta(\det z)$, this window is shifted by one.
These equations generate the ideal defining the equivariant quantum K-ring of $\operatorname{OG}(k,2n+\chi)$.
We remark that, from this analysis, we see that there is no CS window for $\operatorname{OG}(1,3)$ and $\operatorname{OG}(2,4)$.
See \autoref{sec:Q1} for details of the case $\operatorname{OG}(1,3) = Q^1$.
Writing $z_i = \exp(\beta \xi_i)$, $a_\alpha = \exp (\beta m_\alpha)$, and taking the 2d limit, $\beta \to 0$, keeping $\tilde{x} = (-1)^{\chi+1} 4 x \beta^{k-2n-\chi+1}$ finite, the twisted F-term equation is reduced to
\begin{align}
    \tilde{x} \prod_{j (\neq i)}^k (\xi_i + \xi_j) = \xi_i^{\chi-2} \prod_{\alpha = 1}^n (\xi_i^2 - m_\alpha^2) \, ,
\end{align}
which agrees with the equivariant chiral ring relation of $\operatorname{OG}(k,2n+\chi)$ given in \cite[\S3.2]{Gu:2020oeb}.

\paragraph{$\Pi T\!\operatorname{OG}(k,2n+\chi)$}

For the odd tangent bundle theory $X = \Pi T\!\operatorname{OG}(k,2n+\chi)$, the effective twisted superpotential reads,
\begin{multline}
    \widetilde{W}_X(z) = \log x \sum_{i=1}^k \log z_i - \sum_{1 \le i \neq j \le k} \left( \operatorname{Li}_2(z_i/z_j) - \operatorname{Li}_2(t z_i/z_j) \right) - \sum_{1 \le i \le j \le k} \left( \operatorname{Li}_2(z_i z_j) - \operatorname{Li}_2(u  z_i z_j) \right) \\
    + \sum_{i=1}^k \sum_{\alpha=1}^n  \left( \operatorname{Li}_2(z_i/a_\alpha) + \operatorname{Li}_2(z_i a_\alpha) - \operatorname{Li}_2(v z_i/a_\alpha) - \operatorname{Li}_2(v z_i a_\alpha) \right) + \chi \sum_{i=1}^k \left( \operatorname{Li}_2(z_i) - \operatorname{Li}_2(v z_i) \right) \, ,
\end{multline}
and the corresponding twisted F-term equations are given by
\begin{align}
    x \left( \frac{1 - z_i^2}{1 - u z_i^2} \right)^2 \prod_{j (\neq i)}^k \frac{z_i - t z_j}{t z_i - z_j} \frac{1 - z_i z_j}{1 - u z_i z_j} = \left( \frac{1 - z_i}{1 - v z_i} \right)^\chi \prod_{\alpha = 1}^n \frac{(1 - z_i / a_\alpha) (1 - z_i a_\alpha)}{(1 - v z_i / a_\alpha) (1 - v z_i a_\alpha)} \, , \qquad i = 1, \ldots, k \, .
\end{align}
Set $t = u = v = \exp(\beta \hbar)$, $z_i = \exp(\beta(\xi_i - \hbar/2))$, $a_\alpha = \exp(\beta\lambda_\alpha)$, and $\tilde{x} = x t^{n + \frac{\chi}{2} - \frac{k+1}{2}}$.
Then, we obtain
\begin{align}
    \tilde{x} \left( \frac{[2\xi_i-\hbar]}{[2\xi_i]} \right)^2 \left( \frac{[\xi_i + \hbar/2]}{[\xi_i - \hbar/2]} \right)^\chi \prod_{j (\neq i)}^k \frac{[\xi_i - \xi_j - \hbar]}{[\xi_i - \xi_j + \hbar]}\frac{[\xi_i + \xi_j - \hbar]}{[\xi_i + \xi_j]} = \prod_{\alpha = 1}^n \frac{[\xi_i - \lambda_\alpha - \hbar/2][\xi_i + \lambda_\alpha - \hbar/2]}{[\xi_i - \lambda_\alpha + \hbar/2][\xi_i + \lambda_\alpha + \hbar/2]} \, .
\end{align}
As in the case of $T^*\!\operatorname{SG}(k,2n)$, although this is close to the BAE for the open spin chain~\eqref{eq:open_BAE}, there are some mismatches. 
The contribution of the symmetric tensor field can be modified by adding another hypermultiplet in the Coulomb branch integral formula, $\prod_{1 \le i \le j \le k} (z_i z_j;q)_\infty / (t^2 z_i z_j;q)_\infty$.
The modified version of the twisted F-term equations are then given by
\begin{align}
    \tilde{x} \left( \frac{[2(\xi_i-\hbar/2)]}{[2(\xi_i + \hbar/2)]} \right)^2 \left( \frac{[\xi_i + \hbar/2]}{[\xi_i - \hbar/2]} \right)^\chi \prod_{j (\neq i)}^k \frac{[\xi_i - \xi_j - \hbar]}{[\xi_i - \xi_j + \hbar]} \frac{[\xi_i + \xi_j - \hbar]}{[\xi_i + \xi_j + \hbar]} = \prod_{\alpha = 1}^n \frac{[\xi_i - \lambda_\alpha - \hbar/2][\xi_i + \lambda_\alpha - \hbar/2]}{[\xi_i - \lambda_\alpha + \hbar/2][\xi_i + \lambda_\alpha + \hbar/2]} \, ,
\end{align}
with $\tilde{x} = x t^{n + \chi/2 - k - 1}$.
However, it is not yet identified with the BAE due to the first term as in the case of the 3d Sp gauge theory~\cite{Kimura:2020bed}. 
In the 2d limit, $\beta \to 0$, we instead obtain
\begin{align}
    \tilde{x} \left( \frac{\xi_i-\hbar/2}{\xi_i + \hbar/2} \right)^{2 - \chi} \prod_{j (\neq i)}^k \frac{(\xi_i - \xi_j - \hbar)}{(\xi_i - \xi_j + \hbar)} \frac{(\xi_i + \xi_j - \hbar)}{(\xi_i + \xi_j + \hbar)} = \prod_{\alpha = 1}^n \frac{(\xi_i - \lambda_\alpha - \hbar/2)(\xi_i + \lambda_\alpha - \hbar/2)}{(\xi_i - \lambda_\alpha + \hbar/2)(\xi_i + \lambda_\alpha + \hbar/2)} \, .
\end{align}
Compared to the boundary contribution~\eqref{eq:open_BAE_B}, we may identify
\begin{align}
    \{\eta_+,\eta_-\} = 
    \begin{cases}
        \{0,0\} & (\chi = 0) \\
        \{0,\infty\} & (\chi = 1) \\
    \end{cases}
\end{align}
together with $x = - t^{k + 1 - n - \chi}$.

\appendix
\section{K-theoretic I-functions}\label{sec:I-func}

\subsection{Generating functions in quantum K-theory}
For smooth projective variety $X$, let $K(X)$ be its K-theory with complex coefficients.
Its big J-function $\mathcal{J}^X$ is the generating function defined by
\begin{equation} \label{equation: def of big J-function}
\mathcal{J}^X(\t(q)) := 1 - q + \t(q) + \sum_{\substack{m\geq 0,i \\ \beta\in H_2(X;\mathbb{Z})}} \ x^\beta \cdot \phi^i \langle \frac{\phi_i}{1-qL_0}, \t(L_1), \cdots, \t(L_m)\rangle^{\mathfrak{S}_m}_{0,m+1,\beta} \ \in K(X)(q)\llbracket x\rrbracket 
\end{equation}
where 
\begin{itemize}
    \item the input $\t = \t(q)$ lives in $K(X)[q,q^{-1}]\llbracket x\rrbracket $,
    \item $\{\phi_i\}$ and $\{\phi^i\}$ are bases of $K(X)$ dual under the K-theoretic Poincaré pairing, i.e. $\chi(X;\phi_i\otimes \phi^j) = \delta_i^j$,
    \item $x$ refers to the Novikov variables which record the curve degrees, and
    \item the correlators $\langle \cdots \rangle_{0,m+1,\beta}^{\mathfrak{S}_m}$ are the permutation-equivariant K-theoretic Gromov-Witten invariants, or quantum K invariants, defined in \cite{Givental:perm1} as Euler characteristics of bundles over the moduli space of stable maps $\M_{0,m+1}(X,\beta)$.
\end{itemize}
The special point $J^X := \mathcal{J}^X(0) \in K(X)(q)\llbracket x\rrbracket $ is called the small J-function. 

The definition of $\langle \cdots \rangle_{0,m+1,\beta}^{\mathfrak{S}_m}$ depends on the so-called virtual structure sheaf $\mathcal{O}^{\mathrm{virt}}_{0,m+1,\beta}$ defined in \cite{Lee:2004}. When we modify these virtual structure sheaves, we obtain twisted quantum K invariants and thus twisted generating functions.

\paragraph{Level structure}
One type of modification that we consider is the level structure introduced in \cite{RuanZhangWen:2026}. More precisely, given vector bundle $E\rightarrow X$ and integer $\kappa \in\mathbb{Z}$, we consider the modification
\begin{equation} \label{equation: level structure on sheaves}
\mathcal{O}^{\mathrm{virt}}_{0,m+1,\beta} \mapsto \mathcal{O}^{\mathrm{virt}}_{0,m+1,\beta} \otimes {\det}^{-\kappa}(E_{0,m+1,\beta}),
\end{equation}
where $E_{0,m+1,\beta}:= \mathrm{ft}_{*}\mathrm{ev}^*E$ refers to the virtual bundle over $\mathcal{O}^{\mathrm{virt}}_{0,m+1,\beta}$ whose fiber over the stable map $f: (\Sigma,p_1,\cdots,p_{m+1})\rightarrow X$ is $H^0(\Sigma,f^*E)-H^1(\Sigma,f^*E)$ and $\det$ gives by definition $\det(V-W) = \det(V)\otimes \det^{-1}(W)$ on virtual bundles. The resulting invariants are called quantum K invariants with level structure $(E,\kappa)$. We denote by $\mathcal{J}^{X,(E,\kappa)} = \mathcal{J}^{X,(E,\kappa)}(\t(q))\in K(X)(q)\llbracket x\rrbracket $ the big J-function with level structure $(E,\kappa)$, defined via (\ref{equation: def of big J-function}) by replacing the quantum K invariants by the ones with level structures above, and $\{\phi^i\}$ by $\{\phi^{(E,\kappa), i}\}$ the basis satisfying $\chi(X;\phi_i\otimes \phi^{(E,\kappa), j}\otimes \det^{-\kappa}E) = \delta_i^j$. We denote by $J^{X,(E,\kappa)}:= \mathcal{J}^{X,(E,\kappa)}(0)$ the small J-function with level structure $(E,\kappa)$.

\paragraph{Vector bundle structure}
Another type of modification that we consider gives quantum K invariants of a vector bundle over $X$. More precisely, given vector bundle $E\rightarrow X$, we assume that $E$ admits a fiber-wise $\C^\times$-action with weight $v^{-1}$, for $v$ the equivariant parameter of $\C^\times$. Then, we consider the modification
\begin{equation}
\mathcal{O}^{\mathrm{virt}}_{0,m+1,\beta} \mapsto \mathcal{O}^{\mathrm{virt}}_{0,m+1,\beta} \otimes \mathrm{Eu}^{-1}(v^{-1}E_{0,m+1,\beta}),
\end{equation}
where $\mathrm{Eu}$ refers to the K-theoretic Euler class and we use $v^{-1}$ to emphasize the fiber-wise $\C^\times$-action. The resulting invariants are indeed the ($\C^\times$-equivariant) quantum K-invariants of $\mathrm{Tot}(E)$. We denote by $\mathcal{J}^E = \mathcal{J}^E(\t(q))\in K_{\C^\times}(X)(q)\llbracket x\rrbracket $ the big J-function of $\mathrm{Tot}(E)$, defined via (\ref{equation: def of big J-function}) by replacing the quantum K invariants by the ones of $\mathrm{Tot}(E)$ above, and $\{\phi^i\}$ by $\{\phi^{E, i}\}$ the basis satisfying $\chi(X;\phi_i\otimes \phi^{E, j}\otimes \mathrm{Eu}^{-1}(v^{-1}E)) = \delta_i^j$. Here $K_{\C^\times}(X) = K(X)(v)$ refers to the $\C^\times$-equivariant K-theory of $X$, with coefficients localized from $\mathrm{Rep}(\C^\times) = \C[v]$ to $\mathrm{Frac}(\mathrm{Rep}(\C^\times)) = \C(v)$. We denote by $J^{E}:= \mathcal{J}^{E}(0)$ the small J-function of $\mathrm{Tot}(E)$. 

\paragraph{Odd vector bundle structure}
Another type of modification very similar to above will, on the other hand, help with understanding quantum K-invariants of subvarieties of $X$ under certain non-equivariant limits, as we will see below. To distinguish from the case above, given vector bundle $E\rightarrow X$, we assume it is equipped with a fiber-wise $\mathbb{C}^\times$-action with weight $u^{-1}$, for $u$ the equivariant parameter of $\mathbb{C}^\times$. Then, we consider the modification on the virtual structure sheaves
\begin{equation} \label{equation: Euler twisting on sheaves}
\mathcal{O}^{\mathrm{virt}}_{0,m+1,\beta} \mapsto \mathcal{O}^{\mathrm{virt}}_{0,m+1,\beta} \otimes \mathrm{Eu}(u^{-1}E_{0,m+1,\beta}).
\end{equation}
We denote by $\mathcal{J}^{E[1]} = \mathcal{J}^{E[1]}(\t(q))\in K_{\C^\times}(X)(q)\llbracket x\rrbracket $ the $(\mathrm{Eu},E)$-twisted big J-function, or the big J-function of the odd vector bundle $E[1]$, defined via (\ref{equation: def of big J-function}) by replacing the quantum K invariants by the third type of modified invariants above, and $\{\phi^i\}$ by $\{\phi^{E[1], i}\}$ the basis satisfying $\chi(X;\phi_i\otimes \phi^{E[1], j}\otimes \mathrm{Eu}(u^{-1}E)) = \delta_i^j$. Here $K_{\C^\times}(X) = K(X)(u)$ refers again to the $\C^\times$-equivariant K-theory of $X$ with localized coefficients. We denote by $J^{E[1]}:= \mathcal{J}^{E[1]}(0)$ the small J-function of $E[1]$.

When an action by torus $T$ exists, the entire story above admits an equivariant version.

Below, we give explicit formulas for special values of various twisted torus-equivariant big J-functions of $X$ for $X$ being grassmannians, symplectic grassmannians, and orthogonal grassmannians. We call these special values the I-functions. They mostly take $q$-hypergeometric form and are thus more directly related to integral models. They differ from the corresponding small J-functions, which are usually hard to compute, by changes of coordinates known as the ``mirror maps'' \cite{ZhangZhou:2020}.

\subsection{$\mathrm{Gr}(k,n)$}\label{sec:I-func_Gr}
We consider first the case of $X = \mathrm{Gr}(k,n)$ the grassmannian of dim-$k$ subspaces of $\C^n$. $H_2(X;\mathbb{Z}) = \mathbb{Z}$, and we denote the only Novikov variable by $x$. The standard action of $T = (\C^\times)^n$ on $\C^n$ induces a $T$-action on $X$. We denote by $a_1,\cdots,a_n$ the equivariant parameters of $T$. By \cite{Givental:2021dbg}, the I-function
\begin{equation}\label{eq:I-func_Gr}
I^X := (1-q) \sum_{d_1,\ldots,d_k \ge 0} \frac{x^{d_1+\cdots+d_k}}{\prod_{i=1}^k \prod_{\alpha=1}^n \prod_{\ell=1}^{d_i}(1-q^\ell P_i/a_\alpha)} \prod_{1 \le i, j \le k} \frac{\prod_{\ell=-\infty}^{d_i-d_j}(1-q^\ell P_i/P_j)}{\prod_{\ell=-\infty}^{0}(1-q^\ell P_i/P_j)}, 
\end{equation}
where $P_1,\cdots,P_k$ denote the K-theoretic Chern roots of the tautological (sub-)bundle $V\rightarrow X$ (such that $\Lambda^i V$ is represented by the $i$-th elementary symmetric polynomial in $P_1,\cdots,P_k$), is a special value of $\mathcal{J}^X$. Indeed, $I^X = J^X$ in this case and the mirror map is trivial.

The above formula is obtained via the method of abelian/non-abelian correspondence. More precisely, we write $X$ as a GIT quotient $X = \mathrm{Hom}(\C^k,\C^n)/\!\!/\mathrm{GL}(k)= \mathrm{Gr}(k,n)$, and consider an associated toric variety (abelian quotient) $Y = \mathrm{Hom}(\C^k,\C^n)/\!\!/(\mathbb{C}^\times)^k = (\mathbb{P}^{n-1})^k$. $\mathcal{J}^X$ is proven to be obtained from $\mathcal{J}^Y$ via a certain transformation, and $I^X$ is exactly the image of the well-known I-function $I^Y$ of $Y$ under this transformation. The same method works for the twisted big J-functions as well. In particular, the I-function with level structure $(V,\kappa)$
\begin{equation}\label{eq:I-func_Gr_V}
I^{X,(V,\kappa)} := (1-q) \sum_{d_1,\ldots,d_k \ge 0} \frac{x^{d_1+\cdots+d_k} \prod_{i=1}^k\left(P_i^{d_i}q^{\frac{d_i(d_i-1)}{2}}\right)^\kappa}{\prod_{i=1}^k \prod_{\alpha=1}^n \prod_{\ell=1}^{d_i}(1-q^\ell P_i/a_\alpha)} \prod_{1 \le i, j \le k} \frac{\prod_{\ell=-\infty}^{d_i-d_j}(1-q^\ell P_i/P_j)}{\prod_{\ell=-\infty}^{0}(1-q^\ell P_i/P_j)}
\end{equation}
is a special value of $\mathcal{J}^{X,(V,\kappa)}$, the I-function with level structure $(V^\vee,\kappa)$
\begin{equation}\label{eq:I-func_Gr_V*}
I^{X,(V^\vee,\kappa)} := (1-q) \sum_{d_1,\ldots,d_k \ge 0} \frac{x^{d_1+\cdots+d_k} \prod_{i=1}^k\left(P_i^{d_i}q^{\frac{d_i(d_i+1)}{2}}\right)^\kappa}{\prod_{i=1}^k \prod_{\alpha=1}^n \prod_{\ell=1}^{d_i}(1-q^\ell P_i/a_\alpha)} \prod_{1 \le i, j \le k} \frac{\prod_{\ell=-\infty}^{d_i-d_j}(1-q^\ell P_i/P_j)}{\prod_{\ell=-\infty}^{0}(1-q^\ell P_i/P_j)},
\end{equation}
is a special value of $\mathcal{J}^{X,(V^\vee,\kappa)}$, the I-function of $T^*\!X$%
\footnote{%
By Lemma~\ref{lem:q-shifted_ratio}, we may rewrite the second line as follows,
\begin{align*}
    \prod_{1 \le i, j \le k} \frac{\prod_{\ell=-\infty}^{d_i-d_j}(1-q^\ell P_i/P_j)}{\prod_{\ell=-\infty}^{0}(1-q^\ell P_i/P_j)} \frac{\prod_{\ell=-\infty}^{-1}(1-vq^{-\ell} P_j/P_i)}{\prod_{\ell=-\infty}^{d_i-d_j-1}(1-vq^{-\ell} P_j/P_i)} 
    & = \prod_{1 \le i, j \le k} \frac{\prod_{\ell=-\infty}^{d_i-d_j}(1-q^\ell P_i/P_j)}{\prod_{\ell=-\infty}^{0}(1-q^\ell P_i/P_j)} \frac{\prod_{\ell=-\infty}^{d_i-d_j}(1-vq^{\ell} P_i/P_j)}{\prod_{\ell=-\infty}^{0}(1-vq^{\ell} P_i/P_j)} \, .
\end{align*}
}
\begin{equation}\label{eq:I-func_T*Gr}
\begin{split}
I^{T^*\!X} := & (1-q) \sum_{d_1,\ldots,d_k \ge 0} x^{d_1+\cdots+d_k} \cdot \prod_{i=1}^k \prod_{\alpha=1}^n\frac{\prod_{\ell=0}^{d_i-1}(1-v q^{-\ell} a_\alpha/P_i)}{\prod_{\ell=1}^{d_i}(1-q^\ell P_i/a_\alpha)} \\
& \cdot \prod_{1 \le i, j \le k} \frac{\prod_{\ell=-\infty}^{d_i-d_j}(1-q^\ell P_i/P_j)}{\prod_{\ell=-\infty}^{0}(1-q^\ell P_i/P_j)} \frac{\prod_{\ell=-\infty}^{-1}(1-vq^{-\ell} P_j/P_i)}{\prod_{\ell=-\infty}^{d_i-d_j-1}(1-vq^{-\ell} P_j/P_i)}
\end{split}
\end{equation}
is a special value of $\mathcal{J}^{T^*X}$, and the I-function of $TX[1]$
\begin{equation}\label{eq:I-func_TGr}
\begin{split}
I^{TX[1]} := & (1-q) \sum_{d_1,\ldots,d_k \ge 0} x^{d_1+\cdots+d_k} \cdot \prod_{i=1}^k \prod_{\alpha=1}^n\frac{\prod_{\ell=1}^{d_i}(1-u q^\ell P_i/a_\alpha)}{\prod_{\ell=1}^{d_i}(1-q^\ell P_i/a_\alpha)} \\
& \cdot \prod_{1 \le i, j \le k} \frac{\prod_{\ell=-\infty}^{d_i-d_j}(1-q^\ell P_i/P_j)}{\prod_{\ell=-\infty}^{0}(1-q^\ell P_i/P_j)} \frac{\prod_{\ell=-\infty}^{0}(1-u     q^\ell P_i/P_j)}{\prod_{\ell=-\infty}^{d_i-d_j}(1-u q^\ell P_i/P_j)}
\end{split}
\end{equation}
is a special value of $\mathcal{J}^{TX[1]}$.

\subsection{Quantum Lefschetz}
The symplectic grassmannians and orthogonal grassmannians are zero loci of vector bundles over the ordinary grassmannians. We aim to compute their I-functions using the quantum Lefschetz theorem for grassmannians developed in \cite{Givental:2021dbg}.

We start with the general set-up where $E\rightarrow X$ is a convex ($H^1(\mathbb{P}^1,f^*E)=0, \ \forall f: \mathbb{P}^1\rightarrow X$) vector bundle over a smooth projective variety, and consider $\mathcal{J}^{E[1]}$ the $(\mathrm{Eu},E)$-twisted big J-function. Since $E$ is convex, there exists short exact sequence of vector bundles over $\M_{0,m+1}(X;\beta)$
\begin{equation}
0\rightarrow E_{0,m+1,\beta}' \rightarrow E_{0,m+1,\beta} \rightarrow \mathrm{ev}_0^* E\rightarrow 0
\end{equation}
where $\mathrm{ev}_0: \M_{0,m+1}(X;\beta)\rightarrow X$ is the evaluation at the first marked point and thus $\mathrm{ev}_0^* E$ has fiber $E|_{p_0}$ over the stable map $f:(\Sigma,p_0,\cdots,p_{m}) \rightarrow X$. Then, it is not hard to see that the non-equivariant limit at $u=1$ of $\mathcal{J}^{E[1]}$ exists, and that the limit $\lim_{u\rightarrow 1}\mathcal{J}^{E[1]}$ is given still by (\ref{equation: def of big J-function}), but with the quantum K invariants replaced by the ones defined by the modified virtual structure sheaves
\begin{equation}
\mathcal{O}^{\mathrm{virt}}_{0,m+1,\beta} \mapsto \mathcal{O}^{\mathrm{virt}}_{0,m+1,\beta} \otimes \mathrm{Eu}(E_{0,m+1,\beta}'),
\end{equation}
where $\mathrm{Eu}$ is now the non-equivariant K-theoretic Euler class, and with $\{\phi^i\}$ kept as they are.

Now let $\iota: Z\rightarrow X$ be the zero locus of a regular section $s: X\rightarrow E$. It is not hard to see that 
\begin{equation}
\bigsqcup_{\substack{\gamma\in H_2(Z;\mathbb{Z}) \\ \text{s.t. }i_*\gamma = \beta}} \M_{0,m+1}(Z;\gamma) \quad \subset \quad \M_{0,m+1}(X;\beta)
\end{equation}
is the zero locus of a regular section of $E_{0,m+1,\beta}$ defined by restricting $s$ to the corresponding curve. Then, by results in \cite{YanKimura:2026qLefvp}, we have
\begin{equation} \label{equation: qLef on J function level}
\mathcal{J}^Z(\iota^*\t(q))|_{x_Z^\gamma\mapsto x_X^{\iota_*\gamma}} = \iota^* \lim_{u\rightarrow 1}\mathcal{J}^{E[1]}(\t(q)).
\end{equation}
Note that (\ref{equation: qLef on J function level}) is true under torus-equivariant settings as well, where we ask $s: X \rightarrow E$ to be a torus-invariant section.

\subsection{$\mathrm{SG}(k,2n)$}
For $k\leq n$, we consider $Z = \mathrm{SG}(k,2n)$, the grassmannian of dim-$k$ isotropic subspaces of the symplectic space $(\C^{2n},\omega = \sum_{i=1}^n dz_i\wedge dz_{n+i})$. Consider the convex vector bundle $E = \Lambda^2 V^\vee\rightarrow X = \mathrm{Gr}(k,2n)$ where $V$ is the rank-$k$ tautological (sub-)bundle. Then, $Z$ is exactly the zero locus of the regular section $s: X\rightarrow E$ given by the restriction of $\omega$ to the corresponding subspace. We denote still by $V$ the restriction of the tautological bundle from $X$ to $Z$. 

$\iota_*: H_2(Z;\mathbb{Z}) = \mathbb{Z} \rightarrow H_2(X;\mathbb{Z}) = \mathbb{Z}$ is the identity map in this case, so we may continue to use $x$ to denote the Novikov variable of $Z$.

The $T = (\C^\times)^n$ action on $\C^{2n}$ by 
\begin{equation}
(t_1,\cdots,t_n) \cdot (z_1,\cdots,z_{2n}) = (t_1z_1,\cdots,t_nz_{n}, t_1^{-1}z_{n+1},\cdots,t_n^{-1}z_{2n})
\end{equation}
preserves the symplectic form, and thus reduces to an action on $X$ and on $Z$. On $Z$, this is exactly the action of the maximal torus of $G = \mathrm{Sp}(2n)$, if we regard $Z = G/P$ as a homogeneous space. We denote the equivariant parameters by $a_1,\cdots,a_n$.

By applying the non-abelian quantum Lefschetz theorem developed in \cite{Yan2024} (to the special case of grassmannians), we can obtain a special value of $\mathcal{J}^{E[1]}$. Taking the non-equivariant limit at $u=1$, we obtain a special value of $\iota^* \lim_{u\rightarrow 1}\mathcal{J}^{E[1]}$ and thus of $\mathcal{J}^Z$ by (\ref{equation: qLef on J function level}), which we call the I-function of $Z$
\begin{equation}\label{eq:I-func_SG}
I^Z = (1-q) \sum_{d_1,\ldots,d_k \ge 0} \frac{x^{d_1+\cdots+d_k} \cdot \prod_{1\leq i < j\leq k} \prod_{\ell=1}^{d_i+d_j}(1-q^\ell P_iP_j)}{\prod_{i=1}^k \prod_{\alpha=1}^n \prod_{\ell=1}^{d_i}(1-q^\ell P_i/a_\alpha)(1-q^\ell P_ia_\alpha)} \prod_{1 \le i, j \le k} \frac{\prod_{\ell=-\infty}^{d_i-d_j}(1-q^\ell P_i/P_j)}{\prod_{\ell=-\infty}^{0}(1-q^\ell P_i/P_j)}.
\end{equation}

Now consider the level structure $(V,\kappa)$ on $Z$. Since $\iota_\gamma^* \det^{-\kappa}(V_{0,m+1,\beta}) = \det^{-\kappa}(V_{0,m+1,\gamma})$, similar to (\ref{equation: qLef on J function level}), one can prove
\begin{equation}
\mathcal{J}^{Z,(V,\kappa)}(\iota^*\t(q))|_{x_Z^\gamma\mapsto x_X^{\iota_*\gamma}} = \iota^* \lim_{u\rightarrow 1}\mathcal{J}^{E[1],(V,\kappa)}(\t(q)),
\end{equation}
where $\mathcal{J}^{E[1],(V,\kappa)}$ is the generating function of quantum K invariants from virtual structure sheaves twisted both by the level structure $(V,\kappa)$ defined in (\ref{equation: level structure on sheaves}) and the Euler class (\ref{equation: Euler twisting on sheaves}). By the non-abelian quantum Lefschetz theorem and the formula for level structure developed in \cite{Yan2024}, we obtain a special value of $\mathcal{J}^{E[1],(V,\kappa)}$, and thus a special value of $\mathcal{J}^{Z,(V,\kappa)}$ at $u=1$, which we call the I-function of $Z$ with level structure $(V,\kappa)$
\begin{equation}\label{eq:I-func_SG_V}
\begin{split}
I^{Z,(V,\kappa)} = & (1-q) \sum_{d_1,\ldots,d_k \ge 0} \frac{x^{d_1+\cdots+d_k} \cdot \prod_{1\leq i < j\leq k} \prod_{\ell=1}^{d_i+d_j}(1-q^\ell P_iP_j) \cdot \prod_{i=1}^k\left(P_i^{d_i}q^{\frac{d_i(d_i-1)}{2}}\right)^\kappa}{\prod_{i=1}^k \prod_{\alpha=1}^n \prod_{\ell=1}^{d_i}(1-q^\ell P_i/a_\alpha)(1-q^\ell P_ia_\alpha)} \\
& \cdot \prod_{1 \le i, j \le k} \frac{\prod_{\ell=-\infty}^{d_i-d_j}(1-q^\ell P_i/P_j)}{\prod_{\ell=-\infty}^{0}(1-q^\ell P_i/P_j)},
\end{split}
\end{equation}
Similarly, we obtain the I-function of $Z$ with level structure $(V^\vee,\kappa)$
\begin{equation}\label{eq:I-func_SG_V*}
\begin{split}
I^{Z,(V^\vee,\kappa)} = & (1-q) \sum_{d_1,\ldots,d_k \ge 0} \frac{x^{d_1+\cdots+d_k} \cdot \prod_{1\leq i < j\leq k} \prod_{\ell=1}^{d_i+d_j}(1-q^\ell P_iP_j) \cdot \prod_{i=1}^k\left(P_i^{d_i}q^{\frac{d_i(d_i+1)}{2}}\right)^\kappa}{\prod_{i=1}^k \prod_{\alpha=1}^n \prod_{\ell=1}^{d_i}(1-q^\ell P_i/a_\alpha)(1-q^\ell P_ia_\alpha)} \\
& \cdot \prod_{1 \le i, j \le k} \frac{\prod_{\ell=-\infty}^{d_i-d_j}(1-q^\ell P_i/P_j)}{\prod_{\ell=-\infty}^{0}(1-q^\ell P_i/P_j)}.
\end{split}
\end{equation}

For the theory of $T^*Z$, we note that $\iota_\gamma^* \mathrm{Eu}^{-1}(v^{-1}(T^*X-\Lambda^2V)_{0,m+1,\beta}) = \mathrm{Eu}^{-1}(v^{-1}(T^*Z)_{0,m+1,\gamma})$. Hence, similar to (\ref{equation: qLef on J function level}), we have
\begin{equation}
\mathcal{J}^{T^*Z}(\iota^*\t(q))|_{x_Z^\gamma\mapsto x_X^{\iota_*\gamma}} = \iota^* \lim_{u\rightarrow 1}\mathcal{J}^{T^*X-\Lambda^2V, E[1]}(\t(q)),
\end{equation}
where $\mathcal{J}^{T^*X-\Lambda^2V, E[1]}(\t(q))$ is the generating function of quantum K invariants twisted by
\begin{equation}
\mathcal{O}^{\mathrm{virt}}_{0,m+1,\beta} \mapsto \mathcal{O}^{\mathrm{virt}}_{0,m+1,\beta} \cdot \mathrm{Eu}^{-1}(v^{-1}(T^*X-\Lambda^2V)_{0,m+1,\beta}) \cdot \mathrm{Eu}(u^{-1}E_{0,m+1,\beta}).
\end{equation}
By the non-abelian quantum Lefschetz theorem, we can obtain a special value of $\mathcal{J}^{T^*X-\Lambda^2V, E[1]}$, and thus a special value of $\mathcal{J}^{T^*Z}$ at $u=1$, which we call the I-function of $T^*Z$
\begin{equation}\label{eq:I-func_T*SG}
\begin{split}
I^{T^*Z} = & (1-q) \sum_{d_1,\ldots,d_k \ge 0} x^{d_1+\cdots+d_k} \cdot \prod_{1\leq i < j\leq k} \frac{\prod_{\ell=1}^{d_i+d_j}(1-q^\ell P_iP_j)}{\prod_{\ell=0}^{d_i+d_j-1}(1-vq^{-\ell} P_i^{-1}P_j^{-1})} \\
& \cdot \prod_{i=1}^k \prod_{\alpha=1}^n\frac{\prod_{\ell=0}^{d_i-1}(1-vq^{-\ell} a_\alpha/P_i)(1-vq^{-\ell} a_\alpha^{-1}/P_i)}{\prod_{\ell=1}^{d_i}(1-q^\ell P_i/a_\alpha)(1-q^\ell P_ia_\alpha)} \\
& \cdot \prod_{1 \le i, j \le k} \frac{\prod_{\ell=-\infty}^{d_i-d_j}(1-q^\ell P_i/P_j)}{\prod_{\ell=-\infty}^{0}(1-q^\ell P_i/P_j)} \frac{\prod_{\ell=-\infty}^{-1}(1-vq^{-\ell} P_j/P_i)}{\prod_{\ell=-\infty}^{d_i-d_j-1}(1-vq^{-\ell} P_j/P_i)}.
\end{split}
\end{equation}
Similarly, for the theory of $TZ[1]$, since 
$\iota_\gamma^* \mathrm{Eu}(u^{-1}(TX-\Lambda^2V^\vee)_{0,m+1,\beta}) = \mathrm{Eu}(u^{-1}(TZ)_{0,m+1,\gamma})$, we have
\begin{equation}
\mathcal{J}^{TZ[1]}(\iota^*\t(q))|_{x_Z^\gamma\mapsto x_X^{\iota_*\gamma}} = \iota^* \lim_{u\rightarrow 1}\mathcal{J}^{(TX-\Lambda^2V^\vee)[1], E[1]}(\t(q)),
\end{equation}
where $\mathcal{J}^{(TX-\Lambda^2V^\vee)[1], E[1]}(\t(q))$ is the generating function of quantum K invariants twisted by
\begin{equation}
\mathcal{O}^{\mathrm{virt}}_{0,m+1,\beta} \mapsto \mathcal{O}^{\mathrm{virt}}_{0,m+1,\beta} \cdot \mathrm{Eu}(u^{-1}(TX-\Lambda^2V^\vee)_{0,m+1,\beta}) \cdot \mathrm{Eu}(u^{-1}E_{0,m+1,\beta}).
\end{equation}
By the non-abelian quantum Lefschetz theorem, we can obtain a special value of $\mathcal{J}^{(TX-\Lambda^2V^\vee)[1], E[1]}$, and thus a special value of $\mathcal{J}^{TZ[1]}$ at $u=1$, which we call the I-function of $TZ[1]$
\begin{equation}\label{eq:I-func_TSG}
\begin{split}
I^{TZ[1]} = & (1-q) \sum_{d_1,\ldots,d_k \ge 0} x^{d_1+\cdots+d_k} \cdot \prod_{1\leq i < j\leq k} \frac{\prod_{\ell=1}^{d_i+d_j}(1-q^\ell P_iP_j)}{\prod_{\ell=1}^{d_i+d_j}(1-u q^\ell P_iP_j)} \\
& \cdot \prod_{i=1}^k \prod_{\alpha=1}^n\frac{\prod_{\ell=1}^{d_i}(1-uq^\ell P_i/a_\alpha)(1-uq^\ell P_ia_\alpha)}{\prod_{\ell=1}^{d_i}(1-q^\ell P_i/a_\alpha)(1-q^\ell P_ia_\alpha)} \\
& \cdot \prod_{1 \le i, j \le k} \frac{\prod_{\ell=-\infty}^{d_i-d_j}(1-q^\ell P_i/P_j)}{\prod_{\ell=-\infty}^{0}(1-q^\ell P_i/P_j)} \frac{\prod_{\ell=-\infty}^{0}(1-uq^\ell P_i/P_j)}{\prod_{\ell=-\infty}^{d_i-d_j}(1-uq^\ell P_i/P_j)}.
\end{split}
\end{equation}

\subsection{$\mathrm{OG}(k,2n+\chi)$}
Lastly, we consider $Z = \mathrm{OG}(k,2n+\chi)$ where $\chi = 0,1$, the grassmannian of dim-$k$ (orthogonally) isotropic subspaces of $(\mathbb{C}^{2n+\chi},Q)$ where the symmetric form $Q$ is given by
\begin{equation}
Q(\mathbf{e}_\beta,\mathbf{e}_{n+\beta})=1\ (1\leq i\leq n), \quad Q(\mathbf{e}_{2n+1},\mathbf{e}_{2n+1})=1 \text{ if } \chi=1, \quad Q(\mathbf{e}_a,\mathbf{e}_b)=0 \text{ otherwise}.
\end{equation} 
$E = \mathrm{Sym}^2 V^\vee \rightarrow X = \mathrm{Gr}(k,2n+\chi)$ is a convex vector bundle, and $Z$ is the zero locus of its regular section $s: X\rightarrow E$ given by the restriction of $Q$ to the correspondence subspace. We denote still by $V$ the restriction of tautological bundle from $X$ to $Z$.

The $T = (\mathbb{C}^\times)^n$ action on $\mathbb{C}^{2n+\chi}$ by 
\begin{equation}
(t_1,\cdots,t_n) \cdot (z_1,\cdots,z_{2n+\chi}) = 
\begin{cases}
(t_1z_1,\cdots,t_nz_{n}, t_1^{-1}z_{n+1},\cdots,t_n^{-1}z_{2n}), & \chi=0 \\
(t_1z_1,\cdots,t_nz_{n}, t_1^{-1}z_{n+1},\cdots,t_n^{-1}z_{2n},z_{2n+1}), & \chi=1
\end{cases}
\end{equation}
preserves $Q$, and thus reduces to an action on $X$ and on $Z$. Again, on $Z$ this is the action of the maximal torus when we regard it as a homogeneous space. We denote the equivariant parameters still by $a_1,\cdots,a_n$. 

$\iota_*: H_2(Z;\mathbb{Z}) \rightarrow H_2(X;\mathbb{Z})$ is, however, much more complicated in the orthogonal case. More precisely,
\begin{equation} \label{equation: change of Novikov variables}
\iota_*: 
\begin{cases}
\mathbb{Z} \rightarrow \mathbb{Z}, 1\mapsto 1, & \quad \text{if } \chi = 1, k < n, \\
\mathbb{Z} \rightarrow \mathbb{Z}, 1\mapsto 2, & \quad \text{if } \chi = 1, k = n, \\
\mathbb{Z} \rightarrow \mathbb{Z}, 1\mapsto 1, & \quad \text{if } \chi = 0, k < n - 1, \\
\mathbb{Z}^2 \rightarrow \mathbb{Z}, (1,0)\mapsto 1, (0,1)\mapsto 1, & \quad \text{if } \chi = 0, k = n - 1, \\
\mathbb{Z}^2 \rightarrow \mathbb{Z}, (1,0)\mapsto 2, (0,1)\mapsto 2, & \quad \text{if } \chi = 0, k = n.
\end{cases}
\end{equation}
Following exactly the same method as in the symplectic case, we obtain the I-function of $Z$
\begin{equation}\label{eq:I-func_OG}
\begin{split}
I^Z = & (1-q) \sum_{d_1,\ldots,d_k \ge 0} \frac{ x^{d_1+\cdots+d_k} \cdot \prod_{1 \le i \le j \le k} \prod_{\ell = 1}^{d_i + d_j} (1 - q^{\ell} P_iP_j)}{\prod_{i=1}^k \prod_{\ell=1}^{d_i} \left( (1 - q^{\ell} P_i)^\chi \prod_{\beta = 1}^n  (1 - q^{\ell} P_i/a_\beta)(1 - q^{\ell} P_i a_\beta) \right)} \\
& \cdot \prod_{1 \le i \neq j \le k} \frac{\prod_{\ell = -\infty}^{d_i - d_j} (1 - q^{\ell} P_i/P_j)}{\prod_{\ell = -\infty}^{0} (1 - q^{\ell} P_i/P_j)},
\end{split}
\end{equation}
the I-functions of $Z$ with level structures
\begin{subequations}\label{eq:I-func_OG_level}
\begin{equation}\label{eq:I-func_OG_V}
\begin{split}
I^{Z,(V,\kappa)} = & (1-q) \sum_{d_1,\ldots,d_k \ge 0} \frac{ x^{d_1+\cdots+d_k} \cdot \prod_{1 \le i \le j \le k} \prod_{\ell = 1}^{d_i + d_j} (1 - q^{\ell} P_iP_j) \cdot \prod_{i=1}^k\left(P_i^{d_i}q^{\frac{d_i(d_i-1)}{2}}\right)^\kappa}{\prod_{i=1}^k \prod_{\ell=1}^{d_i} \left( (1 - q^{\ell} P_i)^\chi \prod_{\beta = 1}^n  (1 - q^{\ell} P_i/a_\beta)(1 - q^{\ell} P_i a_\beta) \right)} \\
& \cdot \prod_{1 \le i \neq j \le k} \frac{\prod_{\ell = -\infty}^{d_i - d_j} (1 - q^{\ell} P_i/P_j)}{\prod_{\ell = -\infty}^{0} (1 - q^{\ell} P_i/P_j)},
\end{split}
\end{equation}
\begin{equation}\label{eq:I-func_OG_V*}
\begin{split}
I^{Z,(V^\vee,\kappa)} = & (1-q) \sum_{d_1,\ldots,d_k \ge 0} \frac{ x^{d_1+\cdots+d_k} \cdot \prod_{1 \le i \le j \le k} \prod_{\ell = 1}^{d_i + d_j} (1 - q^{\ell} P_iP_j) \cdot \prod_{i=1}^k\left(P_i^{d_i}q^{\frac{d_i(d_i+1)}{2}}\right)^\kappa}{\prod_{i=1}^k \prod_{\ell=1}^{d_i} \left( (1 - q^{\ell} P_i)^\chi \prod_{\beta = 1}^n  (1 - q^{\ell} P_i/a_\beta)(1 - q^{\ell} P_i a_\beta) \right)} \\
& \cdot \prod_{1 \le i \neq j \le k} \frac{\prod_{\ell = -\infty}^{d_i - d_j} (1 - q^{\ell} P_i/P_j)}{\prod_{\ell = -\infty}^{0} (1 - q^{\ell} P_i/P_j)},
\end{split}
\end{equation}
\end{subequations}
the I-function of $T^*Z$
\begin{equation}\label{eq:I-func_T*OG}
\begin{split}
I^{T^*Z} = & (1-q) \sum_{d_1,\ldots,d_k \ge 0} x^{d_1+\cdots+d_k} \cdot \prod_{1\leq i \leq j\leq k} \frac{\prod_{\ell=1}^{d_i+d_j}(1-q^\ell P_iP_j)}{\prod_{\ell=0}^{d_i+d_j-1}(1-vq^{-\ell} P_i^{-1}P_j^{-1})} \\
& \cdot \prod_{i=1}^k \left( \frac{\prod_{\ell=0}^{d_i-1}(1-vq^{-\ell}/P_i)^\chi}{\prod_{\ell=1}^{d_i}(1-q^\ell P_i)^\chi} \prod_{\beta=1}^n\frac{\prod_{\ell=0}^{d_i-1}(1-vq^{-\ell} a_\beta/P_i)(1-vq^{-\ell} a_\beta^{-1}/P_i)}{\prod_{\ell=1}^{d_i}(1-q^\ell P_i/a_\beta)(1-q^\ell P_ia_\beta)} \right) \\
& \cdot \prod_{1 \le i, j \le k} \frac{\prod_{\ell=-\infty}^{d_i-d_j}(1-q^\ell P_i/P_j)}{\prod_{\ell=-\infty}^{0}(1-q^\ell P_i/P_j)} \frac{\prod_{\ell=-\infty}^{-1}(1-vq^{-\ell} P_j/P_i)}{\prod_{\ell=-\infty}^{d_i-d_j-1}(1-vq^{-\ell} P_j/P_i)},
\end{split}
\end{equation}
and the I-function of $TZ[1]$
\begin{equation}\label{eq:I-func_TOG}
\begin{split}
I^{TZ[1]} = & (1-q) \sum_{d_1,\ldots,d_k \ge 0} x^{d_1+\cdots+d_k} \cdot \prod_{1\leq i \leq j\leq k} \frac{\prod_{\ell=1}^{d_i+d_j}(1-q^\ell P_iP_j)}{\prod_{\ell=1}^{d_i+d_j}(1-u q^\ell P_iP_j)} \\
& \cdot \prod_{i=1}^k \left( \frac{\prod_{\ell=1}^{d_i}(1-u q^\ell P_i)^\chi}{\prod_{\ell=1}^{d_i}(1-q^\ell P_i)^\chi} \prod_{\beta=1}^n\frac{\prod_{\ell=1}^{d_i}(1-uq^\ell P_i/a_\beta)(1-uq^\ell P_ia_\beta)}{\prod_{\ell=1}^{d_i}(1-q^\ell P_i/a_\beta)(1-q^\ell P_ia_\beta)} \right) \\
& \cdot \prod_{1 \le i, j \le k} \frac{\prod_{\ell=-\infty}^{d_i-d_j}(1-q^\ell P_i/P_j)}{\prod_{\ell=-\infty}^{0}(1-q^\ell P_i/P_j)} \frac{\prod_{\ell=-\infty}^{0}(1-uq^\ell P_i/P_j)}{\prod_{\ell=-\infty}^{d_i-d_j}(1-u q^\ell P_i/P_j)},
\end{split}
\end{equation}
Here $x$ refers always to the Novikov variable of $X = \mathrm{Gr}(k,2n+\chi)$. The I-functions are all special values of the corresponding big J-functions, but under the change of Novikov variables from $Z$ to $X$ designated by (\ref{equation: change of Novikov variables}). Note that in some cases, (\ref{equation: change of Novikov variables}) can only give rise to even powers of $x$. It does not imply that the odd $x$-power terms in the I-function vanish, but rather suggests that the I-function should be regarded as the special value of an extended big J-function, where the input $\t$ is allowed to involve half-powers of $x$ as coefficients.

\section{Formulas}\label{sec:formulas}

\subsection*{$q$-shifted factorial}
We define the $q$-shifted factorial,
\begin{align}
    (z;q)_n = \prod_{m=0}^{n-1} (1 - z q^m) = \sum_{k = 0}^n (-z)^k q^{\binom{k}{2}} \binom{n}{k}_q \, ,
\end{align}
where the $q$-binomial coefficient is given by 
\begin{align}
    \binom{n}{k}_q = \frac{(q;q)_n}{(q;q)_k (q;q)_{n-k}} = \frac{(q^{-n};q)_k}{(q;q)_k} (-1)^k q^{nk - \binom{k}{2}} \, .
\end{align}
For $|q| < 1$, we have 
\begin{align}
    (z;q)_\infty = \prod_{m=0}^{\infty} (1 - z q^m) = \sum_{k = 0}^\infty \frac{(-z)^k q^{\binom{k}{2}}}{(q;q)_k} \, , \label{eq:q-Pochhammer}
\end{align}
where $\lim_{n \to \infty} \binom{n}{k}_q = 1/(q;q)_k$.
We write
\begin{align}
    (z_1,\ldots,z_k;q)_d = \prod_{i=1}^k (z_i;q)_d \, .
\end{align}

\begin{lemma}\label{lem:q-shifted_ratio}
For $d \in \mathbb{Z}$, $z, q \in \mathbb{C}^\times$, we define
\begin{align}
    f_d(z;q) := 
    \begin{cases}
        (q^{-1}z;q^{-1})_{d} & (d \ge 0) \\ 
        (z;q)_{|d|}^{-1} = (-z)^{-|d|} q^{-\binom{|d|}{2}} (z^{-1};q^{-1})_{|d|}^{-1} & (d < 0)
    \end{cases} \, .
\end{align}
Then, it is given by a ratio of infinite products,
\begin{align}
    f_d(z;q)
     = 
     \begin{cases}
        \displaystyle 
         \frac{\prod_{\ell=-\infty}^d (1-q^{-\ell}z)}{\prod_{\ell=-\infty}^0 (1-q^{-\ell}z)} = (q^{-d}z;q)_\infty/(z;q)_\infty & (|q|<1) \\[1em]
         \displaystyle
         \frac{\prod_{\ell = -\infty}^{-1}(1 - q^\ell z)}{\prod_{\ell = -\infty}^{-d-1}(1 - q^\ell z)} = (q^{-1}z;q^{-1})_\infty/(q^{-d-1}z;q^{-1})_\infty & (|q|>1)
     \end{cases} \, .
\end{align}
\end{lemma}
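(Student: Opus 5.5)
The lemma is an elementary telescoping statement, and I will verify it by writing everything as finite products and treating the two regimes $|q|<1$ and $|q|>1$ separately. For each regime I split into $d \ge 0$ and $d<0$. In every case I read the formal ratio of half-infinite products as cancelling the common factors, which leaves a finite product. For $|q|<1$ this reading is justified because both products $(q^{-d}z;q)_\infty$ and $(z;q)_\infty$ converge. For $|q|>1$ one uses the convergent products in base $q^{-1}$.

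\emph{Case $|q|<1$, $d\ge0$.} Reindexing $m=\ell+d$ gives
\begin{align*}
\prod_{\ell=-\infty}^{d}(1-q^{-\ell}z) = (q^{-d}z;q)_\infty ,
\end{align*}
and likewise $\prod_{\ell=-\infty}^{0}(1-q^{-\ell}z)=(z;q)_\infty$. This establishes the second equality in the first line. The factors with $\ell\le 0$ cancel in the ratio, leaving
\begin{align*}
\prod_{\ell=1}^{d}(1-q^{-\ell}z) = (q^{-1}z;q^{-1})_d = f_d(z;q) .
\end{align*}

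\emph{Case $|q|<1$, $d<0$.} Now the numerator is the shorter product. The ratio is $1/\prod_{\ell=d+1}^{0}(1-q^{-\ell}z)$, and the remaining product runs over $-\ell=0,\dots,|d|-1$, so it equals $(z;q)_{|d|}^{-1}$.

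In this case I also need the alternative expression in the definition of $f_d$. I will check it by factoring $-zq^{m}$ out of each term $1-zq^m$:
\begin{align*}
(z;q)_{|d|}=\prod_{m=0}^{|d|-1}(-zq^m)\,(1-z^{-1}q^{-m}) = (-z)^{|d|}q^{\binom{|d|}{2}}(z^{-1};q^{-1})_{|d|} .
\end{align*}

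\emph{Case $|q|>1$.} Here I use the products $\prod_{\ell=-\infty}^{-1}(1-q^\ell z)=(q^{-1}z;q^{-1})_\infty$ and $\prod_{\ell=-\infty}^{-d-1}(1-q^{\ell}z)=(q^{-d-1}z;q^{-1})_\infty$, both of which converge since $|q^{-1}|<1.$

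For $d\ge0$ the numerator contains the extra factors $\ell=-d,\dots,-1$. Their product is $\prod_{j=1}^{d}(1-q^{-j}z)=(q^{-1}z;q^{-1})_d$, which equals $f_d$.

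For $d<0$ the denominator contains the extra factors $\ell=0,\dots,|d|-1$. The ratio is therefore $1/\prod_{\ell=0}^{|d|-1}(1-q^\ell z)=(z;q)_{|d|}^{-1}$, which again equals $f_d$.

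There is no real obstacle here. The only points requiring care are the index bookkeeping at the endpoints, namely whether $\ell=0$ or $\ell=d$ is included, and the convention that for $d<0$ the ratio of half-infinite products is read as the reciprocal of the finite surplus. I would double-check these boundary terms on the examples $d=\pm1$.
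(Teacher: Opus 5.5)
Your proof is correct. The paper states this lemma without proof, and your case-by-case telescoping is exactly the elementary verification it relies on: you cancel the common half-infinite tails in the four cases ($|q|<1$ or $|q|>1$, with $d\ge 0$ or $d<0$), and you obtain the alternative $d<0$ form by factoring $1-zq^m=(-zq^m)(1-z^{-1}q^{-m})$. The one slip is the reindexing in the case $|q|<1$, $d\ge 0$: it should be $m=d-\ell$, so that $q^{-\ell}=q^{m-d}$ with $m\in\mathbb{Z}_{\ge 0}$, not $m=\ell+d$; the resulting identity $\prod_{\ell=-\infty}^{d}(1-q^{-\ell}z)=(q^{-d}z;q)_\infty$ that you state is nonetheless right.
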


\begin{lemma}\label{lem:q-shifted_ratio2}
For $d \in \mathbb{Z}$, we have
\begin{align}
    \frac{(zq^{-d},z^{-1}q^d;q)_\infty}{(z,z^{-1};q)_\infty} = (-z)^d q^{-\binom{d}{2}} \frac{1 - z q^{-d}}{1 - z} \, .
\end{align}
\end{lemma}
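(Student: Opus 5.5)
The plan is to reduce everything to the theta function $\theta(z)=(z,q/z;q)_\infty$ from \eqref{eq:theta_def} and then apply its quasi-periodicity. The two sides are meromorphic in $z$, so it suffices to prove the identity for generic $z\in\mathbb{C}^\times$, i.e.\ for $z\notin q^{\mathbb{Z}}$. The special values then follow by continuity, or as an identity of rational functions after cancellation.

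First I split off one factor from each inverse Pochhammer symbol:
\begin{align}
(z^{-1};q)_\infty=(1-z^{-1})(qz^{-1};q)_\infty \, , \qquad (z^{-1}q^{d};q)_\infty=(1-z^{-1}q^{d})(q^{d+1}z^{-1};q)_\infty \, .
\end{align}
The left-hand side then becomes
\begin{align}
\frac{\theta(zq^{-d})}{\theta(z)}\cdot\frac{1-z^{-1}q^{d}}{1-z^{-1}} \, .
\end{align}

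Next I use the shift relation \eqref{eq:theta_shift}. Directly from the product, $\theta(qw)=-w^{-1}\theta(w)$, or equivalently $\theta(q^{-1}w)=-q^{-1}w\,\theta(w)$. Iterating this $d$ times for $d\ge0$ gives
\begin{align}
\theta(q^{-d}z)=\prod_{m=0}^{d-1}\bigl(-q^{-1-m}z\bigr)\,\theta(z)=(-z)^d q^{-\binom{d+1}{2}}\theta(z) \, .
\end{align}
For $d<0$ I iterate the forward relation instead. This produces the same closed formula, since $\binom{d+1}{2}=d(d+1)/2$ continues to hold for negative $d$; I will check this small point explicitly.

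Finally I simplify the remaining ratio:
\begin{align}
\frac{1-z^{-1}q^{d}}{1-z^{-1}}=q^{d}\,\frac{1-zq^{-d}}{1-z} \, .
\end{align}
Combining the factors, the power of $q$ is $q^{-\binom{d+1}{2}+d}=q^{-\binom{d}{2}}$, which gives the claim. The only step needing any care is the sign and exponent bookkeeping in the iterated shift for negative $d$; the rest is immediate.
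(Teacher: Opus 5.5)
Your proof is correct and is essentially the paper's route: the paper states this lemma without a written proof, but the Formulas appendix provides the reflection formula \eqref{eq:q-reflection} (your factor-splitting step, applied at $z$ and at $zq^{-d}$) and the shift relation \eqref{eq:theta_shift} with $m=-d$, which is exactly your argument. Since \eqref{eq:theta_shift} already holds for all $m\in\mathbb{Z}$, you can skip the separate iteration for $d<0$ and just note that $\binom{-d}{2}=\binom{d+1}{2}$.
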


\subsection*{Theta function}
For $|q| < 1$, the theta function is defined by
\begin{align}
    \theta(z;q) = (z,q/z;q)_\infty = \frac{1}{(q;q)_\infty} \sum_{n \in \mathbb{Z}} (-1)^n q^{n \choose 2} z^n \, , \label{eq:theta_def}
\end{align}
obeying the shift relation,
\begin{align}
    \theta(zq^m;q) = (-z)^{-m} q^{-{m \choose 2}} \theta(z;q) \, , \quad m \in \mathbb{Z} \, . \label{eq:theta_shift}
\end{align}
We write
\begin{align}
    \theta(z_1,\ldots,z_k;q) = \prod_{i=1}^k \theta(z_i;q) \, .
\end{align}
We have a reflection formula,
\begin{align}
    (z;q)_\infty (z^{-1};q)_\infty = (1 - z^{-1}) \theta(z;q) = (1 - z) \theta(z^{-1};q) \, . \label{eq:q-reflection}
\end{align}

\subsection*{Hypergeometric functions}
For $|z| < 1$, we define the hypergeometric functions,
\begin{subequations}
\begin{align}
    {}_r F_s \left( \begin{matrix} a_1,\ldots,a_r \\ b_1,\ldots,b_s \end{matrix} ; z \right) & = \sum_{m=0}^\infty \frac{(a_1)_m \cdots (a_r)_m}{(b_1)_m \cdots (b_s)_m} \frac{z^m}{m!} \, , \\
    {}_r \phi_s \left( \begin{matrix} a_1,\ldots,a_r \\ b_1,\ldots,b_s \end{matrix} ; q, z \right) & = \sum_{m=0}^\infty \left( (-1)^m q^{m \choose 2} \right)^{s-r+1} \frac{(a_1,\ldots,a_r;q)_m}{{(q,b_1,\ldots,b_s;q)_m}} z^m \, . \label{eq:q-hypergeom}
\end{align}
\end{subequations}
We have the $q$-Chu--Vandermonde identity,
\begin{align}
    {}_2 \phi_1 \left(
    \begin{matrix}
        q^{-n}, b \\ c
    \end{matrix} ; q, q \right) 
    = \sum_{m = 0}^n \frac{(q^{-n},b;q)_m}{(q,c;q)_m} q^m
    = \frac{(c/b;q)_n}{(c;q)_n} b^n \, , \label{eq:q-Chu-Vandermonde}
\end{align}
or equivalently
\begin{align}
    {}_2 \phi_1 \left(
    \begin{matrix}
        q^{-n}, b \\ c
    \end{matrix} ; q, \frac{c q^n}{b} \right) = \frac{(c/b;q)_n}{(c;q)_n} \, . \label{eq:q-Chu-Vandermonde2}
\end{align}

\begin{lemma}\label{lem:product_formula}
    Let $F(z;a) = \sum_{d \ge 0} z^d / (q,a;q)_d$. Then, we have
    \begin{align}
        F(z;a) F(z;b) (z;q)_\infty = \sum_{d \ge 0} \frac{(ab q^{d-1};q)_d}{(q,a,b;q)_d} z^d \, .
    \end{align}
\end{lemma}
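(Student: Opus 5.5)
We compare the coefficients of $z^d$ on both sides. The strategy is to carry out the summation in two stages, and to evaluate each stage with the $q$-Chu--Vandermonde identity~\eqref{eq:q-Chu-Vandermonde}/\eqref{eq:q-Chu-Vandermonde2}.

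First we expand $(z;q)_\infty$ using~\eqref{eq:q-Pochhammer}. The coefficient of $z^d$ on the left-hand side is then the triple sum
\begin{align}
    \sum_{m+l+j=d} \frac{(-1)^j q^{\binom{j}{2}}}{(q;q)_j\,(q,a;q)_m\,(q,b;q)_l} \, .
\end{align}
We fix $m$, set $N=d-m$, and perform the inner sum over $j+l=N$ first. Write $\frac{1}{(q;q)_j(q;q)_{N-j}}=\frac{1}{(q;q)_N}\binom{N}{j}_q$ and $\frac{1}{(b;q)_{N-j}}=\frac{(bq^{N-j};q)_j}{(b;q)_N}$. The inner sum then becomes $\frac{1}{(q,b;q)_N}\,S_N(b)$ with
\begin{align}
    S_N(b)=\sum_{j=0}^N \binom{N}{j}_q (-1)^j q^{\binom{j}{2}} (bq^{N-j};q)_j \, .
\end{align}
Next we convert $\binom{N}{j}_q$ into $(q^{-N};q)_j/(q;q)_j$, using the expression of the $q$-binomial in \autoref{sec:formulas}, and reverse $(bq^{N-j};q)_j$ into $(q^{1-N}/b;q)_j$ times a monomial. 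This turns $S_N(b)$ into a terminating ${}_2\phi_1$ with one denominator parameter degenerate, or equivalently a $q$-binomial-theorem sum, which Chu--Vandermonde evaluates in closed form. Small cases ($S_0=1$, $S_1=b$) suggest the answer $S_N(b)=b^N q^{N(N-1)}$ up to a power of $q$ that the computation will fix.

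The coefficient of $z^d$ is thereby reduced to the single sum
\begin{align}
    \sum_{m=0}^{d} \frac{b^{d-m}\,q^{e(d-m)}}{(q,a;q)_m\,(q,b;q)_{d-m}} \, ,
\end{align}
where $e(N)$ is the quadratic exponent found in the previous step. To evaluate it, we rewrite $(q;q)_{d-m}$ and $(b;q)_{d-m}$ via $(q^{-d};q)_m$ and $(q^{1-d}/b;q)_m$, divided by $(q;q)_d$ and $(b;q)_d$ respectively. The sum then becomes $\frac{b^d q^{\ast}}{(q,b;q)_d}\,{}_2\phi_1\!\left(q^{-d},\,q^{1-d}/b;\,a;\,q,\,\zeta\right)$ for a definite argument $\zeta$. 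We expect $\zeta$ to be exactly $aq^{d}/(q^{1-d}/b)=abq^{2d-1}$, so that~\eqref{eq:q-Chu-Vandermonde2} applies. It gives $(abq^{d-1};q)_d/(a;q)_d$, and the prefactors $b^d q^{\ast}$ should cancel, producing precisely $(abq^{d-1};q)_d/(q,a,b;q)_d$.

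The main obstacle is the bookkeeping of the powers of $q$ and the signs in the two reversals $(c;q)_{N-j}\leftrightarrow(q^{1-N}/c;q)_j$. Everything hinges on the resulting arguments landing exactly on the Chu--Vandermonde evaluation points. If the exponent in the second stage fails to match~\eqref{eq:q-Chu-Vandermonde2}, we would instead switch to the form~\eqref{eq:q-Chu-Vandermonde} with argument $q$ by reversing the order of summation in $m$, since one of the two forms must apply.

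As a fallback, we can verify that both sides satisfy the same first-order recursion in $d$ together with the initial value $1$ at $d=0$. This recursion can be obtained from the $q$-difference equation $(1-q^{D})(1-aq^{D-1})F=zF$ applied to the product.
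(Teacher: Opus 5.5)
Your proposal is essentially the paper's proof: expand $(z;q)_\infty$, collapse the inner sum over the $(z;q)_\infty$- and $b$-indices to $b^N q^{N(N-1)}/(q,b;q)_N$ by a degenerate Chu--Vandermonde, then evaluate the remaining single sum by Chu--Vandermonde once more. (The paper sums over the $b$-index and uses the $x\to 0$ limit of \eqref{eq:q-Chu-Vandermonde}; your $j$-sum is the same series in reversed order.) One correction to your bookkeeping: if you keep $m$ on the $a$-factor and reverse the $q$- and $b$-factors, the sum is exactly $\frac{b^d q^{d(d-1)}}{(q,b;q)_d}\,{}_2\phi_1\bigl(q^{-d},q^{1-d}/b;a;q,q\bigr)$, so \eqref{eq:q-Chu-Vandermonde} applies directly and its factor $(q^{1-d}/b)^d$ cancels the prefactor. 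The argument $abq^{2d-1}$ for \eqref{eq:q-Chu-Vandermonde2} appears only after the swap $m\mapsto d-m$ that the paper performs, which is the reverse of what your contingency says; either route completes the proof.
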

\begin{proof}
    By \eqref{eq:q-Pochhammer}, we expand the LHS as follows,
    \begin{align}
        F(z;a) F(z;b) (z;q)_\infty = \sum_{d \ge 0} C_d z^d \, , 
    \end{align}
    where
    \begin{align}
        C_d & = \sum_{\substack{l,m,n \ge 0 \\ l + m + n = d}} \frac{(-1)^l q^{\binom{l}{2}}}{(q;q)_l (q,a;q)_m (q,b;q)_n} = \sum_{m = 0}^d \sum_{n = 0}^{d-m} \frac{(-1)^{d-m-n} q^{\binom{d-m-n}{2}}}{(q;q)_{d-m-n} (q,a;q)_m (q,b;q)_n} \, .
    \end{align}
    We first compute the summation over the index $n$. 
    Put $\ell = d - m$.
    Then, we have
    \begin{align}
        \sum_{n = 0}^{d-m} \frac{(-1)^{d-m-n} q^{\binom{d-m-n}{2}}}{(q;q)_{d-m-n} (q,b;q)_n} = \sum_{n = 0}^\ell \frac{(-1)^{\ell-n} q^{\binom{\ell-n}{2}}}{(q;q)_{\ell-n} (q,b;q)_n} \, .
    \end{align}
    Noticing
    \begin{align}
        \frac{1}{(q;q)_{\ell - n}} = \frac{(q^{-\ell};q)_n}{(q;q)_\ell} (-1)^n q^{\ell n - \binom{n}{2}} \, ,
    \end{align}
    we have 
    \begin{align}
        \sum_{n = 0}^\ell \frac{(-1)^{\ell-n} q^{\binom{\ell-n}{2}}}{(q;q)_{\ell-n} (q,b;q)_n} = \frac{(-1)^\ell q^{\binom{\ell}{2}}}{(q;q)_\ell} \sum_{n = 0}^\ell \frac{(q^{-\ell};q)_n}{(q,b;q)_n} q^n = \frac{(-1)^\ell q^{\binom{\ell}{2}}}{(q;q)_\ell} {}_2 \phi_1 
        \left( 
        \begin{matrix}
            q^{-\ell} , 0 \\ b
        \end{matrix} ; q, q
        \right) \, .
    \end{align}
    By the $q$-Chu--Vandermonde identity~\eqref{eq:q-Chu-Vandermonde}, we obtain 
    \begin{align}
        \frac{(-1)^\ell q^{\binom{\ell}{2}}}{(q;q)_\ell} {}_2 \phi_1 
        \left( 
        \begin{matrix}
            q^{-\ell} , 0 \\ b
        \end{matrix} ; q, q
        \right) = \frac{(-1)^\ell q^{\binom{\ell}{2}}}{(q;q)_\ell} \lim_{x \to 0}  {}_2 \phi_1 
        \left( 
        \begin{matrix}
            q^{-\ell} , x \\ b
        \end{matrix} ; q, q
        \right) = \frac{(-1)^\ell q^{\binom{\ell}{2}}}{(q;q)_\ell} \lim_{x \to 0} \frac{(b/x;q)_\ell}{(b;q)_\ell} x^\ell
        = \frac{b^\ell q^{2\binom{\ell}{2}}}{(q,b;q)_\ell} \, ,
    \end{align}
    so that
    \begin{align}
        C_d = \sum_{m=0}^d \frac{b^{d-m} q^{2\binom{d-m}{2}}}{(q,a;q)_m (q,b;q)_{d-m}} \, .
    \end{align}
    We may write
    \begin{align}
        C_d = \sum_{m=0}^d \frac{b^{m} q^{2\binom{m}{2}}}{(q,a;q)_{d-m} (q,b;q)_{m}} = \frac{1}{(q,a;q)_d} {}_2 \phi_1 
        \left( 
        \begin{matrix}
            q^{-d} , q^{1-d}/a \\ b
        \end{matrix} ; q, ab q^{2d-1}
        \right) \, .
    \end{align}
    Finally, applying the $q$-Chu--Vandermonde identity~\eqref{eq:q-Chu-Vandermonde2} again, we obtain 
    \begin{align}
        C_d = \frac{(ab q^{d-1};q)_d}{(q,a,b;q)_d} \, ,
    \end{align}
    which completes the proof.
\end{proof}

\subsection*{Schur polynomial}

A partition $\lambda = (\lambda_1,\lambda_2,\ldots)$ is a sequence of non-increasing non-negative integers, $\lambda_1 \ge \lambda_2 \ge \cdots \ge 0$.
We write $|\lambda| = \sum_{i \ge 1} \lambda_i$ and we denote the transposition of $\lambda$ by $\lambda^\text{T}$.
The length of $\lambda$ is $\ell(\lambda) = \lambda^\text{T}_1$, which is the number of non-zero elements in $\lambda$.

We define the Schur polynomial of $k$ variables as follows,
\begin{align}
    s_\lambda(z_1,\ldots,z_k) = \frac{\det_{1 \le i, j \le k} z_i^{\lambda_j + k - j}}{\det_{1 \le i, j \le k} z_i^{k - j}} = \frac{\det_{1 \le i, j \le k} z_i^{\lambda_{j}^\vee + j - 1}}{\det_{1 \le i, j \le k} z_i^{j - 1}} \, , \label{eq:Schur_def}
\end{align}
where $\lambda^\vee_j = \lambda_{k+1-j}$ is the inverse partition of $\lambda$.
For the $k$-variable case, $s_\lambda(z_1,\ldots,z_k) = 0$ when $\ell(\lambda) > k$.

% References

\bibliographystyle{ytamsalpha}
\bibliography{ref}

\end{document}